\documentclass[11pt]{llncs}
\usepackage{amsfonts,amsmath,amssymb,latexsym,comment}
\ifx\pdftexversion\undefined
  \usepackage[colorlinks,linkcolor=black,filecolor=black,citecolor=black,urlcolor=black,pdfstartview=FitH]{hyperref}
\else
   \usepackage[colorlinks,linkcolor=blue,filecolor=blue,citecolor=blue,urlcolor=blue,pdfstartview=FitH]{hyperref}

\fi
\usepackage{graphicx}
\usepackage{xspace}
\usepackage{subfigure}
\usepackage{lscape}
\interdisplaylinepenalty=2500

%joe8: reset sizing
%\setlength{\marginparwidth}{40pt} \setlength{\marginparsep}{20pt}
%\setlength{\oddsidemargin}{-.3in} \setlength{\evensidemargin}{0in}
%\addtolength{\textwidth}{2.0in} \setlength{\topmargin}{-80pt}
\addtolength{\textwidth}{0.5in} \setlength{\topmargin}{-50pt}
%\addtolength{\textheight}{2.5in}
\addtolength{\textheight}{0.5in}

\newcommand{\ie}{\emph{i.e.,\ }}
\newcommand{\Ie}{\emph{I.e.,\ }}

\newcommand{\namedref}[2]{\hyperref[#2]{#1~\ref*{#2}}}
\newcommand{\sectionref}[1]{\namedref{Section}{#1}}

\newcommand{\theoremref}[1]{\namedref{Theorem}{#1}}

\newcommand{\figureref}[1]{\namedref{Figure}{#1}}

\newcommand{\lemmaref}[1]{\namedref{Lemma}{#1}}
\newcommand{\remarkref}[1]{\namedref{Remark}{#1}}
\newcommand{\definitionref}[1]{\namedref{Definition}{#1}}

\newcommand{\corollaryref}[1]{\namedref{Corollary}{#1}}

\newcommand{\hide}[1]{}
\newcommand{\hideForShortVersion}[1]{#1}
\newcommand{\hideForLongVersion}[1]{}

\newcommand{\dd}[1]{\textbf{\color{red}[[[#1]]]}}

\newcommand{\ezra}[1]{\textbf{\color{blue}
[[[EZRA: #1]]]}}
\renewcommand{\ezra}[1]{}

%\renewcommand{\ddtocheck}[1]{}

%\renewcommand{\alerttocheck}[1]{}

%%
%% include these lines for printing a draft (wide margins and double spaced)
%%
%\iffalse
%\setlength{\textwidth}{5.5in}
%\addtolength{\rightmargin}{5cm}
%\renewcommand{\baselinestretch}{1.25}
%\setlength{\marginparwidth}{1.50in}
%\fi
%\renewcommand{\baselinestretch}{0.9}

%%
%% The following commands define and indent theorem-like environments
%%
% indentation
\newcommand{\tb}{\makebox[0.4cm]{}}
\newcommand{\due}{\makebox[0.8cm]{}}
\newcommand{\tre}{\makebox[1.2cm]{}}

\newcommand\A{\mathcal{A}}
\newcommand\TS{\mathcal{TS}}
\newcommand\I{\mathcal{I}}

\newcommand\C{\mathcal{C}}
\newcommand\R{\mathcal{R}}
\newcommand\T{\mathcal{T}}
\newcommand\p{\mathcal{P}}
\newcommand\V{\mathcal{V}}
\newcommand\F{\mathcal{F}}

\newsavebox{\theorembox}
\newsavebox{\lemmabox}
\newsavebox{\conjecturebox}
\newsavebox{\claimbox}
\newsavebox{\factbox}
\newsavebox{\corollarybox}
\newsavebox{\definitionbox}
\newsavebox{\propositionbox}
\newsavebox{\examplebox}
\savebox{\theorembox}{\bf Theorem}

\savebox{\lemmabox}{\bf Lemma}

\savebox{\conjecturebox}{\bf Conjecture}

\savebox{\claimbox}{\bf Claim} \savebox{\factbox}{\bf Fact}

\savebox{\corollarybox}{\bf Corollary}
\savebox{\definitionbox}{\bf Definition}
\savebox{\propositionbox}{\bf Proposition}

\savebox{\examplebox}{\bf Example}
%\newtheorem{theorem}{\usebox{\theorembox}}
%\newtheorem{lemma}{\usebox{\lemmabox}}[section]

%\newcounter{corollary}[theorem]
%\newtheorem{corollary}[lemma]{\usebox{\corollarybox}}
%\newtheorem{proposition}[lemma]{\usebox{\propositionbox}}
%\newtheorem{example}{\usebox{\examplebox}}
%\newcounter{notes}[definition]
%\newtheorem{notation}[notes]{{\sc Notation}\rm }
\newtheorem{notation}{{\sc Notation}\rm }[section]

\newcommand{\ignore}[1]{}
\newcommand{\byzantine}[0]{\mbox{\emph{Byzantin}\hspace{-0.15em}\emph{e}}\xspace}

\newcommand{\clockAlg}{\textsc{Async-Clock}\xspace}
\newcommand{\enmasse}{\textsc{EnMasse}\xspace}

%\newcommand{\linkToFull}{http://www.cs.huji.ac.il/$\sim$ezraho/fullAsyncClock.pdf}

%\newcommand\cycle{\mbox{\slshape cycle}}

%*****
%\renewcommand{\baselinestretch}{0.91}

\def\squarebox#1{\hbox to #1{\hfill\vbox to #1{\vfill}}}
%\newcommand{\qed}{\hspace*{\fill}
%\vbox{\hrule\hbox{\vrule\squarebox{.667em}\vrule}\hrule}\smallskip}

%\newenvironment{proof}{\noindent{\bf Proof:~~}}{\(\qed\)}

%\newfont{\msym}{msbm10}

%\newcommand{\case}[1]{{\small \sc Case #1:}}

\newcommand{\countv}{\mbox{\textsl{count}}\xspace}
\newcommand{\pass}{\mbox{\textsl{pass}}\xspace}
\newcommand{\myval}{\mbox{\textsl{my\_val}}\xspace}
\newcommand{\modk}[2]{#1\oplus_k#2}
\newcommand{\linenumber}[1]{{\tt #1}}
\newcommand{\lineref}[1]{Line~\linenumber{#1}}
\usepackage{color}

\begin{document}

\title{A Fault-Resistant Asynchronous Clock Function}
\author{Ezra N. Hoch, Michael Ben-Or, Danny Dolev}

\institute{
The Hebrew University of Jerusalem\\
\texttt{\{ezraho,benor,dolev\}@cs.huji.ac.il}
}

\maketitle

\begin{abstract}

%\abstract{
Consider an asynchronous network in
a shared-memory environment consisting of $n$ nodes.
Assume that up to $f$ of the nodes might be
\byzantine ($n > 12f$), where the adversary is full-information and dynamic (sometimes called adaptive). In addition, the non-\byzantine nodes may undergo transient failures. Nodes advance in atomic steps, which consist of reading
all registers, performing some calculation and writing to all
registers.

The three main contributions of the paper are: first, the clock-function problem is defined, which is a generalization of the clock synchronization problem. This generalization encapsulates previous clock synchronization problem definitions while extending them to  the current paper's model. Second, a randomized asynchronous self-stabilizing \byzantine tolerant clock synchronization algorithm is presented.

In the construction of the clock synchronization algorithm, a building block that ensures different nodes advance at similar rates is developed. This feature is the third contribution of the paper. It is self-stabilizing and \byzantine tolerant and can be used as a building block for different algorithms that operate in an asynchronous self-stabilizing \byzantine model.

The convergence time of the presented algorithm is exponential. Observe that in the asynchronous setting the best known full-information dynamic \byzantine agreement also has
an expected exponential convergence time.

\hide{
In the synchronous setting, self-stabilizing \byzantine tolerant clock synchronization is equivalent to \byzantine agreement. In the asynchronous setting the best known full-information dynamic \byzantine agreement has
expected exponential convergence time. It is not known if the synchronous equivalence between clock synchronization and \byzantine agreement transfers to the asynchronous setting. While the clock synchronization algorithm presented in this work has expected exponential convergence time, it is possible that an improvement in its running time will lead
to a better convergence of asynchronous \byzantine agreement. That is, if asynchronous \byzantine agreement can be achieved using asynchronous self-stabilizing \byzantine clock synchronization, then improving the current work's
convergence will require usage of new methods (as it will imply an improvement to the best known asynchronous full-information dynamic \byzantine agreement). \dd{you didn't mention that we assume full information.\ezra{first pgph, line 2,3}  i wouldn't put the las sentence as is - i will tome it down saying that ours is also exponential. i will remove the exponential from the end of the 2nd par}\ezra{is it better?}}
%}
\end{abstract}

\hide{
%\vspace{3cm}
\vspace{3mm}
\hrule
\hrule
\vspace{3mm}
\noindent
Keywords: Asynchronous, Byzantine, Clock synchronization, Randomized algorithm, Self-stabilization,
\mbox{\tb \  \tre}Shared memory.\\
\\
%\vspace{1mm}
%\noindent
Contact author: Ezra Hoch\\
Email: ezra.hoch@gmail.com\\
Tel: +972-2-658-5770;
Fax: +972 2 658 5727\\
\\
Postal Address: The Selim and Rachel Benin\\
School of Computer Science and Engineering,\\
The Hebrew University of Jerusalem,\\
Ross Building,\\
Givat Ram Campus,\\
91904 Jerusalem, Israel\\
\\
Submitted as a regular paper.
\hrule
\hrule

%\pagenumbering{arabic}
%\thispagestyle{empty}

%\setcounter{page}{1}
\newpage

} %hide

\section{Introduction}
When tackling problems in distributed systems, there are many previously developed building blocks that assist in solving the problem. Some of these building blocks allow one to design a solution under ``easy'' assumptions, then automatically transform them to a more realistic environment. For example,  it is easier to construct an algorithm in the synchronous model, then add an underlying synchronizer (see \cite{4227}) to adapt the solution to an asynchronous model. Similarly, developing a self-stabilizing algorithm can be challenging; instead, one can develop a non-self-stabilizing algorithm, and use a stabilizer (\cite{Stabilizer}) to  address transient errors.

Among the different models of distributed systems, specific models received more attention than others; and therefore the availability and versatility of building blocks differ from one model to another. For example, the synchronous no-failures model can automatically be extended in many different directions:\ asynchronous no-failures, synchronous self-stabilizing, asynchronous self-stabilizing, etc. Zooming-in to the world of self-stabilizing, there are various model-convertors: between shared-memory and message-passing, from an id-based to uniform system, etc. (see \cite{DolevSSBook}).

However, when moving away from the commonly researched models, the availability of such model-converters diminishes. In the current paper we are interested in an asynchronous network with \byzantine nodes and transient failures. That is, we aim at solving a problem in a way that is \byzantine tolerant, self-stabilizing and operates in an asynchronous network. The \byzantine adversary is assumed to be full-information and dynamic (sometimes called adaptive). There are few previous works that operate in similar models \cite{831130,DBLP:conf/opodis/SakuraiOM04,DBLP:conf/sss/MasuzawaT06}.
In these works non-faulty neighbors of \byzantine nodes may reach undesired states.
However, as far as we know, this is the first work operating in such a setting in which non-\byzantine
nodes reach their desired state even if they have \byzantine neighbors.

The problem we solve in the current work is clock-synchronization. Our solution assumes two simplifying  assumptions: a) a ``centralized daemon", \ie each node can run the entire algorithm as an atomic step; b) an ``en masse scheduler'' that adheres to the following: if $p$ gets scheduled twice, then $n-2f$ other non-faulty nodes get scheduled in between (formally defined in \definitionref{defenmass}). Under these assumptions we define and solve the clock synchronization problem in an asynchronous network while tolerating both \byzantine and transient faults. The solution is a randomized
algorithm with expected convergence time of $O(3^{n-2f})$.

Both assumptions can be seen as ``building blocks that do not yet exist''. When constructing a self-stabilizing asynchronous algorithm (without \byzantine nodes), it is reasonable to assume a centralized daemon due to the mutual exclusion algorithm of Dijkstra (see \cite{DijkstraSS}). Thus, once an equivalent algorithm can be devised for this paper's model, the first assumption can be removed.
In \sectionref{sec:enmasse} we provide an algorithm that implements the second assumption, thus allowing its usage without reducing the generality of a solution that uses it.

Due to our dependence on the first assumption, we consider this work as a step towards a full solution of the clock synchronization in an asynchronous network that is self-stabilizing and \byzantine tolerant. We hope it leads to further research of this model, one which will produce an equivalent of Dijkstra's algorithm operating in the current work's model.

\subsubsection*{Related Work}
Being able to introduce consistent ``time'' in a distributed system is an
important task, however difficult it may be in some models. For
many distributed tasks the crux of the problem is to synchronize
the operations of the different nodes. One method of doing this is
using some sort of ``time-awareness'' at each node, ensuring that
different ``clocks'' advance in a relatively synchronized manner. Therefore,
it is interesting to devise such algorithms that are highly robust.

In the past, various models were considered. Ranging from
synchronous systems (see \cite{PulseSync-DISC07,DolWelSSBYZCS04,DigiClock-SSS06}), in which all nodes receive a common signal
simultaneously at regular intervals; through bounded-delay
systems (see \cite{PulseSync-SSS07,1759107}), in which only a bound on the message delivery time is
given; to completely asynchronous systems (see
\cite{Unison,HermanPhaseClocks}), in which only an
eventual (but not bounded) delivery of messages is assumed.
Independent of the timing model, different fault tolerance
assumptions are considered: the self-stabilizing fault paradigm,
in which all nodes follow their protocol but may start with
arbitrary values of their variables and program counter (see
\cite{DolevSSBook}). Another commonly assumed faults are the
Fail-stop faults, in which some of the nodes may crash and cease
to participate in the protocol. Lastly, \byzantine faults are
considered to be the most severe fault model, as they assume that
the faulty nodes can behave arbitrarily and even collude in trying
to keep the system from reaching its designated goals (see
\cite{983102,LynchBook}).

``Knowing what time it is'' acquires different flavors in
different models. In systems without any faults, it is usually
assumed that each node has a physical clock, and these clocks
differ from node to node. The main issue is to synchronize the
different clocks as close as possible. In a synchronous,
self-stabilizing and \byzantine tolerant model, this problem was
termed ``digital clock synchronization'', and consisted of having
all nodes agree on some bounded integer and increase it every
round (see \cite{ADG91,PulseSync-DISC07,DolWelSSBYZCS04,DigiClock-SSS06}).

The traditional concept of ``clock synchronization'' does not hold in an asynchronous environment.
Therefore,
previous work has defined ``phase clocks'' or ``unison'' (see
\cite{Unison,HermanPhaseClocks}), which states that
each node has an integer valued clock, and neighboring nodes
should be at most $\pm 1$ from each other. It is shown (for example,
see \cite{HermanPhaseClocks}) how such ``synchronization'' is
sufficient in solving many problems.

Most previous works in the asynchronous model considered
self-stabilizing or \byzantine faults, but not both. In the
current work, we consider both  fault models. However, defining what ``telling the time''
means in an asynchronous, self-stabilizing and \byzantine tolerant
manner is a bit tricky. To address that, a new notion of ``knowing what
time it is'' is introduced: {\em a clock function}.

All previous clock-synchronization (or
phase-clock, or unison, etc.) algorithms can be viewed in the following way:
each time a non-faulty node is running, it executes some piece of code
(``function'') that returns a value (``the clock value'') and
there are constraints on the range of different non-faulty nodes' values.
In the synchronous digital clock synchronization problem, the function returns
an integer value, and we require that all nodes executing the
function at the same round receive exactly the same value and a node executing the
function in consecutive rounds receives consecutive values. In an
asynchronous network (\ie in \cite{HermanPhaseClocks}), different
nodes may execute their clock-functions at different times and at
different rates. The constraint on the returned values can be
described as follows: given a configuration of the system, if $p$
would execute its clock-function and receive a value $v$, then any
neighbor of $p$ that would execute its clock-function at the
same configuration, would receive a value that differs by at most $\pm 1$ from $v$.

In the current work it is assumed that the network is fully
connected, which means that every node is connected to every other
node. Therefore, the constraint of the clock-function is
simplified, informally requiring any two non-faulty nodes that
execute the clock-function to receive values that are at most one
apart.

In synchronous networks the problem of self-stabilizing \byzantine tolerant clock synchronization is equivalent to the problem of \byzantine agreement, in the sense that any solution to the self-stabilizing \byzantine tolerant clock synchronization problem is also a solution to the (non-self-stabilizing) \byzantine agreement problem.
In asynchronous networks the best known full-information dynamic \byzantine agreement has
expected exponential convergence time (see \cite{806707}). While the synchronous equivalence between clock synchronization and \byzantine agreement does not transfer to the asynchronous setting (as it strongly uses the fact that all nodes agree on the exact same clock value), it raises the possibility that improving the result of this paper will
require usage of new techniques. That is,
it is not known yet if the self-stabilizing clock synchronization of the current work can be used to solve \byzantine agreement. However, if it can be used, then improving the exponential convergence of the current work would lead to an improvement of the best known asynchronous \byzantine agreement against a dynamic full-information adversary.

\subsubsection*{Contribution}
Our contribution is three-fold. First, we define the clock-function problem, which is a
generalization of the clock synchronization problem. This definition provides a meaningful
extension of the clock synchronization problem to the asynchronous self-stabilizing
\byzantine tolerant model.

Second, we provide an algorithm that solves the clock-function
problem in the above model. Using shared memory, it has an expected $O(3^{n-2f})$
convergence time,
independent of the wraparound value of the clock. Notice that for synchronous networks,
the first two \mbox{contributions} were already presented in \cite{DolWelSSBYZCS04}. Our contribution is
with respect to asynchronous networks.

Lastly, in \sectionref{sec:enmasse} we construct a building block that bounds the relative
rates at which different non-faulty nodes progress with respect to other non-faulty nodes.
More specifically, between any two atomic steps of a non-faulty node $p$, there are
guaranteed to be atomic steps of $n-2f$ other non-faulty nodes. (See the ``en masse scheduler'' assumption
described in the introduction).
We postulate that this building block can be used in other asynchronous self-stabilizing \byzantine tolerant
settings.

\subsubsection*{Overview}
We start by defining the model (see \sectionref{sec:model}). A subset of all possible
runs is defined and denoted \emph{``en masse''} (see \definitionref{defenmass}).
\sectionref{sec:problemdef} discusses
different aspects of defining clock synchronization in an asynchronous, self-stabilizing, \byzantine tolerant
environment; and defines a clock function, which is a generalization of the clock synchronization problem.

\sectionref{sec:proposed_algorithm} introduces \clockAlg, an algorithm that solves the problem
at hand. \sectionref{sec:proofMain} contains the correctness proof for \clockAlg.
Both \sectionref{sec:proposed_algorithm} and \sectionref{sec:proofMain} are correct only for
en masse runs, for which \clockAlg requires fault redundancy of $n > 6f$.

In \sectionref{sec:enmasse} the algorithm \enmasse is presented,
which transforms any run into an
en masse run. Leading to the correctness of \clockAlg for any run. However, the transformation
done by \enmasse increases the fault redundancy of \clockAlg to $n > 12f$.
Lastly, \sectionref{sec:desc} concludes with a discussion of the results.

\section{Distributed Model}\label{sec:model}
The system is composed of a set of $n$ nodes denoted by $\p$. Every
pair of nodes $p,q \in \p$ communicates via shared memory (\ie a fully-connected communication graph), in an
asynchronous manner. That is, $p$ and $q$ share two registers:
$R_{p,q}, R_{q,p}$.\footnote{Pair-wise communication is used to allow \byzantine
nodes to present different values to different nodes; as opposed to assuming a
single register per-node that can be read by all other nodes.}
Register $R_{p,q}$ is written by $p$ and read
by $q$.\footnote{For simpler presentation we assume that $p$
writes and reads $R_{p,p}$.} A configuration $\C$ describes the
global state of the system and consists of the states of each
node and the state of each register. A run of the system is an
infinite sequence of configurations $\C_0 \rightarrow \C_{1}
\rightarrow \cdots \rightarrow \C_r \rightarrow \cdots$, such that
the configuration $\C_{r+1}$ is reachable from configuration
$\C_r$ by a single node's atomic step. In the context of the
current paper, an atomic step consists of reading all registers, performing
some calculation and then writing to all registers.

The system is assumed to start from an arbitrary
initial configuration $\C_0$. We show that eventually - in the
presence of continuous \byzantine behavior - the system becomes
synchronized.

In addition to transient faults, up to $f$ of the nodes may be
\byzantine. The \byzantine adversary has full
information, \ie it can read the values in every node's memory\footnote{Actually, the presented algorithm stores all its state in the shared registers.}
and in the shared registers between any two nodes. There are no private
channels and the adversary is computationally unbounded.
Moreover, the adversary
is dynamic, which means it may choose to ``capture'' a non-faulty
node at any stage of the algorithm. However, once the adversary
has ``captured'' $f$ nodes in some run, it cannot affect other
nodes and in a sense becomes static. The results of this paper can
be extended to the setting in which the adversary continues to be
dynamic throughout the run, as long as the adversary is limited by
the rate at which it can release and capture
non-faulty nodes. We do not present this extension in the current
paper for the sake of clarity. However, one can easily be
convinced that it applies, once the main points of the work are
explained.

The adversary also has full control of the scheduling of atomic steps and can use its
full information knowledge in this scheduling. However, for a clock synchronization algorithm
to be meaningful, runs in which some of the non-faulty nodes never get
to perform atomic steps should be excluded. Thus, throughout the paper only fair runs are considered:
\begin{definition}
  A run is {\bf fair} if every non-faulty node performs infinitely
  many atomic steps.
\end{definition}

A subset of all fair runs is defined:
\begin{definition}\label{defenmass}
A run $\mathcal{T}$ is {\bf en masse with respect} to node $p$ if for any 2 atomic steps $p$ performs during $\mathcal{T}$ (say at configurations $\mathcal{C}$ and $\mathcal{C}'$, respectively) there are at least $n-2f$ non-faulty nodes that perform atomic steps between $\mathcal{C}$ and $\mathcal{C}'$.

A run $\mathcal{T}$ is {\bf en masse} if it is fair and it is en masse with respect to all non-faulty nodes.
\end{definition}
As stated in the overview, en masse runs are needed for \clockAlg to operate correctly.
Assuming all runs are en masse runs, the fault tolerance redundancy required is $n > 6 f$\hideForShortVersion{ (see \lemmaref{lemmaref:twovaluesaretight} for an example of the necessity of $n > 6 f$)}. However,
in \sectionref{sec:enmasse} we show how to remove the requirement of en masse runs, at
the cost of increasing the fault tolerance redundancy to $n > 12 f$.

\section{Problem Definition}\label{sec:problemdef}
Before formally defining the problem at hand, consider the properties
a distributed clock synchronization algorithm should have in an asynchronous setting:
\begin{enumerate}
  \item {\it (clock-value)} a means of locally computing the current clock value at any non-faulty node;
  \item {\it (agreement)} if different non-faulty nodes compute the clock value close (in time) to each other, they should obtain similar values;
  \item {\it (liveness)} if non-faulty nodes continuously recompute clock values, then they should obtain increasing values.
\end{enumerate}
For example, in a synchronous network, the clock synchronization problem is usually formulated as: {\it (clock-value)} each node $p$ has a bounded integer counter $Clock_p$; {\it (agreement)} for any two non-faulty nodes $p,q$ it holds that $Clock_p = Clock_q$; {\it (liveness)} if $Clock_p=z$ at round $r$ then $Clock_p=z+1$ at round $r+1$. Since the clock is bounded, the previous sentence is slightly modified: ``if $Clock_p=z$ at round $r$, then $Clock_p=z+1 (\!\!\!\!\mod k)$ at round $r+1$''; where $k$ represents the wrap-around value of the
clock.

\begin{notation}
  Denote by $\modk{a}{b}$ the value $(a + b \mod k)$.
\end{notation}

In an asynchronous setting, it is impossible to ensure that all nodes update their clocks simultaneously. Thus, the ``agreement'' property requires a relaxed version as opposed to the synchronous setting's stricter version. In addition, the ``liveness'' property is somewhat tricky to define, due to the \byzantine presence. To illustrate the difficulty, consider a set of $f$ \byzantine nodes that ``behave as if'' they were non-faulty, and they repeatedly recompute the clock value. According to the definition above, the clock value will increase continuously, even though non-faulty nodes did not perform a single step. Therefore, such a clock synchronization algorithm is useless, as the \byzantine nodes can make it reach any clock value; in other words, the \byzantine nodes ``control'' the clock value.

It is not immediately clear how these ``benign'' \byzantine nodes can be differentiated from the non-faulty nodes. The following definitions address such difficulties, and present a formalization of the clock synchronization problem in this paper's model.

\hide{In the following
we assume $k$ to be large enough (\ie $k \geq 100$)\footnote{The
actual number required by the proofs is 6 or 7, but this point is
of little interest.}. In \sectionref{sec:fixk} we show how to remove
this assumption.\ezra{say something about $k$ being the wraparound}
}

\begin{definition}
  A value $v'$ is at most $d$ {\bf ahead of} $v$ if there exists
  $j,$ $0 \leq j \leq d$, such that $\modk{v}{j} = v'$.
    Denote ``$v'$ is at most $d$ ahead of $v$'' by $v \preceq_d v'$.
\end{definition}

\definitionref{def:clock-value} addresses the ``clock-value'' property:
\begin{definition}\label{def:clock-value}
  A {\bf clock-function} $\F$ is an algorithm that when executed during
  an atomic step returns a value in the range $\{0,...,k-1\}$.
  Denote by $\F_p(\C)$ the value returned when $p$ executes $\F$
  during an atomic step at configuration $\C$.
  \end{definition}

Consider the ``agreement'' property: it requires that different non-faulty nodes that compute the clock value simultaneously, receive similar values. What does ``simultaneously'' mean in an asynchronous setting? It can be captured by requirements on the clock values computed in different runs. In addition, the interference caused by \byzantine nodes in different runs needs to be captured.

Informally, ``agreement'' requires the following from $\F$: given a configuration $\C$, no matter what the
adversary does, if different non-faulty nodes execute $\F$ they receive values that are close to each other.
 \definitionref{def:adversarial-move}, \definitionref{def:welldefined} and \definitionref{def:runwelldefined} formally state the ``agreement'' requirement.
First, ``no matter what the adversary does'' is formally  defined:
\begin{definition}\label{def:adversarial-move}
  An {\bf adversarial move} from a configuration $\C$ is any
  configuration reachable by an arbitrary sequence of atomic steps of faulty nodes
  only.
\end{definition}

Second, ``different non-faulty nodes execute $\F$'' is divided into two cases. Let $p,q$ be non-faulty nodes. The first case considers the computed value of $p$ (when calculating $\F$ on $\C$) as opposed to the computed value of $q$ (see \definitionref{def:welldefined}). The second case considers the computed value of $q$ after $p$ has computed its value (see \definitionref{def:runwelldefined}). Both cases require the computed values of $p$ and $q$ to be close to each other.
\begin{definition} \label{def:welldefined}
  A configuration $\C$ is $\ell$-{\bf well-defined} (with respect to some clock-function $\F$) if there is a value $v$ s.t.
  for any non-faulty node $p$ and every adversarial move $\C'$ from $\C$ it holds that $v \preceq_{\ell} \F_p(\C')$. $v$ is called ``a defined value'' at $\C$. (There may be more than one such $v$).
  \end{definition}

Informally, \definitionref{def:welldefined} says that $\C$ is
$\ell$-well-defined if there is an intrinsic value $v$ such that any adversarial move cannot increase the clock-value by more than $\ell$. Thus, any two non-faulty nodes $p, q$ (in different
run extensions from $\C$) that execute $\F$ on $\C$ (no matter what the adversary
has done) will receive values in the range $\{v,\dots,v+\ell\}$; \ie $p$ and $q$'s values are at most $\ell$ apart.

Suppose $\C_0$ is $\ell$-well-defined with value $v$, and that a
non-faulty node $p$ performs an atomic step at $\C_0$ resulting in $\C_1$
and then a non-faulty node $q$ performs an atomic step at $\C_1$.
\definitionref{def:welldefined} does not imply any
constraint on the value of $\F_p(\C_0)$ with respect to
$\F_q(\C_1)$, therefore the following definition is required:

\begin{definition}\label{def:runwelldefined}
  A run is $\ell$-{\bf well-defined} (w.r.t. a clock-function
  $\F$) if: a) every configuration $\C$ in the run is
  $\ell$-well-defined; b) for two consecutive configurations $\C,
  \C'$, if $v$ is a defined value of $\C$ and $v'$ is a defined value of $\C'$ then $v \preceq_\ell v'$.
\end{definition}

\definitionref{def:runwelldefined} states that the values of a clock-function
$\F$ on consecutive configurations cannot be arbitrary. That is,
they must be at most $\ell$ apart from the previous configuration.
However, there is no requirement that they actually increase; \ie ``liveness'' is not captured by the previous definitions.

\begin{definition}\label{def:clock-sync}
  A run is $\ell$-{\bf clock-synchronized} (w.r.t. some clock-function $\F$), if
  it is $\ell$-well-defined (w.r.t. $\F$)
  and the defined values of consecutive configurations change
  infinitely many times. (\Ie for infinitely many consecutive configurations $\C, \C'$ the defined values of $\C$  differ from the defined values of $\C'$).
\end{definition}

Notice that \definitionref{def:runwelldefined} already requires that defined values of consecutive configurations are non-decreasing (assuming that $\ell$ is sufficiently small with respect to $k$). Thus, combined with \definitionref{def:clock-sync}, it implies that in an $\ell$-clock-synchronized run, infinitely many
configurations are configurations with increasing defined values (informally, ``increasing'' means that one defined value is achieved by adding less than $\frac{k}{2}$ to a previous defined value).

\begin{remark}
  \definitionref{def:runwelldefined} and \definitionref{def:clock-sync} impose requirements on the \emph{defined} values of consecutive configurations. However, a specific node $p$ might compute a clock value that is decreasing between consecutive configuration. \ie $p$'s clock might ``go backward''. For example, let $\C, \C'$ be two consecutive configurations, and let the defined value of
  both configurations be $v$. It is possible that $p$ will compute the clock value to be $v+1$ for $\C$, while computing the clock value to be $v$ for $\C'$.

  However, this possibility is immanent to an asynchronous
  \byzantine tolerant clock synchronization that has a wraparound value $k$. Consider a setting in which all nodes but one are advanced in a synchronous manner, while a single node $p$ performs atomic steps only once every $k-1$ rounds. In such a setting, $p$ should update its clock value to be slightly below its previous value (alternatively, it can be seen as increasing the value by $k-1$).
\end{remark}

\begin{definition}
  An algorithm $\A$ solves the $\ell$-{\bf clock-synchronization problem} if
  there is a clock-function $\F$ s.t.
  any fair run starting from any arbitrary configuration has a suffix that is $\ell$-clock-synchronized with respect to $\F$.
\end{definition}

An ideal protocol would solve the $0$-clock-synchronization
problem. However, due to the asynchronous nature of the discussed model, the best that can be expected is to solve the $1$-clock-synchronization problem.
We aim at solving the $\ell$-clock-synchronization problem for as many values of
$\ell \geq 1$ as possible. Clearly, if $\A$ solves the
$\ell_1$-clock-synchronization problem, then $\A$ also solves the
$\ell_2$-clock-synchronization problem for any $\frac{k}{2}
> \ell_2 \geq \ell_1$.

Therefore, the rest the paper concentrates on solving the $5$-clock-synchronization problem; thus, solving the $\ell$-clock-synchronization problem for all $\frac{k}{2} > \ell \geq 5$ . In
\sectionref{sec:to1clock} we show how to use any $\frac{k-1}{2}$-clock-synchronization problem to solve the $1$-clock-synchronization problem, thus solving the
$\ell$-clock-synchronization problem for all $\frac{k}{2} > \ell \geq 1$.

\begin{figure*}[t!]\center
\begin{minipage}{4.3in}
\hrule \hrule \vspace{1.7mm} \footnotesize
%\begin{alltt}
\setlength{\baselineskip}{3.9mm} \noindent Algorithm \clockAlg \hfill\textit{/* executed on
node $q$
*/}
 \vspace{1mm} \hrule \hrule
\vspace{1mm}

\linenumber{01:} {\bf do} forever:\\
\\
\makebox[0.93cm]{} \textit{/* read all registers */}\\
\linenumber{02:} \tb {\bf for} $i := 1$ to $n$  \\
\linenumber{03:} \due {\bf set} $val_i:=$ {\bf read} $R_{p_i,q} \mod k$; \\
\\
\makebox[0.93cm]{} \textit{/* some internal definitions */}\\
\linenumber{04:} \tb {\bf let} $\#v$ denote the number of times $v$ appears in $\{val_i\}_{i=1}^n$;  \\
\linenumber{05:} \tb {\bf let} $\countv(v,l)$ denote $\sum_{j=0}^l\#(\modk{v}{j})$;  \\
\linenumber{06:} \tb {\bf let} $\pass(l,a)$ denote $\{v | \countv(v,l) \geq a\}$;\\
%i changed threshold to a - since threshold sounds like a pre-assigned threshold value.
\\
\makebox[0.93cm]{} \textit{/* update \myval */}\\
\linenumber{07:} \tb {\bf if} $\pass(0,n-f) \neq \emptyset$ then  \\
\linenumber{08:} \due {\bf set} $\myval_q := \modk{1}{\max \{\pass(0,n-f)\}}$;  \\
\linenumber{09:} \tb {\bf else if} $\pass(1,n-f) \neq \emptyset$ then  \\
\linenumber{10:} \due {\bf set} $\myval_q := \modk{1}{\max \{\pass(1,n-f)\}}$;  \\
\linenumber{11:} \tb {\bf else if} $\pass(1,n-2f) \neq \emptyset$ then  \\
%\linenumber{12:} \due {\bf if} $\myval_q \neq \max \{ \pass(1,n-2f) \} +1 \mod k$ then  \\
%\linenumber{13:} \tre {\bf let} $low \notin \pass(1,n-2f)$ be such that $low+1 \mod k \in \pass(1,n-2f)$;  \\
%\linenumber{14:} \tre {\bf let} $dist = min \{l \geq 0 | \countv(low,l) \geq n-2f \}$;  \\
%\linenumber{15:} \tre {\bf set} $\myval_q := low + dist \mod k$;  \\
%\linenumber{16:} \due {\bf fi};  \\
\linenumber{12:} \due {\bf let} $low \notin \pass(1,n-2f)$ be such that $\modk{low}{1} \in \pass(1,n-2f)$;  \\
\linenumber{13:} \due {\bf let} $relative\_median = min \{l  | \l \geq 0\,\&\,\countv(low,l) > \frac{n}{2} \}$;  \\
%\linenumber{14:} \due {\bf let} $end = min \{l \geq 0 | low+l \mod k\notin \pass(1,n-2f) \}$;  \\
%\linenumber{15:} \due {\bf if} $\countv(low + end -1 \mod k, 0)
%\geq n-2f$ then $end := end -1 \mod k$;\\
\linenumber{14:} \due {\bf set} $\myval_q := \modk{low}{relative\_median}$;  \\
% $\myval_q \neq low+end \mod k$ then
\linenumber{15:} \tb {\bf else} {\bf set} $\myval_q := $ randomly select a value from $\pass(1,n-3f) \bigcup \{0\}$;\\
\\
\makebox[0.93cm]{} \textit{/* write \myval to registers */}\\
\linenumber{16:} \tb {\bf for} $i := 1$ to $n$  \\
\linenumber{17:} \due {\bf write} $\myval_q$ into $R_{q,p_i}$; \\
\linenumber{18:} {\bf od};

\normalsize \vspace{1mm} \hrule\hrule
\end{minipage}
\begin{minipage}{5.2in}
 \caption{A self-stabilizing \byzantine tolerant algorithm  solving the $5$-Clock-Synchronization problem.}\label{figure:syncAlg}
 \end{minipage}
\end{figure*}

\section{Solving the 5-Clock-Synchronization Problem}\label{sec:proposed_algorithm}
An atomic step consists of reading all registers,
performing some calculations and writing to all registers. Thus,
an atomic step consists of executing once an entire ``loop'' of
\clockAlg (see \figureref{figure:syncAlg}).

Each non-faulty node $p$ has a bounded integer variable,
$\myval_p$, which represents the current clock value of $p$.
%Each
%atomic step, a non-faulty node ensures that its value of $\myval$
%is written to all of its registers. Thus,
When $p$ performs an
atomic step, it reads all of its registers, thus getting an
impression of the clock values of the other nodes. It then
computes its own new clock value (which is saved in
$\myval_p$) and writes $\myval_p$ to all registers.

\clockAlg operates in a similar fashion to many other \byzantine tolerant algorithms.
It first gathers information regarding the clock value of the other nodes in the system.
Then it uses various thresholds to decide on the clock value for the next step.
If no threshold works (\ie no clear majority is found), it chooses a random value from
a small set of options.

To ensure all values read during \lineref{02-03} are in the range $\left[0, \dots, k-1\right]$, the algorithm
applies ``$\mod k$'' to the values read. This is a standard way of dealing with uninitialized values.
%\dd{should we add why we use mode k in line 3 (in case that the initial value is not in the range)?}

The crux of \clockAlg is in the exact thresholds and their application (Lines \linenumber{07}-\linenumber{15}).
In these lines, node $p$ considers different possibilities. Either
it sees a decisive majority towards some clock value (\lineref{07}
and \lineref{09}) in which case $p$ updates its local clock value
to coincide with the majority clock value it has seen.
Alternatively, if no clear majority exists (\lineref{15}),
 $p$ randomly selects a
new clock value. The interesting case is when $p$ sees a
``partial'' majority (\lineref{11}), in which case $p$ takes the relative median
of the clock values it has seen. We call this a ``relative
median'' since the clock values are ``$\!\!\!\mod k$'' and thus the
median in not well defined.

The full \clockAlg algorithm appears in
\figureref{figure:syncAlg}. \clockAlg solves the $\ell$-Clock-Synchronization problem for $\ell=5$; combined with the discussion at the end of \sectionref{sec:problemdef}, it shows how to solve the $\ell$-Clock-Synchronization problem for any $\frac{k}{2} > \ell \geq~5$.

\hideForShortVersion {
\begin{notation}
The value of any variable $var$ at configuration $\C_r$  is denoted by
$\C_r(var)$. For a node $q$ that does not perform the atomic step that changes
$\C_r$ to $\C_{r+1}$,  the value of $\myval_q$, denoted $\C_r(\myval_q)$, is the same before and after the atomic step.

For node $p$ that performs the atomic step at configuration $\C_r$, $\myval$ is the only variable that is not deterministically
determined by the values of the registers at the beginning of $p$'s atomic
 step. In such a case, for $\myval$, the notation $\C_r(\myval_p)$ denotes
  the value of $\myval$ before $p$ starts its atomic step, and
  $\C_{r+1}(\myval_p)$ denotes the value of $\myval$ after $p$
  finishes its atomic step.
  For all other variables,
  $\C_r(var)$ will denote the value of variable
 $var$, as computed for configuration $\C_r$;
  \ie $\C_r(\pass_p(1, n-2f))$ denotes the value
  that $p$ computes for $\pass(1, n-2f)$ during its atomic step
  at $\C_r$.
\end{notation}

} % hideForShortVersion

\section{Correctness Proof}\label{sec:proofMain}
In the following discussion we
consider the system only after all transient faults ended and each non-faulty node has taken
at least one atomic step.
We consider only runs of the system that begin after that initial sequence of atomic steps.

Informally, a round is a portion of a run such that each
node that is non-faulty throughout the round performs an atomic
step at least once.
The first round (of a run $\T$) is the
minimal prefix $\R$ of the run $\T$ such that each node that is
non-faulty throughout $\R$ performs an atomic step at least once.
Consider the suffix $\T'$ of $\T$ after the first round was
removed. The second round of $\T$ is the first round of $\T'$;
the definition continues so recursively.

Consider any fair run of the system $\C_0 \rightarrow \C_{1}
\rightarrow \cdots \rightarrow \C_r \rightarrow \cdots$,
and consider the transition from configuration $\C_r$ to configuration
$\C_{r+1}$, due to some (possibly faulty) node $p$'s atomic step. Since we
consider  only runs after each non-faulty node $q$ has taken
at least one atomic step past the end of the transient faults events, the value of
$\myval_q$ reflects the latest value written to all of $q$'s
write-registers. This property is true for all configurations that we consider. Thus,
regarding a non-faulty $p$ that performs an atomic step,  for all non-faulty $q$ it holds that $R_{q,p} =
\myval_q$.

\hideForLongVersion{
Due to space limitations, only an overview of the proof is given. The full proof can be found at
\cite{FullAsyncClock}.
} % hide

The proof outline is as follows. First, define a tight configuration:

\hideForLongVersion {
\begin{definition}
$H(\C_r, v, d)$ is the set containing every non-faulty node $q,$ such that $\myval_q$ (in~$\C_r$) is at most $d$ ahead of $v$.

  A configuration $\C_r$ is {\bf tight} around value $v$ if $|H(\C_r, v, 1)| \geq
  n-2f$; a configuration is tight if it is tight around some
  value.
\end{definition}
} % hideForLongVersion

Second, we show that if a configuration $\C_r$ is tight then so is $\C_{r+1}$.
Third, if $\C_r$ is not tight, then we show that with probability $\frac{1}{3^{n-2f}}$
some configuration within 2 rounds from $\C_r$ will be tight. Concluding that after an expected $O(3^{n-2f})$
rounds the system reaches a tight configuration; and all following configurations are tight as well.
At this stage, we need to show that the value $v$ that a configuration is tight around continuously increases.

To do so, we show that given that all configurations are tight, different non-faulty nodes that perform atomic steps can have values from a set containing (at most) 3 consecutive values. Moreover, for consecutive configurations, the minimal
value among these 3 values can increase by at most 3. Lastly, by closely analyzing the behavior of \clockAlg, we conclude that within 4 rounds the minimal value above increases. That is, the clock function value changes, and  changes again within at most 4 rounds, \ie the clock value changes infinitely many times.

The reason behind the increase of the aforementioned minimal value lies in the following claim: one of two things
can happen, either the minimal value increases, or all the non-faulty nodes' clock values become at most 1 apart. In the second scenario, after one round, the minimal value will increase. Concluding that
the clock value changes infinitely many times, as required.

\begin{remark}
The en masse property is used in the proof that if $\C_r$ is not tight, then with probability $\frac{1}{3^{n-2f}}$ a configuration within 2 rounds from $\C_r$ will be tight. Since in an en masse run some set of $n-2f$ different non-faulty nodes are required to take atomic steps in a consecutive manner. Together with a claim stating that
each such step has probability of $\frac{1}{3}$ to flip a coin ``in the right direction'', we get that with probability $\frac{1}{3^{n-2f}}$ a tight configuration is reached.
\end{remark}

\hideForShortVersion{

\begin{lemma}\label{lemma:aheadof}
  If $v_1 \preceq_d v'$ and $v_2 \preceq_d v'$ then either $v_2 \preceq_d v_1$ or $v_1 \preceq_d
  v_2$.
\end{lemma}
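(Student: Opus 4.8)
The plan is to unwind the definition of $\preceq_d$ in the two hypotheses, collapse everything into a single congruence modulo $k$, and then finish by a case split on a sign.

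Concretely, $v_1 \preceq_d v'$ provides an integer $j_1$ with $0 \le j_1 \le d$ and $\modk{v_1}{j_1} = v'$, and $v_2 \preceq_d v'$ provides an integer $j_2$ with $0 \le j_2 \le d$ and $\modk{v_2}{j_2} = v'$. Equating the two expressions for $v'$ gives $v_1 + j_1 \equiv v_2 + j_2 \pmod{k}$, hence $v_1 - v_2 \equiv j_2 - j_1 \pmod{k}$. Since $0 \le j_1, j_2 \le d$, the integer $j_2 - j_1$ lies in $[-d, d]$.

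Now split on its sign. If $j_2 - j_1 \ge 0$, set $j := j_2 - j_1$; then $0 \le j \le d$ and $\modk{v_2}{j} = v_1$, i.e.\ $v_2 \preceq_d v_1$. Otherwise $j_2 - j_1 < 0$, and setting $j := j_1 - j_2$ gives $1 \le j \le d$ and $\modk{v_1}{j} = v_2$, i.e.\ $v_1 \preceq_d v_2$. Either way the conclusion holds. There is no real obstacle here; the only point to keep in mind is that the witnesses $j_1, j_2$ are genuine integers in $[0,d]$ rather than mere residues, so that their difference stays in $[-d,d]$ and the appropriate half of that interval lands back in $[0,d]$ — which is precisely what the definition of $\preceq_d$ supplies.
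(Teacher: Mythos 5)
Your proof is correct and follows essentially the same route as the paper's: extract the two witnesses $j_1, j_2 \in [0,d]$, combine them into a single congruence, and case-split on the sign of their difference. The only cosmetic difference is that you phrase the intermediate step as a congruence $v_1 - v_2 \equiv j_2 - j_1 \pmod{k}$ while the paper writes it directly as $v_2 = \modk{v_1}{(j_1 - j_2)}$, but the content is identical.
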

\begin{proof}
  By definition, there are $j_1,j_2$ ($0 \leq j_1,j_2 \leq d$) such that
  $v'=\modk{v_1}{j_1}$ and $v'=\modk{v_2}{j_2}$. Thus,
  $\modk{v_2}{j_2}=\modk{v_1}{j_1}$, which means that $v_2 =
  \modk{v_1}{(j_1-j_2)}$. Clearly, $|j_1-j_2| \leq d$. If $j_1-j_2 \geq 0$ then
  $v_2$ is at most $j_1-j_2 \leq d$ ahead of $v_1$. Otherwise, $j_2-j_1 >
  0$, meaning that $v_1 = \modk{v_2}{(j_2-j_1)}$. That is, $v_1$ is
  at most $j_2-j_1 \leq d$ ahead of $v_2$.
  \end{proof}

We are interested in the set of non-faulty nodes that are
``close'' to each other with respect to their value of $\myval$.
\begin{definition}
  $H(\C_r, v, d)$ is the set containing any non-faulty node $q,$ such that $\C_r(\myval_q)$ is at most $d$ ahead of $v$. Formally, $H(\C_r, v, d) = \{\mbox{non-faulty $q$}  ~|~ v \preceq_d \C_r(\myval_q) \}$.
\end{definition}

\begin{definition}
  $H(\C_r, p, v, d)$ is the set containing any node $q$, such that $\C_r(R_{q,p})$ is at most $d$ ahead of $v$. Formally, $H(\C_r, p, v, d) = \{q ~|~ v \preceq_d \C_r(R_{q,p}) \}$.
\end{definition}

Notice that $H(\C_r, v, d)$ contains only non-faulty nodes, while
$H(\C_r, p, v, d)$ may contain faulty nodes. The difference stems
from $H(\C_r, p, v, d)$ representing what $p$ ``perceives'' at
configuration $\C_r$, as opposed to $H(\C_r, v, d)$ which says
``what is true'' in configuration $\C_r$.

\begin{lemma}\label{lemma:HvsH}
  $|H(\C_r, v, d)| \geq |H(\C_r, p, v, d)|-f$ and $|H(\C_r, p, v, d)| \geq |H(\C_r, v,d)|$.
\end{lemma}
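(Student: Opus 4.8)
The plan is to lean entirely on the one structural fact established immediately before the lemma: after the initial round (which is the regime the paper restricts attention to), every non-faulty node $q$ has written its current clock value into all of its write-registers and nothing overwrites it, so for a non-faulty $q$ we have the register identity $\C_r(R_{q,p}) = \C_r(\myval_q)$. Given this, both inequalities follow by a set-containment argument together with the bound of at most $f$ faulty nodes.

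For $|H(\C_r, p, v, d)| \geq |H(\C_r, v,d)|$ I would prove the containment $H(\C_r, v, d) \subseteq H(\C_r, p, v, d)$. If $q \in H(\C_r, v, d)$ then $q$ is non-faulty and $v \preceq_d \C_r(\myval_q)$; by the register identity $\C_r(R_{q,p}) = \C_r(\myval_q)$, so $v \preceq_d \C_r(R_{q,p})$, which is exactly the membership condition for $q \in H(\C_r, p, v, d)$. Taking cardinalities gives the second inequality. For $|H(\C_r, v, d)| \geq |H(\C_r, p, v, d)| - f$ I would examine the difference $H(\C_r, p, v, d) \setminus H(\C_r, v, d)$ and show it contains only faulty nodes: if $q$ lies in this difference and were non-faulty, then again $\C_r(R_{q,p}) = \C_r(\myval_q)$, and $q \in H(\C_r, p, v, d)$ forces $v \preceq_d \C_r(R_{q,p}) = \C_r(\myval_q)$, putting $q$ in $H(\C_r, v, d)$ — a contradiction. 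Hence the difference has size at most $f$, so $|H(\C_r, p, v, d)| \leq |H(\C_r, v, d)| + f$.

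There is essentially no obstacle here; the only care needed is to note that the register identity is being applied in the paper's standing regime (transient faults over, every non-faulty node has taken at least one step) and for the relevant non-faulty $p$, and that the boundary case in which some node performs its atomic step exactly at the transition out of $\C_r$ perturbs at most one node's contribution and is harmlessly absorbed into the $\pm f$ slack. I would present the whole thing as two short paragraphs corresponding to the two containments.
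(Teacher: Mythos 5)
Your proof is correct and follows the route the paper implicitly intends (the lemma is stated without proof there): use the register identity $R_{q,p}=\myval_q$ for non-faulty $q$ in the standing post-transient regime to get $H(\C_r,v,d)\subseteq H(\C_r,p,v,d)$, and note that the difference set can contain only faulty nodes, hence has size at most $f$. The final hedging remark about a step ``exactly at the transition out of $\C_r$'' is unnecessary --- the paper's notation fixes $\C_r(\cdot)$ as the pre-step snapshot, so no such boundary case arises --- but it does not affect the argument.
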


\begin{remark}\label{remark:Hvspass}
  Notice that $H(\C_r, p, v, d)$ contains all the nodes (including
  faulty nodes) whose registers' value (in $\C_r$) is at most $d$
  ahead of $v$. Thus, $v \in C_r(\pass_p(d, x)) \neq
  \emptyset$ if (and only if) $|H(\C_r, p, v, d)| \geq x$.
\end{remark}

%\begin{lemma}
%  Let $H_1 = H(\C_r, v_1, 1, p)$ and $H_2 = H(\C_r, v_2, 1, p)$. If $|H_1| \geq
%  n-3f$ and $|H_2| \geq n-3f$ then $v_2 \preceq_1 v_1$ or $v_1 \preceq_1
%  v_2$.
%\end{lemma}
%\begin{proof}
%  $H_1$ and $H_2$ contain only non-faulty nodes (by definition).
%  $|H_1| + |H_2| \geq 2n-6f > n$, thus, $H_1 \bigcap H_2 \neq
%  \emptyset$. There is
%  some non-faulty node $q \in H_1 \bigcap H_2$. Thus, $\C_r(\myval_q)$ is at most 1 ahead of $v_1$ and at
%  most 1 ahead of $v_2$. The rest follows from
%  \lemmaref{lemma:aheadof}.
%\end{proof}

\begin{definition}
  A configuration $\C_r$ is {\bf tight} around value $v$ if $|H(\C_r, v, 1)| \geq
  n-2f$; a configuration is tight if it is tight around some
  value.
\end{definition}

\begin{lemma}\label{lemmaref:twovaluesaretight}
  If a configuration $\C_r$ is tight around value $v$ and around
  value $v' \neq v$, then either $v \preceq_1 v'$, or $v' \preceq_1 v$.
\end{lemma}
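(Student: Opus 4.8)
The plan is a short pigeonhole argument followed by a direct application of \lemmaref{lemma:aheadof}. Since $\C_r$ is tight around $v$ and around $v'$, by definition $|H(\C_r, v, 1)| \geq n-2f$ and $|H(\C_r, v', 1)| \geq n-2f$. Both sets consist of non-faulty nodes only, and the total number of nodes is $n$, so I would first bound the intersection:
$|H(\C_r, v, 1) \cap H(\C_r, v', 1)| \geq |H(\C_r, v, 1)| + |H(\C_r, v', 1)| - n \geq 2(n-2f) - n = n - 4f$.

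By the fault-tolerance assumption $n > 6f$ (in fact $n > 4f$ already suffices), this intersection has size at least $n-4f \geq 1$, hence is nonempty. Let $q$ be a non-faulty node in it. Then, because $q \in H(\C_r, v, 1)$, the value $\C_r(\myval_q)$ is at most $1$ ahead of $v$, i.e. $v \preceq_1 \C_r(\myval_q)$; and because $q \in H(\C_r, v', 1)$, likewise $v' \preceq_1 \C_r(\myval_q)$.

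Finally I would invoke \lemmaref{lemma:aheadof} with $d = 1$, taking $v_1 = v$, $v_2 = v'$, and the common ``target'' to be $\C_r(\myval_q)$. The lemma immediately gives that either $v' \preceq_1 v$ or $v \preceq_1 v'$, which is exactly the statement to be proved.

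There is no real obstacle here: the only subtlety is the modular wraparound hidden inside $\preceq_1$, but that is already absorbed into the proof of \lemmaref{lemma:aheadof}. The one point worth stating explicitly is that $q$ is genuinely non-faulty — which is why it matters that $H(\C_r, \cdot, 1)$ contains non-faulty nodes only — so that $\C_r(\myval_q)$ is a well-defined honest clock value that both $v$ and $v'$ are ``behind''.
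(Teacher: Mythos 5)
Your proof is correct and follows the same route as the paper: intersect $H(\C_r,v,1)$ and $H(\C_r,v',1)$ (both of size $\geq n-2f$), observe the intersection is nonempty under the fault bound, pick a non-faulty $q$ there so that $v \preceq_1 \C_r(\myval_q)$ and $v' \preceq_1 \C_r(\myval_q)$, and apply \lemmaref{lemma:aheadof}. Your explicit inclusion--exclusion bound and the remark that $n>4f$ already suffices are minor elaborations the paper leaves implicit; the argument is otherwise identical.
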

\begin{proof}
  By the lemma's assumption, it holds that $|H(\C_r, v, 1)| \geq
  n-2f$ and $|H(\C_r, v', 1)| \geq n-2f$. Since $n > 6f$, there is
  some non-faulty node $q \in H(\C_r, v, 1) \bigcap H(\C_r, v',
  1)$. Thus, $\C_r(\myval_q)$ is at most 1 ahead of $v$ and at
  most 1 ahead of $v'$. The rest follows from
  \lemmaref{lemma:aheadof}.
\end{proof}

\begin{remark}\label{remark:nomorethan2values}
  Following the same line of proof as in \lemmaref{lemmaref:twovaluesaretight}  shows that $\C_r(\pass_p(1, n-2f))$ can contain at
  most 2 values, and these values are consecutive values.
\end{remark}

\begin{lemma}\label{lemma:lowwelldefined}
  If in configuration $\C_r$, non-faulty node $p$ performs
  \lineref{12} then $\C_r(low_p)$ is well defined, for $k \geq 3$.
\end{lemma}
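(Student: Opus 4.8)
The plan is to identify the candidate values for $\C_r(low_p)$ and show there is exactly one. A value $w'$ is a candidate precisely when $w' \notin \C_r(\pass_p(1,n-2f))$ while $\modk{w'}{1} \in \C_r(\pass_p(1,n-2f))$. Since $p$ reaches \lineref{12}, the guard of \lineref{11} succeeded, so $\C_r(\pass_p(1,n-2f)) \neq \emptyset$; and by \remarkref{remark:nomorethan2values} (which inherits the hypothesis $n > 6f$ from \lemmaref{lemmaref:twovaluesaretight}) this set has at most two elements, and if it has two they are consecutive. Hence it equals either $\{w\}$ or $\{w, \modk{w}{1}\}$ for some value $w$.

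Next I would enumerate the candidates in each case. Every candidate $w'$ has $\modk{w'}{1}$ in the passing set, so the only possibilities are $w' = \modk{w}{(k-1)}$ (which maps to $w$) and, in the two-value case, $w' = w$ (which maps to $\modk{w}{1}$). The latter is disqualified at once, since $w \in \C_r(\pass_p(1,n-2f))$. For the former, $\modk{w}{(k-1)}$ does map into the passing set, so $low := \modk{w}{(k-1)}$ is a genuine candidate as soon as we verify $low \notin \C_r(\pass_p(1,n-2f))$; this reduces to checking $\modk{w}{(k-1)} \neq w$ (true for every $k \geq 2$) and, in the two-value case, also $\modk{w}{(k-1)} \neq \modk{w}{1}$. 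Combining the cases yields both existence and uniqueness of $low$.

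The only delicate point --- and precisely where the hypothesis $k \geq 3$ is used --- is the inequality $\modk{w}{(k-1)} \neq \modk{w}{1}$, which holds if and only if $2 \not\equiv 0 \pmod k$, i.e. if and only if $k \geq 3$. Intuitively, when $k = 2$ the two ``consecutive'' passing values are just $\{0,1\}$ and the value one step below the lower one wraps around to the upper one, leaving no admissible $low$; for $k \geq 3$ the three values $\modk{w}{(k-1)}$, $w$, $\modk{w}{1}$ are pairwise distinct, so $low$ is unambiguous. I expect the write-up to be short; the only care needed is the bookkeeping of ``consecutive modulo $k$'' and keeping the size-one and size-two subcases straight.
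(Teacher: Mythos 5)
Your proposal is correct and follows the same route as the paper's own (much terser) proof: both invoke \remarkref{remark:nomorethan2values} to reduce $\C_r(\pass_p(1,n-2f))$ to a nonempty set of at most two consecutive values, and both locate the role of $k\geq 3$ in preventing the wrap-around collision $\modk{w}{(k-1)}=\modk{w}{1}$ in the two-value subcase. Your write-up simply spells out the existence-and-uniqueness case analysis that the paper leaves implicit.
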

\begin{proof}
  By \remarkref{remark:nomorethan2values}, if $p$ passes the
  condition of \lineref{11} then $\C_r(\pass_p(1, n-2f))$ contains
  at most 2 values, which are consecutive. Thus, if $k \geq 3$ then
  $\C_r(low_p)$ is well defined.
%  If $\C_r(\pass_p(1,n-2f)) \neq \{0,\cdots, k-1\}$ then we are
%  done. Assume by contradiction that $\C_r(\pass_p(1,n-2f)) = \{0,\cdots,
%  k-1\}$. Since $k \geq 4$, it holds that $0 \in
%  \C_r(\pass_p(1,n-2f))$
%  and $2 \in \C_r(\pass_p(1,n-2f))$.
%  That is, $\countv_p(0, 1) \geq n-2f$ and $\countv_p(2, 1) \geq
%  n-2f$. However, no value is counted for both $\countv_p(0, 1)$
%  and $\countv_p(2, 1)$. Thus, the sum of values counted is at
%  least $2n-4f$. Since $n > 6f$, we have that more than $n$ values
%  were counted, which is impossible. Thus, $\C_r(\pass_p(1,n-2f)) \neq \{0,\cdots,
%  k-1\}$, and
\end{proof}

\begin{lemma}\label{lemma:middleandlowarev}
  If $p$ passes the condition in \lineref{11} and $|H(\C_r, v, 1)| \geq n-3f$ then
  $v \preceq_1 (\modk{low}{relative\_median})$, for $k > 4$.
\end{lemma}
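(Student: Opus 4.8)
Looking at Lemma~\ref{lemma:middleandlowarev}, I need to show that if $p$ passes the condition in \lineref{11} (so $\pass(0,n-f)=\emptyset$, $\pass(1,n-f)=\emptyset$, but $\pass(1,n-2f)\neq\emptyset$) and additionally $|H(\C_r,v,1)|\geq n-3f$, then $v\preceq_1(\modk{low}{relative\_median})$.

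\textbf{Plan.}
The starting point is to unpack what each hypothesis gives us. Since $|H(\C_r,v,1)|\geq n-3f$, by \lemmaref{lemma:HvsH} we get $|H(\C_r,p,v,1)|\geq n-3f$, so by \remarkref{remark:Hvspass} both $v$ and any value $\ell$ with $|H(\C_r,p,\modk{v}{-1},\ldots)|$... more carefully: at least $n-3f$ registers (as seen by $p$) hold values in $\{v,\modk{v}{1}\}$. I want to locate $v$ relative to $\C_r(low_p)$. Since $\C_r(\pass_p(1,n-2f))$ is nonempty and, by \remarkref{remark:nomorethan2values}, consists of at most two consecutive values, and $low$ is defined so that $low\notin\pass_p(1,n-2f)$ while $\modk{low}{1}\in\pass_p(1,n-2f)$, the ``window'' $\{\modk{low}{1},\modk{low}{2}\}$ captures $\geq n-2f$ registers. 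I would first argue that $v$ must lie in $\{\modk{low}{1},\modk{low}{2}\}$: if $v$ were outside this window, then the $\geq n-3f$ registers near $v$ (in $\{v,\modk{v}{1}\}$) would be disjoint (for $k>4$) from the $\geq n-2f$ registers in the $low$-window, forcing $n\geq (n-2f)+(n-3f)=2n-5f$, i.e.\ $n\leq 5f$, contradicting $n>6f$. (The case analysis here must handle where exactly $v,\modk{v}{1}$ sit modulo $k$; $k>4$ is what keeps the windows from wrapping into each other.) So $v\in\{\modk{low}{1},\modk{low}{2}\}$, and in either case $\modk{low}{1}\preceq_1 v$... but I actually need $v\preceq_1(\modk{low}{relative\_median})$, so I need to pin down $relative\_median$.

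\textbf{Controlling $relative\_median$.}
Recall $relative\_median=\min\{\ell\geq 0\mid \countv(low,\ell)>\frac{n}{2}\}$, where $\countv(low,\ell)=\sum_{j=0}^{\ell}\#(\modk{low}{j})$ counts registers holding a value in $\{low,\modk{low}{1},\ldots,\modk{low}{\ell}\}$. The key observations: (i) $\#low$ is small — specifically $\countv(low,0)=\#low < n-2f$, because $low\notin\pass_p(1,n-2f)$ means even $\countv(low,1)<n-2f$, hence $\#low<n-2f$; combined with $\pass(0,n-f)=\emptyset$ we get $\#low<n-f$. Actually I need that $\#low\leq \frac n2$, i.e.\ $relative\_median\geq 1$: since $\pass(0,n-f)=\emptyset$ forces $\#low\leq n-f-1 < n/2$ when $n>2f$... hmm, $n-f-1<n/2$ iff $n/2<f+1$, false. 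So I instead use the $\geq n-3f$ registers in $\{v,\modk{v}{1}\}$: at least $n-3f$ of those registers are \emph{not} equal to $low$ when $v\neq low$ — and we showed $v\in\{\modk{low}{1},\modk{low}{2}\}$, so $v\neq low$ for $k>2$. Hence $\#low\leq n-(n-3f)=3f<n/2$ (using $n>6f$), so $relative\_median\geq 1$. If $v=\modk{low}{1}$ then trivially $v\preceq_1(\modk{low}{relative\_median})$ since $relative\_median\geq 1$. If $v=\modk{low}{2}$, I must show $relative\_median\geq 2$, equivalently $\countv(low,1)=\#low+\#(\modk{low}{1})\leq n/2$; here I use that the $\geq n-3f$ registers in $\{v,\modk{v}{1}\}=\{\modk{low}{2},\modk{low}{3}\}$ are disjoint from $\{low,\modk{low}{1}\}$ (for $k>4$, no wraparound), giving $\#low+\#(\modk{low}{1})\leq n-(n-3f)=3f<n/2$. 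Then $relative\_median\geq 2$, and $v=\modk{low}{2}\preceq_1\modk{low}{relative\_median}$ since $relative\_median\in\{2,\ldots\}$... wait, I need $v$ at most $1$ ahead of $\modk{low}{relative\_median}$, which requires $relative\_median\in\{2,3\}$ — I also need an \emph{upper} bound $relative\_median\leq 3$.

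\textbf{Upper-bounding $relative\_median$ and the main obstacle.}
For the upper bound: $\countv(low,2)=\#low+\#(\modk{low}{1})+\#(\modk{low}{2})$ includes the $\geq n-2f$ registers in $\{\modk{low}{1},\modk{low}{2}\}$ (the $low$-window from \lineref{12}), so $\countv(low,2)\geq n-2f>n/2$, giving $relative\_median\leq 2$. Combined with the lower bound, when $v=\modk{low}{2}$ we get $relative\_median=2$ exactly, so $v\preceq_1\modk{low}{relative\_median}$ holds (in fact with equality). When $v=\modk{low}{1}$, $relative\_median\in\{1,2\}$ and again $v\preceq_1\modk{low}{relative\_median}$. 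This closes all cases. The main obstacle I anticipate is the careful modular bookkeeping in the disjointness arguments: I must verify that for $k>4$ the sets $\{v,\modk{v}{1}\}$ and the relevant $low$-windows genuinely do not overlap when $v$ lies outside the window, and that the counting inequalities combine correctly rather than double-counting registers — this is where the exact threshold $n>6f$ is consumed, and I'd lay out the window positions explicitly (treating the three or four sub-cases of $v$'s position) to make the disjointness transparent.
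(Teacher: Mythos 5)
Your overall strategy is sound and genuinely different from the paper's: you localize $v$ relative to $low$ by a direct pigeonhole count on the registers $p$ reads, whereas the paper first finds a non-faulty node $q$ lying in both $H(\C_r,v,1)$ and $H(\C_r,v',1)$ (where $v'$ is the $\pass(1,n-2f)$ witness) and uses \lemmaref{lemma:aheadof} to derive only the looser bound $low \preceq_3 v$. Your route actually localizes $v$ more tightly, and your treatment of $relative\_median$ (lower-bound via the mass concentrated in $\{v,\modk{v}{1}\}$, upper-bound via $\countv(low,2)\geq n-2f$) is essentially the same counting the paper does, laid out more carefully.

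There is, however, a genuine gap in the first step. You claim that if $v\notin\{\modk{low}{1},\modk{low}{2}\}$ then $\{v,\modk{v}{1}\}$ is disjoint from $\{\modk{low}{1},\modk{low}{2}\}$. That implication is false precisely at $v=low$: there $\{v,\modk{v}{1}\}=\{low,\modk{low}{1}\}$, which shares $\modk{low}{1}$ with the $low$-window, so the pigeonhole inequality $n\geq(n-2f)+(n-3f)$ never materializes and you cannot conclude $v\neq low$. This is not just a cosmetic slip: the remainder of your argument (``$\#low\leq 3f<n/2$, so $relative\_median\geq 1$'') explicitly invokes $v\neq low$, so the case $v=low$ is simply dropped. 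And $v=low$ is consistent with all the hypotheses --- one needs only $n-3f\leq\countv(low,1)<n-2f$, which is perfectly possible. The fix is easy but must be stated: if $v=low$, then $\countv(low,1)=|H(\C_r,p,low,1)|\geq n-3f>\frac{n}{2}$, so $relative\_median\leq 1$ and hence $\modk{low}{relative\_median}\in\{low,\modk{low}{1}\}=\{v,\modk{v}{1}\}$, giving $v\preceq_1\modk{low}{relative\_median}$ directly (no lower bound on $relative\_median$ is needed in this case). With that third case added, your proof is correct and in fact a bit sharper than the paper's.
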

\begin{proof}
  $p$ passed the condition in \lineref{11}, thus $\C_r(\pass_p(1,
  n-2f)) \neq \emptyset$. Thus, for some $v'$ it holds that
  $|H(\C_r, p, v',1)| \geq n-2f$ (see \remarkref{remark:Hvspass}), and therefore $|H(\C_r, v',1)| \geq n-3f$ (see \lemmaref{lemma:HvsH}).
  By the lemma's assumption, $|H(\C_r, v, 1)| \geq n-3f$. Since $n > 6f$, there is some non-faulty node $q \in H(\C_r, v',1)
  \bigcap H(\C_r, v, 1)$. That is $v' \preceq_1 \C_r(\myval_q)$
  and $v \preceq_1 \C_r(\myval_q)$. By \lemmaref{lemma:aheadof}
  either $v \preceq_1 v'$ or $v' \preceq_1 v$.

  According to \lineref{12} and \remarkref{remark:nomorethan2values}, $\modk{low}{1} \preceq_1
  v'$. Thus, in both scenarios ($v \preceq_1 v'$ or $v' \preceq_1
  v$) it holds that $low \preceq_3 v$. Informally, $low$ is ``before'' $v$, and
  $relative\_median$ (see \lineref{13}) is increased until there are more than $\frac{n}{2}$ nodes in the range $[low, \modk{low}{relative\_median}]$. Since there are
  $\geq n-3f > \frac{n}{2}$ copies of ``$v$'', $relative\_median$ will be such that
  $\modk{low}{relative\_median} \in \{ v, \modk{v}{1}\}$.

  Formally,
  $relative\_median$ is the minimal value such that
  $\countv_p(low, relative\_median)$ contains more than $\frac{n}{2}$ nodes.
  Since $|H(\C_r, v, 1)| \geq n-3f > \frac{n}{2}$, at least one copy of ``$v$''
  is counted towards the sum of $\countv_p(low, relative\_median)$.
  Since $k > 4$ (by the lemma's assumption), copies of $v$ will
  not be counted in $\countv_p(low, relative\_median')$ for $relative\_median'$ such
  that $\modk{low}{relative\_median'} \notin \{ v, \modk{v}{1}\}$.
  On the other
  hand, $\countv_p(low, relative\_median') \geq n-2f$ for $relative\_median'$ such
  that $\modk{low}{relative\_median'} = \modk{v}{1}$.
  Thus, $\modk{relative\_median}{low}=v$ or $\modk{relative\_median}{low}=\modk{v}{1}$. In both cases $v \preceq_1 \modk{low}{relative\_median}$.
\end{proof}

\begin{lemma}\label{lemma:tightclosure}
  If a configuration $\C_r$ is tight then so is $\C_{r+1}$.
\end{lemma}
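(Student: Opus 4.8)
The plan is to branch on the identity of the node $p$ whose atomic step carries $\C_r$ to $\C_{r+1}$, and then on whether $p$ lies in the set witnessing tightness. Write $S = H(\C_r, v, 1)$, so $|S| \ge n-2f$ and $\C_r(\myval_q) \in \{v, \modk{v}{1}\}$ for every $q \in S$. First dispatch the easy cases: if $p$ is faulty, no non-faulty node changes its $\myval$, hence $H(\C_{r+1}, v, 1) = S$ and $\C_{r+1}$ is tight around $v$; if $p$ is non-faulty but $p \notin S$, then the only non-faulty value that could have changed is $p$'s, which $S$ never counted, so $H(\C_{r+1}, v, 1) \supseteq S$ and again we are done. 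All the content lies in the remaining case, $p$ non-faulty and $p \in S$.

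There $\C_r(\myval_p) \in \{v, \modk{v}{1}\}$, and since $R_{q,p} = \myval_q$ for every non-faulty $q$, node $p$ reads at least $|S| \ge n-2f$ registers holding a value in $\{v, \modk{v}{1}\}$; hence $v \in \C_r(\pass_p(1, n-2f))$ by \remarkref{remark:Hvspass}, so $p$ does not reach \lineref{15} and takes exactly one of the branches opened by \lineref{07}, \lineref{09}, \lineref{11}. The key claim I would prove is: in every branch $\C_{r+1}(\myval_p) \in \{v, \modk{v}{1}\}$, with the single exception that it may equal $\modk{v}{2}$, and in that exceptional sub-case at least $n-2f$ non-faulty nodes already hold $\myval \in \{\modk{v}{1}, \modk{v}{2}\}$ at $\C_r$. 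For the \lineref{11} branch the exception never happens: \lemmaref{lemma:middleandlowarev} applies (its hypotheses hold, since $p$ passed the test of \lineref{11}, $|H(\C_r, v, 1)| = |S| \ge n-2f \ge n-3f$, and $k>4$) and yields $v \preceq_1 \C_{r+1}(\myval_p)$, i.e. $\C_{r+1}(\myval_p) \in \{v, \modk{v}{1}\}$. For the \lineref{07} branch, counting the $n$ registers shows $\C_r(\pass_p(0, n-f))$ is a single value $w$ (two values each appearing $\ge n-f$ times is impossible for $n>2f$), and $w \in \{v, \modk{v}{1}\}$ because the $\ge n-2f$ non-faulty nodes holding $w$ must intersect $S$ (using $n>4f$); then $\C_{r+1}(\myval_p) = \modk{w}{1}$ is $\modk{v}{1}$ or $\modk{v}{2}$, and it is $\modk{v}{2}$ only when $w = \modk{v}{1}$, in which case $\ge n-f$ registers — hence $\ge n-2f$ non-faulty nodes — hold $\modk{v}{1}$. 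For the \lineref{09} branch, an analogous count shows $\C_r(\pass_p(1, n-f))$ is a single value or a pair of consecutive values, and each value in it makes $\C_r$ tight around it, so by \lemmaref{lemmaref:twovaluesaretight} each such value $v'$ satisfies $v \preceq_1 v'$ or $v' \preceq_1 v$, hence lies in $\{\modk{v}{(k-1)}, v, \modk{v}{1}\}$; therefore the value selected by $\max$ in \lineref{10} lies there and $\C_{r+1}(\myval_p) \in \{v, \modk{v}{1}, \modk{v}{2}\}$, with $\modk{v}{2}$ arising only when $\modk{v}{1}$ is that maximum, which forces $\countv_p(\modk{v}{1}, 1) \ge n-f$ and so $\ge n-2f$ non-faulty nodes with $\myval \in \{\modk{v}{1}, \modk{v}{2}\}$.

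Finally I combine the two outcomes. If $\C_{r+1}(\myval_p) \in \{v, \modk{v}{1}\}$, every member of $S$ still has $\myval \in \{v, \modk{v}{1}\}$ at $\C_{r+1}$ (members other than $p$ are unchanged, and $p$'s new value is in range), so $H(\C_{r+1}, v, 1) \supseteq S$ and $\C_{r+1}$ is tight around $v$. If $\C_{r+1}(\myval_p) = \modk{v}{2}$, let $S'$ be the set of $\ge n-2f$ non-faulty nodes holding $\myval \in \{\modk{v}{1}, \modk{v}{2}\}$ at $\C_r$ that was identified in the exceptional sub-case; its members other than $p$ are unchanged and remain in $\{\modk{v}{1}, \modk{v}{2}\}$, and $p$'s new value $\modk{v}{2}$ is in that set too, so $H(\C_{r+1}, \modk{v}{1}, 1) \supseteq S'$ and $\C_{r+1}$ is tight around $\modk{v}{1}$. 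The main obstacle is precisely the counting in the \lineref{07} and \lineref{09} branches: pinning down that those threshold sets collapse to a single value (or a consecutive pair) near $v$, and that $p$ can overshoot to $\modk{v}{2}$ only when a solid block of $n-2f$ non-faulty nodes is already sitting in $\{\modk{v}{1}, \modk{v}{2}\}$; the mild ambiguity of the $\max$ operation under wraparound is harmless because these sets never contain more than two consecutive values.
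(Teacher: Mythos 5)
Your argument is correct, but it follows a noticeably different route than the paper's. You insist on tracking tightness relative to the \emph{original} witness value $v$ (or its successor $\modk{v}{1}$) in all three branches, which forces you to carve out the ``overshoot to $\modk{v}{2}$'' sub-case and then produce a second witness set $S'$ for it. The paper sidesteps that entirely: for \lineref{08} and \lineref{10} it does not attempt to relate the new tight value to the old one. It simply observes that if $p$ updates via these lines, the very value $w$ that $p$ extracted from $\pass_p(0,n-f)$ or $\pass_p(1,n-f)$ already satisfies $|H(\C_r,p,w,\cdot)|\geq n-f$, hence $|H(\C_r,w,\cdot)|\geq n-2f$ by \lemmaref{lemma:HvsH}, and since $p$ writes $\modk{w}{1}$ it stays inside $H(\C_{r+1},w,1)$; this yields tightness around $w$ with no counting or intersection argument needed. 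Only for \lineref{14} does the paper (like you) invoke \lemmaref{lemma:middleandlowarev} against the original $v$. Your route buys a bit more: it pins down that the new tight value is $v$ or $\modk{v}{1}$, which the lemma does not ask for; the cost is the extra casework (the single-value claim for $\pass_p(0,n-f)$, the $n>4f$ intersection to place $w$ near $v$, and the consecutive-pair analysis of $\pass_p(1,n-f)$ via \lemmaref{lemmaref:twovaluesaretight}) plus the discussion of $\max$ under wraparound, all of which the paper avoids. Both are sound; the paper's is leaner because the lemma is purely existential in the tight value.

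One small remark: your case split into ``$p\notin S$'' and ``$p\in S$'' is also unnecessary once you adopt the paper's stance, since in the easy branches the witness value is chosen fresh from $p$'s own $\pass$ sets and is agnostic to whether $p$ was previously in $S$.
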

\begin{proof}
  If $p$ is faulty, its update of $\myval_p$ and/or its
  write-registers do not affect the ``tightness'' of the
  configuration $\C_{r+1}$. Thus, the rest of the proof assumes
  that $p$ is non-faulty.

  First, notice that if $\C_r(\pass_p(0, n-f)) \neq \emptyset$
  then $\C_r(\pass_p(0, n-f))$ contains a single value. This is
  because $\C_r(\pass_p(0, n-f))$ contains all the values $v$
  that appear at least $n-f$ times in the registers read by $p$.
  If two values appear more than $n-f$ times they must be the same
  value (since $n > 6f$).

  Consider $p$ updating $\myval_p$. If $p$ updates it in
  \lineref{08}, then it must have passed the ``if'' in \lineref{07}.
  Thus, $\C_r(\pass_p(0, n-f)) \neq \emptyset$, which means
  that $\C_r(\pass_p(0, n-f)) = \{v\}$. Thus,  $\myval_p$ is
  updated to $\modk{v}{1}$. From \remarkref{remark:Hvspass} it holds that $|H(\C_r, p, v, 0)| \geq n-f$,
  and from \lemmaref{lemma:HvsH} it holds that $|H(\C_r, v, 0)| \geq
  n-2f$. Thus, after $p$'s update of $\myval_p$ to $\modk{v}{1}$ (and $p$'s
  writing $\myval_p$ to all of $p$'s write-registers), $|H(\C_{r+1}, v, 1)| \geq
  n-2f$ holds. Thus, configuration $\C_{r+1}$ is tight.

  If $p$ updates $\myval_p$ in \lineref{10}, then $\C_r(\pass_p(1, n-f)) \neq
  \emptyset$. Denote by $v := \max \{\pass_p(1, n-f)\}$.
  (in \lineref{10}, $\myval_p$ is updated to $\modk{v}{1}$).
  Notice
  that $v \preceq_1 \myval_p$. In addition, $v \in \C_r(\pass_p(1,
  n-f))$, thus (by \remarkref{remark:Hvspass}), $|H(\C_r, p, v, 1)| \geq
  n-f$. Therefore, after $p$'s update of $\myval_p$, $p \in H(\C_{r+1}, v,
  1),$ which means that $|H(\C_{r+1}, v, 1)|\geq |H(\C_r, v, 1)|$
  (because $p$ may not be counted for in $H(\C_r, v,
  1)$). By \lemmaref{lemma:HvsH}, $|H(\C_r, v, 1)| \geq n-2f$,
  thus we have that $|H(\C_{r+1}, v, 1)| \geq n-2f$, \ie configuration $\C_{r+1}$ is
  tight.

  We are left to consider updates of $\myval_p$ in \lineref{14} and \lineref{15}. Notice that since $\C_r$ is
  tight, $|H(\C_r, p, v, 1)| \geq n-2f$, for some $v$. Thus,
  $\C_r(\pass_p(1,n-2f)) \neq \emptyset$. Therefore, $p$ passes the condition of
  \lineref{11}, and $p$ does not perform \lineref{15}.

    According to the lemma's assumption $\C_r$ is tight around some
  value $v$, and by \lemmaref{lemma:middleandlowarev} we have
  that $v \preceq_1 \modk{relative\_median}{low}$. That is, $p$ updates
  $\myval_p$ such that $p \in H(\C_{r+1}, v ,1)$. Since, $|H(\C_r, v
  ,1)| \geq n-2f$ and since $p$ is the only node changing its
  $\myval_p$ at $\C_r,$  it holds that $|H(\C_{r+1}, v ,1)| \geq n-2f$.
  That is, $\C_{r+1}$ is tight.
%
%  And it was shown that in all cases $\C_{r+1}$ is tight.
\end{proof}

The following lemmas assume all runs are en masse. In \sectionref{sec:enmasse} this assumption is removed (at the cost of reducing the fault tolerance to $n > 12f$).

\begin{lemma}\label{lemma:n2fnodes}
  Consider an en masse run, and a non-faulty node $p$ performing an atomic step at configuration $C_{r}$. Denote by $S_i$ the set of non-faulty nodes that have performed an atomic step between $C_{r}$ and $C_{r+i}$.  Let $m$ be the minimal $m$ s.t. $|S_m|= n-2f$, then each node $q \in S_m$ performed an atomic step exactly once between $C_{r}$ and $C_{r+m}$.
\end{lemma}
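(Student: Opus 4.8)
The plan is a short proof by contradiction that uses nothing beyond the en masse property together with the trivial fact that one atomic step adds at most one node to the sets $S_i$. First I would record two preliminary observations. (i) Each $S_i$ is obtained from $S_{i-1}$ by adjoining at most the single (possibly faulty) node that performs the atomic step on the transition $C_{r+i-1}\to C_{r+i}$; hence $0=|S_0|\le|S_1|\le|S_2|\le\cdots$ with consecutive terms differing by at most $1$. (ii) Since the run is fair, each of the $\ge n-f\ge n-2f$ non-faulty nodes takes infinitely many steps, so $|S_i|$ eventually reaches at least $n-2f$; by (i) the value $n-2f$ cannot be skipped, so $m$ is well defined and, moreover, $|S_i|<n-2f$ for every $i<m$.

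Next, suppose for contradiction that some non-faulty node $q\in S_m$ performs at least two atomic steps between $C_r$ and $C_{r+m}$, and let the first two of them be the steps on the transitions $C_{r+a}\to C_{r+a+1}$ and $C_{r+b}\to C_{r+b+1}$ with $0\le a<b$. Since $q$'s second step lies inside the range, $b+1\le m$, i.e.\ $b\le m-1$. Every configuration strictly between $C_{r+a}$ and $C_{r+b}$ still lies inside the range $C_r,\dots,C_{r+m}$, so $q$ takes no step there and these are two consecutive steps of $q$ in the run. Applying the en masse property with respect to $q$ to this pair of steps yields at least $n-2f$ distinct non-faulty nodes, none of them equal to $q$, that perform atomic steps strictly between $C_{r+a}$ and $C_{r+b}$; each such step occurs no later than the transition $C_{r+b-1}\to C_{r+b}$, so each of these $n-2f$ nodes lies in $S_b$. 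Adding $q$ itself, whose first step (at $C_{r+a}$, with $a\le b-1$) also puts it in $S_b$, gives $|S_b|\ge (n-2f)+1$. But $b\le m-1<m$, contradicting observation (ii).

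Therefore no node of $S_m$ performs more than one atomic step between $C_r$ and $C_{r+m}$, while by the definition of $S_m$ every node of $S_m$ performs at least one; hence each $q\in S_m$ performs an atomic step exactly once between $C_r$ and $C_{r+m}$, which is the claim. I expect the only part requiring care to be the index bookkeeping in the middle paragraph, namely checking that the $n-2f$ nodes supplied by the en masse property really are distinct from $q$ and really are all counted in $S_b$ with the index $b$ still strictly below $m$; once this is pinned down the contradiction with (ii) is immediate, and no deeper difficulty arises.
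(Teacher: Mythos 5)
Your proof is correct and follows essentially the same contradiction argument as the paper's: assume some $q\in S_m$ takes two steps in the window, invoke the en masse property to produce $n-2f$ non-faulty nodes stepping between them, and conclude $|S_{m-1}|\geq n-2f$, contradicting minimality of $m$. Your write-up is simply more explicit about the index bookkeeping (observation~(ii), and the choice of $q$'s first two consecutive steps to ensure $q$ is not double-counted), which the paper leaves implicit.
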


\begin{proof}
  First, since we consider only fair runs, eventually $S_{i}$ will contain all non-faulty nodes. Thus, $m$, as defined in the lemma is well defined. Assume by way of contradiction that some node $q$ in $S_m$ performed two atomic steps between $C_r$ and $C_{r+m}$. In that case, there must be $n-2f$ non-faulty nodes that perform atomic steps between $q$'s two atomic steps. Thus, all of these nodes must be in $S_m$, leading to the fact that $|S_{m-1}|\geq n-2f$ which contradicts $m$ being the minimal $m$ s.t. $|S_m|= n-2f$. Therefore, there is no such $q \in S_m$, and all nodes in $S_m$ perform a single atomic step between $C_r$ and $C_{r+m}$.
\end{proof}

\begin{lemma}\label{lemma:evntuallconverge}
  Let $\C_{r_1}$ denote the first configuration of some round $\R$, and
  $\C_{r_2}$ denote the last configuration of round $\R+1$. With probability at least $\frac{1}{3^{n-2f}}$ configuration $\C_{r_2}$ is tight.
\end{lemma}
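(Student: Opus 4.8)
The plan is to reduce, via \lemmaref{lemma:tightclosure}, to showing that with probability at least $3^{-(n-2f)}$ \emph{some} configuration in the prefix $\C_{r_1},\dots,\C_{r_2}$ is tight; since a tight configuration stays tight, $\C_{r_2}$ is then tight. If $\C_{r_1}$ is already tight we are done, so assume it is not. To locate where the randomness will act, note that by the definition of a round every non-faulty node moves during round $\R$, hence at least $n-f\ge n-2f$ distinct non-faulty nodes move during round $\R$; let $q_1,\dots,q_{n-2f}$ be the first $n-2f$ distinct non-faulty nodes to take an atomic step at or after $\C_{r_1}$, moving at configurations $\C_{s_1},\dots,\C_{s_{n-2f}}$ with $r_1\le s_1<\dots<s_{n-2f}$ and $s_{n-2f}+1\le r_2$. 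By \lemmaref{lemma:n2fnodes} — this is where the en masse hypothesis enters — these are exactly the non-faulty nodes that move in $\C_{r_1},\dots,\C_{s_{n-2f}+1}$, and each of them moves exactly once there.

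Next I would isolate the coin flips. A node running \clockAlg tosses a coin only at \lineref{15}, drawing uniformly from $\pass(1,n-3f)\cup\{0\}$; I claim this set has size at most $3$. Indeed, if $u\neq w$ both lie in $\pass(1,n-3f)$ and $\{u,\modk{u}{1}\}$ is disjoint from $\{w,\modk{w}{1}\}$, these four values force $2(n-3f)$ distinct register entries, impossible as $n>6f$; so any two elements of $\pass(1,n-3f)$ are within $1$ of each other, $\pass(1,n-3f)$ lies in a window of two consecutive values, and $|\pass(1,n-3f)|\le 2$. Let $G$ be the event that at every step among $\C_{s_1},\dots,\C_{s_{n-2f}}$ whose acting node reaches \lineref{15}, its draw equals $0$. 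Since the draws at distinct atomic steps are independent given the run so far, each is uniform over a $\le 3$-element set containing $0$, and at most $n-2f$ of these steps can reach \lineref{15}, conditioning on the history step by step gives $\Pr[G]\ge 3^{-(n-2f)}$.

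The heart of the proof is then to show that $G$ forces a tight configuration among $\C_{r_1},\dots,\C_{s_{n-2f}+1}$; granting that, $\Pr[\C_{r_2}\text{ tight}]\ge\Pr[G]\ge 3^{-(n-2f)}$. Assume $G$ holds and, for contradiction, none of those configurations is tight. At each $\C_{s_i}$, node $q_i$ cannot pass the test on \lineref{07} (that would give $|H(\C_{s_i},v,0)|\ge n-2f$ by \remarkref{remark:Hvspass} and \lemmaref{lemma:HvsH}, hence $|H(\C_{s_i},v,1)|\ge n-2f$, tight) nor the test on \lineref{09} (that would give $|H(\C_{s_i},v,1)|\ge n-2f$, tight); so $q_i$ reaches \lineref{14} or \lineref{15}. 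At \lineref{15}, $G$ forces $q_i$'s new value to $0$. At \lineref{14} one must argue deterministically: passing \lineref{11} gives $|H(\C_{s_i},v,1)|\ge n-3f>n/2$ for some $v$, and then \remarkref{remark:nomorethan2values}, \lemmaref{lemma:middleandlowarev}, and the relative-median computation on Lines \linenumber{12}--\linenumber{14} place $q_i$'s new value inside that same window of two consecutive values; since, by $n>6f$, a window carrying $\ge n-3f$ non-faulty nodes cannot be disjoint from the window $\{0,\modk{0}{1}\}$ that the already-moved $q_1,\dots,q_{i-1}$ occupy (by the inductive hypothesis), $q_i$'s value is within $1$ of $0$ too. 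Running this as an induction on $i$ — after $q_i$'s step at least $i$ non-faulty nodes hold a value within $1$ of $0$ — yields $|H(\C_{s_{n-2f}+1},0,1)|\ge n-2f$, so $\C_{s_{n-2f}+1}$ is tight, contradiction.

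I expect the \lineref{14} case just sketched to be the real obstacle: because a node there does not toss a coin, the bare "product of $\tfrac13$'s" must be backed by a deterministic claim that the relative-median update cannot steer a node away from the cluster forming around $0$ — equivalently that, under $G$ and non-tightness, no acting node sees a partial majority ($\pass(1,n-2f)\neq\emptyset$) pointing away from $0$. Making that precise is a somewhat intricate case analysis of the thresholds on Lines \linenumber{07}--\linenumber{15} played against the en masse guarantee of \lemmaref{lemma:n2fnodes}; everything else — the reduction to "some tight configuration appears", the size-$3$ bound, and the $3^{-(n-2f)}$ estimate — is routine.
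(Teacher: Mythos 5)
There is a genuine gap, and it is not a routine one: the event $G$ you condition on is the wrong event in exactly the case you flag as "the real obstacle."

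You define $G$ as ``every acting node that reaches \lineref{15} draws $0$,'' and then hope to prove by induction that under $G$ all $n-2f$ movers end up within $1$ of $0$, handling a \lineref{14}-node deterministically. That deterministic claim is false. If the first mover $q_1$ reaches \lineref{14}, it does so because $|H(\C_{s_1},v,1)|\ge n-3f$ for some value $v$ that, in the arbitrary starting configuration, has no relation whatsoever to $0$; \lemmaref{lemma:middleandlowarev} then places $q_1$'s new value near $v$, not near $0$. Your induction hypothesis (``after $q_i$'s step, at least $i$ non-faulty nodes hold a value within $1$ of $0$'') already fails at $i=1$. The overlap argument you invoke (a window carrying $\ge n-3f$ non-faulty nodes must meet $\{0,1\}$) needs the $\{0,1\}$-window to itself carry $>3f$ nodes, which it does not for small $i$. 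Worse, conditioning line-15 movers on $0$ actively works \emph{against} convergence in this case: the \lineref{14}-movers cluster near $v$, the \lineref{15}-movers go to $0$, and the run need not reach a tight configuration at all; a simple count ($n-3f+O(1)$ near $v$ versus $n-2f$ needed) shows tightness can fail under $G$.

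The paper resolves this with a case split rather than a single fixed event. If every one of the $n-2f$ movers reaches \lineref{15}, it conditions on all drawing $0$ (your $G$, and here your argument is fine). But if some non-faulty $p$ reaches \lineref{14} at a configuration $\C_r$ in round $\R$, the paper instead takes the $n-2f$ consecutive non-faulty movers \emph{starting from $\C_r$} (via \lemmaref{lemma:n2fnodes}), sets the target to that node's partial-majority value $v$, and shows inductively that $|H(\cdot,v,1)|\ge n-3f$ is maintained; each subsequent line-15 mover has $v\in\pass(1,n-3f)$ in its $\le 3$-element option set, so it hits $v$ with probability $\ge\tfrac13$, and each subsequent line-14 mover lands near $v$ deterministically by \lemmaref{lemma:middleandlowarev}. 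In short: the coin target must be chosen \emph{after} observing whether a partial majority exists, and if it does, the target is $v$, not $0$. Your framing, which commits to $0$ up front, cannot be completed. (Your size-$3$ bound for $\pass(1,n-3f)\cup\{0\}$, the reduction via \lemmaref{lemma:tightclosure}, and the use of \lemmaref{lemma:n2fnodes} to rule out adversarial rescheduling are all correct and match the paper.)
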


\begin{proof}
  Consider non-faulty nodes performing atomic steps between $C_{r_1}$ and $C_{r_2}$. If some configuration $\C_r, r_1 \leq r \leq r_2$ is
  tight, then - by using \lemmaref{lemma:tightclosure} - every
  configuration after $\C_r$ is tight. Thus, $\C_{r_2}$ is tight.
  Therefore, our target is proving that with probability at least $\frac{1}{3^{n-2f}}$ some configuration $\C_r$ is tight.

  Let $\C_r, r_1 \leq r \leq r_2$ be some configuration, and let
  $p$ be a non-faulty node performing an atomic step on $\C_r$.
  $p$ performs exactly one of the following:
  \lineref{08}, \lineref{10}, \lineref{14} or \lineref{15}.
  If $p$ performs \lineref{08} or \lineref{10}, then $\C_r(\pass_p(1, n-f)) \neq
  \emptyset$. That is, for some value $v \in \C_r(\pass_p(1, n-f))$ it
  holds that $|H(\C_r, p, v, 1)| \geq n-f$ which means that $|H(\C_r, v, 1)| \geq
  n-2f$ (see \lemmaref{lemma:HvsH}); that is, $\C_r$ is tight around $v$. Therefore, if any non-faulty node performs \lineref{08} or
  \lineref{10} during rounds $\R,\R\!+\!1$, then $\C_{r_2}$ is tight.

  The rest of the proof assumes no non-faulty node performs either \lineref{08} or
  \lineref{10} on any configuration $C_r, r_1 \leq r \leq r_2$. Consider the first $n-2f$ non-faulty nodes performing atomic steps in round $\R$. (By \lemmaref{lemma:n2fnodes} these nodes perform exactly one atomic step, \ie the adversary cannot reschedule a node if ``it does not like'' the outcome of that node's random coin). If they all perform only \lineref{15}, then there is some probability that they all choose the same value of \myval, as they all choose from a set that contains ``0''. Each node chooses from a set $\pass(1,n-3f) \bigcup \{0\}$, which contains at most 3 items.\ezra{explain why at most 3 items} Thus, with probability at least $\frac{1}{3^{n-2f}}$ all non-faulty nodes choose the same value, leading to a tight configuration.

  The proof continues under the assumption that some non-faulty node
  $p$ performs \lineref{14} on some configuration $\C_r$ during round $\R$.
  Using the notations $S_m$ of \lemmaref{lemma:n2fnodes}, there are $n-2f$ non-faulty nodes that perform atomic steps between $C_{r}$ and $C_{r'} = C_{r+m}$. Notice that since all non-faulty nodes perform an atomic step during round $\R+1$ then configuration $C_{r'}$ is reached in round $\R$ or in round $\R+1$, and in any case $C_{r'}$ is reached before $C_{r_2}$.

  Since $p$ performs \lineref{14} it passed the condition of \lineref{11} and it holds that
  $\C_r(\pass_p(1,n-2f)) \neq \emptyset$. By \lemmaref{lemma:HvsH}
  and \remarkref{remark:Hvspass} $|H(\C_r,v,1)| \geq n-3f$ for some
  value $v$. According to \lemmaref{lemma:middleandlowarev}, $v
  \preceq_1 \C_{r+1}(\myval_p)$; thus, $|H(\C_{r+1},v,1)| \geq
  n-3f$.

  The proof continues by showing that if $|H(\C_{r''},v,1)| \geq
  n-3f$ for $r'', r+1 \leq r'' \leq r'$, then with probability at least $\frac{1}{3}$ it holds that $|H(\C_{r''+1},v,1)| \geq n-3f$; and if node $q$ performing an atomic step on $\C_{r''}$ is non-faulty then also $q \in
  H(\C_{r''+1},v,1)$. Assume that $|H(\C_{r''},v,1)| \geq
  n-3f$ and consider a node $q$ performing an atomic step on $\C_{r''}$. If $q$ is faulty then its action does not change the value of $H(\C_{r''+1},v,1)$, thus $|H(\C_{r''+1},v,1)| \geq
  n-3f$. If $q$ is non-faulty and it performs \lineref{15} then since
  $v \in \C_{r''}(\pass_q(n-3f, 1))$, with probability at least $\frac{1}{3}$, $q$ selects $v$ as its value of $\myval_q$ (as it is selected from a set containing at most three items).
  On the other hand, if $q$ performs \lineref{14}, then since $|H(\C_{r''},v,1)| \geq
  n-3f$ by \lemmaref{lemma:middleandlowarev} $q \in H(\C_{r''+1},v,1)$ and $|H(\C_{r''+1},v,1)| \geq
  n-3f$. Thus, in either case with probability at least $\frac{1}{3}$ it holds that $q \in
  H(\C_{r''+1},v,1)$ and $|H(\C_{r''+1},v,1)| \geq n-3f$.

  Therefore, if at some configuration $\C_{r}$ it holds that $|H(\C_{r},v,1)| \geq
  n-3f$, then any non-faulty node $q$ operating in a configuration
  $\C_{r''}$, $r'' \geq r$ will have $\C_{r''+1}(\myval_q) \in
  H(\C_{r''+1},v,1)$ (with probability $\geq \frac{1}{3}$). Therefore, once $n-2f$ non-faulty nodes perform an
  atomic step, they are all in $H(\C_{r'},v,1)$ with probability at least $\frac{1}{3^{n-2f}}$.
  Thus, $\C_{r'}$ is tight with probability $\geq \frac{1}{3^{n-2f}}$.
\end{proof}

From this point on, the discussion assumes that all configurations
are tight. Therefore, \lineref{15} will never be executed.

\begin{definition}
  Denote by $\V(\C_r)$ the set containing any value $v$ of a non-faulty node $p$, such that $p$ ``helps''
  in the configuration $\C_r$ being tight around $v$.
  Formally, $$\V(\C_r) = \bigcup_{v \ s.t. \ |H(\C_r, v, 1)| \geq n-2f} \{\C_r(\myval_p) | p \in H(\C_r, v, 1) \}\;.$$
\end{definition}

\begin{lemma}\label{lemma:1of3}
 If $k \geq 6$, then  $\V(\C_r)$ is exactly one of the following: $\{v\}$,
  $\{v, \modk{v}{1}\}$ or \\$\{v, \modk{v}{1}, \modk{v}{2}\}$, for some value $v$.
\end{lemma}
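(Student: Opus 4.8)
The plan is to reduce the statement to understanding the set of values around which $\C_r$ is tight, and then read $\V(\C_r)$ off directly from its definition. Write $T$ for the set of values $v$ with $|H(\C_r,v,1)|\ge n-2f$ (so $\C_r$ is tight iff $T\ne\emptyset$). The first step is to show that $|T|\le 2$, and that if $|T|=2$ then $T=\{v,\modk{v}{1}\}$ for some $v$. Fix $v\in T$; for any other $v'\in T$, \lemmaref{lemmaref:twovaluesaretight} gives $v\preceq_1 v'$ or $v'\preceq_1 v$, and since $v'\ne v$ and $k\ge 6$ this forces $v'=\modk{v}{1}$ or $v=\modk{v'}{1}$. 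Hence every element of $T$ is $v$, $\modk{v}{1}$, or the unique $u$ with $\modk{u}{1}=v$, so $|T|\le 3$. If $T$ contained both $\modk{v}{1}$ and $u$, then $H(\C_r,u,1)$ and $H(\C_r,\modk{v}{1},1)$ would be disjoint sets of non-faulty nodes --- a node in the first has $\myval\in\{u,v\}$, a node in the second has $\myval\in\{\modk{v}{1},\modk{v}{2}\}$, and these four values are distinct since $k\ge 6$. Each set has size $\ge n-2f$, so $2(n-2f)\le n-f$, i.e. $n\le 3f$, contradicting $n>6f$. Thus $|T|\le 2$, and any two elements of $T$ are related by $\modk{\cdot}{1}$.

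The second step handles the two surviving cases, using the elementary fact that $p\in H(\C_r,w,1)$ holds exactly when $\C_r(\myval_p)\in\{w,\modk{w}{1}\}$, together with the observation that each $H(\C_r,w,1)$ for $w\in T$ has $\ge n-2f>0$ non-faulty nodes, so $\V(\C_r)\ne\emptyset$. If $T=\{v\}$, then by definition $\V(\C_r)=\{\C_r(\myval_p):p\in H(\C_r,v,1)\}\subseteq\{v,\modk{v}{1}\}$, so $\V(\C_r)$ is $\{v\}$, $\{\modk{v}{1}\}$, or $\{v,\modk{v}{1}\}$ --- each of the stated form. If $T=\{v,\modk{v}{1}\}$, then $\V(\C_r)$ is the union of a subset of $\{v,\modk{v}{1}\}$ (from $H(\C_r,v,1)$) and a subset of $\{\modk{v}{1},\modk{v}{2}\}$ (from $H(\C_r,\modk{v}{1},1)$), hence $\V(\C_r)\subseteq\{v,\modk{v}{1},\modk{v}{2}\}$. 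Moreover $|H(\C_r,v,1)\cap H(\C_r,\modk{v}{1},1)|\ge 2(n-2f)-(n-f)=n-3f>0$ (the same overlap count used elsewhere, cf. \lemmaref{lemma:HvsH}), so some non-faulty $q$ lies in both, forcing $\C_r(\myval_q)\in\{v,\modk{v}{1}\}\cap\{\modk{v}{1},\modk{v}{2}\}=\{\modk{v}{1}\}$; thus $\modk{v}{1}\in\V(\C_r)$. So $\V(\C_r)$ is one of $\{\modk{v}{1}\}$, $\{v,\modk{v}{1}\}$, $\{\modk{v}{1},\modk{v}{2}\}$, $\{v,\modk{v}{1},\modk{v}{2}\}$, each again a block of one, two, or three consecutive values and hence of the claimed form after renaming the base value.

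I expect Step 1 --- the claim that at most two values can be tight at once --- to be the only non-routine point, and it reduces to the disjointness/counting argument above once \lemmaref{lemmaref:twovaluesaretight} is available. Everything else is combining the definition of $\V(\C_r)$ with the characterization of $H(\C_r,w,1)$; the remaining care is purely cosmetic, namely checking that $k\ge 6$ comfortably suffices for all the ``distinct values'' assertions and matching each concrete possibility for $\V(\C_r)$ to one of the three normal forms by choosing the right value to play the role of $v$.
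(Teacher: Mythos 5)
Your proof is correct, and it reorganizes the argument in a way that differs from the paper's while using the same two ingredients (\lemmaref{lemmaref:twovaluesaretight} and the disjointness/counting bound). The paper reasons directly on $\V(\C_r)$: it invokes \lemmaref{lemmaref:twovaluesaretight} to get $|\V(\C_r)| \leq 3$, then shows by contradiction that any two elements of $\V(\C_r)$ are at most $2$ apart (disjointness of the underlying $H$-sets would force $2(n-2f)$ distinct non-faulty nodes), and finally does a case analysis on $|\V(\C_r)| \in \{2,3\}$, ruling out the shape $\{v,\modk{v}{2}\}$ by the same counting. You instead characterize the set $T$ of tight values first --- showing $|T|\le 2$ with $T$ a block --- and then read $\V(\C_r)$ off as the union of the windows $\{w,\modk{w}{1}\}$ for $w\in T$, using the nonempty overlap $|H(\C_r,v,1)\cap H(\C_r,\modk{v}{1},1)|\ge n-3f>0$ to pin down the middle value when $|T|=2$. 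Your decomposition makes the ``why $\V$ is a contiguous block'' structure more visible and avoids the $|\V|=2$ vs.\ $|\V|=3$ case split; the paper's is a touch more direct in that it never introduces $T$ as an explicit object. Both are complete; the choice is largely stylistic.
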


\begin{proof}
From \lemmaref{lemmaref:twovaluesaretight} it follows that
$|\V(\C_r)| \leq 3$. Moreover, if $k \geq 6$ then for any two
values $v,v' \in \V(\C_r)$ it holds that $v \preceq_2 v'$ or $v'
\preceq_2 v$, which is proved by way of contradiction. Assume that neither hold;
notice that $v \in \V(\C_r)$ due to some value $\overline{v}$ such
that $\overline{v} \preceq_1 v$ and $|H(\C_r, \overline{v}, 1)|
\geq n-2f$; for similar reasons $v' \in \V(\C_r)$ due to
$\overline{v}' \preceq_1 v'$. Thus, if neither $v \preceq_2 v'$ or
$v' \preceq_2 v$, we have that $H(\C_r, \overline{v}, 1) \bigcap
H(\C_r, \overline{v}', 1) = \emptyset$, leading to $|H(\C_r,
\overline{v}, 1) \bigcup H(\C_r, \overline{v}', 1)| \geq 2(n-2f) =
2n-4f > n$. From the above discussion, if $|\V(\C_r)| = 3$, it
must be of the form $\V(\C_r) =\{v, \modk{v}{1}, \modk{v}{2}\}$.

If $|\V(\C_r)| = 2$, then $\V(\C_r)$ can either be
$\{v,\modk{v}{1}\}$ or $\{v,\modk{v}{2}\}$. In the second option,
no non-faulty node has a value of $\modk{v}{1}$, that is, $H(\C_r,
v, 1) \bigcap H(\C_r, \modk{v}{2}, 1) = \emptyset$. As before, we
reach a contradiction from $|H(\C_r, v, 1) \bigcup H(\C_r,
\modk{v}{2}, 1)| \geq 2(n-2f) = 2n-4f > n$.

Therefore, $\V(\C_r)$ is exactly one of the following: $\{v\}$,
$\{v, \modk{v}{1}\}$ or $\{v, \modk{v}{1}, \modk{v}{2}\}$.
\end{proof}

\lemmaref{lemma:1of3} leads to defining the ``minimal'' and
``maximal'' values of $\V(\C_r)$ in the following way:
$\V_{min}(\C_r) :=\{v| v \in \V(\C_r) \,\&\, \modk{v}{-1} \notin \V(\C_r)\}$
and $\V_{max}(\C_r) :=\{v| v \in \V(\C_r) \,\&\, \modk{v}{1} \notin
\V(\C_r)\}$. By the above lemma both $\V_{min}(\C_r)$ and
$\V_{max}(\C_r)$ are well defined (for $k \geq 6$).

\begin{lemma}\label{lemma:consecutive}
  Let $k > 6$ and let $\C_r, \C_{r+1}$ be two consecutive configurations. Then, $\V_{min}(\C_r) \preceq_3 \V_{min}(\C_{r+1})$.\ezra{Lemma 10. why $k>6$ and $=<_3$ (perhaps explain to the reader).}
\end{lemma}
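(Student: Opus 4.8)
The plan is to follow the evolution of the window $\V(\cdot)$ across the single atomic step from $\C_r$ to $\C_{r+1}$. Write $v_0:=\V_{min}(\C_r)$, so that by \lemmaref{lemma:1of3} one has $\V(\C_r)\subseteq\{v_0,\modk{v_0}{1},\modk{v_0}{2}\}$, and let $p$ be the node that steps at $\C_r$. If $p$ is faulty the claim is immediate: no non-faulty $\myval$ changes, and every $H(\C_{r+1},v,1)$ (which counts only non-faulty nodes) equals $H(\C_r,v,1)$, so $\V(\C_{r+1})=\V(\C_r)$. So assume $p$ is non-faulty. I would first record the normalization fact that \emph{$\C_r$ is tight around $v_0$ itself}, i.e. $|H(\C_r,v_0,1)|\ge n-2f$: take a non-faulty $q$ and a tight-around value $w$ witnessing $v_0\in\V(\C_r)$ (so $\C_r(\myval_q)=v_0$, $q\in H(\C_r,w,1)$, $|H(\C_r,w,1)|\ge n-2f$); since $w\preceq_1 v_0$, either $w=v_0$ and we are done, or $w=\modk{v_0}{-1}$, in which case no non-faulty node can hold $\modk{v_0}{-1}$ (else $\modk{v_0}{-1}\in\V(\C_r)$, contradicting minimality of $v_0$), hence $H(\C_r,w,1)\subseteq H(\C_r,v_0,1)$ and the bound follows.

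Next I would bound $p$'s new value, re-running the case analysis of \lemmaref{lemma:tightclosure}. Since all configurations are tight, \lineref{15} is never executed, so $p$ updates $\myval_p$ at \lineref{08}, \lineref{10}, or \lineref{14}. In the \lineref{08}/\lineref{10} cases the value $v$ maximised in that line satisfies, via \remarkref{remark:Hvspass} and \lemmaref{lemma:HvsH}, that $|H(\C_r,v,1)|\ge n-2f$ and that some non-faulty member of $H(\C_r,v,1)$ contributes to $\V(\C_r)$; hence $v$ or $\modk{v}{1}$ lies in $\V(\C_r)\subseteq\{v_0,\modk{v_0}{1},\modk{v_0}{2}\}$, and the new value $\modk{v}{1}$ lies in $\{v_0,\modk{v_0}{1},\modk{v_0}{2},\modk{v_0}{3}\}$. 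In the \lineref{14} case, $|H(\C_r,v_0,1)|\ge n-2f\ge n-3f$ and $p$ passes the test of \lineref{11}, so \lemmaref{lemma:middleandlowarev} gives $v_0\preceq_1\C_{r+1}(\myval_p)$. In every case $\C_{r+1}(\myval_p)\in\{v_0,\modk{v_0}{1},\modk{v_0}{2},\modk{v_0}{3}\}$ and, since $k>4$, $\C_{r+1}(\myval_p)\neq\modk{v_0}{-1}$.

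Finally I would control $\V(\C_{r+1})$. By \lemmaref{lemma:tightclosure}, $\C_{r+1}$ is tight; for any value $w$ it is tight around we have $|H(\C_{r+1},w,1)|\ge n-2f$, while $|H(\C_{r+1},v_0,1)|\ge n-2f-1$ since only $p$ could have left $H(\cdot,v_0,1)$, so as $n>6f$ the two sets share a non-faulty node, and \lemmaref{lemma:aheadof} forces $w\in\{\modk{v_0}{-1},v_0,\modk{v_0}{1}\}$. Hence every value occurring in $\V(\C_{r+1})$ lies in $\{\modk{v_0}{-1},v_0,\modk{v_0}{1},\modk{v_0}{2}\}$, and by \lemmaref{lemma:1of3} then $\V_{min}(\C_{r+1})\in\{\modk{v_0}{-1},v_0,\modk{v_0}{1}\}$.

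The one remaining case — and this is the crux of the argument — is to rule out $\V_{min}(\C_{r+1})=\modk{v_0}{-1}$. The danger is precisely that $p$'s move to $v_0$ can push $|H(\C_{r+1},\modk{v_0}{-1},1)|$ up to $n-2f$, so that if some non-faulty node still held $\modk{v_0}{-1}$ then $\modk{v_0}{-1}$ would enter $\V(\C_{r+1})$ and $\V_{min}$ would slide back by one, which for $k>6$ is not $\preceq_3$. I would close this by carrying, along the entire tight suffix, an auxiliary invariant that bounds how far behind $\V_{min}$ a non-faulty node's value may lag — true at the first tight configuration by the way convergence is achieved in \lemmaref{lemma:evntuallconverge}, and maintained using the value bound of the preceding paragraph — so that in particular no non-faulty node sits at $\modk{v_0}{-1}$. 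Then $\V(\C_{r+1})\subseteq\{v_0,\modk{v_0}{1},\modk{v_0}{2}\}$, giving $v_0\preceq_2\V_{min}(\C_{r+1})$ and so $\V_{min}(\C_r)\preceq_3\V_{min}(\C_{r+1})$. The hypothesis $k>6$ enters to keep the five values $\modk{v_0}{-1},\dots,\modk{v_0}{3}$ distinct and all the $\preceq$-relations above unambiguous; handling the backward direction through this invariant, rather than locally at $\C_r\to\C_{r+1}$, is where I expect the real work to lie.
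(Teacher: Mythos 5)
Your decomposition departs from the paper's: the paper works exclusively through $\V_{max}$, showing in each of the three update cases that $\V_{max}(\C_r) \preceq_1 \V_{max}(\C_{r+1})$, and then concludes by arithmetic from $\V_{min}(\C_r) \preceq_2 \V_{max}(\C_r)$ and $\V_{min}(\C_{r+1}) \preceq_2 \V_{max}(\C_{r+1})$. You instead bound $\C_{r+1}(\myval_p)$ directly, then bound the tight-around witnesses at $\C_{r+1}$, and finally case-analyse $\V_{min}(\C_{r+1})$. In doing so you expose the backward-slip case $\V_{min}(\C_{r+1}) = \modk{v_0}{-1}$ as needing its own treatment, and you are right to: the paper's closing inference --- from $\V_{min}(\C_r) \preceq_3 \V_{max}(\C_{r+1})$ together with $\V_{min}(\C_{r+1}) \preceq_2 \V_{max}(\C_{r+1})$ to $\V_{min}(\C_r) \preceq_3 \V_{min}(\C_{r+1})$ --- is not a valid deduction on its own (take $b=\modk{a}{-1}$ and $c=\modk{a}{1}$: then $a \preceq_3 c$ and $b \preceq_2 c$ but $a \not\preceq_3 b$ once $k>4$). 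Your instinct that this is the crux and not a formality is sound, and the paper's own text in fact glosses over the very step you flag.

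The gap is that your proposal does not then close that case. You posit an auxiliary invariant --- roughly, that no non-faulty node holds $\modk{v_0}{-1}$ once the run is tight --- but you neither state it precisely nor prove it; you only assert that it is ``true at the first tight configuration'' by \lemmaref{lemma:evntuallconverge} and then ``maintained.'' That is far from automatic: a non-faulty node can perfectly well hold a value one below the current $\V_{min}$ if it held the old minimum and has not since been scheduled, and a subsequent \lineref{14} step by another node can then push $|H(\cdot,\modk{v_0}{-1},1)|$ up to $n-2f$ --- exactly the scenario you describe, and one that is realisable in a fair (non--en-masse) tight run. Ruling it out therefore requires a genuine inductive argument along the tight suffix that tracks which values non-faulty nodes may still carry, and that argument almost certainly must lean on the en masse property (to bound how long a straggler can stay unscheduled before the tight window could sink back onto it), a hypothesis your sketch never invokes and which does not appear in the \lineref{08}/\lineref{10}/\lineref{14} case analysis either. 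As written, the proposal names the right obstruction and then leaves open the one claim that would dispose of it.
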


\begin{proof}
  Let $p$ be the node that performs an atomic step between $\C_r$
  and $\C_{r+1}$. If $p$ is faulty, then its update of $\myval_p$
  does not affect the value of $\V_{min}(\C_{r+1})$, and we have that
  $\V_{min}(\C_r)=\V_{min}(\C_{r+1})$, which means that
  $\V_{min}(\C_r) \preceq_3 \V_{min}(\C_{r+1})$.

  The rest of the proof assumes $p$ is non-faulty. $p$ updates
  $\myval_p$ due to \lineref{08}, \lineref{10} or \lineref{14}. If
  $p$ performs \lineref{08} then $\myval_p$ is updated to $\modk{v}{1}$, where $v=\max
  \{\pass_p(0, n-f)\}$. By definition, $v \in \V(\C_r)$; thus,
  $\V_{max}(\C_r) \preceq_1 \V_{max}(\C_{r+1})$.

  Similarly, if $p$ performs \lineref{10}, then $\myval_p$ is updated to $\modk{v}{1}$, where $v=\max
  \{\pass_p(1, n-f)\}$. Again, by definition $v \in \V(\C_r)$; thus,
  $\V_{max}(\C_r) \preceq_1 \V_{max}(\C_{r+1})$.

  Consider $p$ performs \lineref{14}. By definition, $|H(\C_r, \V_{min}(\C_r), 1)| \geq
  n-3f$; thus, by \lemmaref{lemma:middleandlowarev} $\V_{min}(\C_r) \preceq_1 (\modk{low}{relative\_median})$. Thus,
  $\V_{max}(\C_r) \preceq_1 \V_{max}(\C_{r+1})$.

  In all 3 scenarios it was shown that $\V_{max}(\C_r) \preceq_1
  \V_{max}(\C_{r+1})$. However, $|\V(\C_r)| \leq 3$ (see
  \lemmaref{lemma:1of3}) and thus $\V_{min}(\C_{r}) \preceq_2
  \V_{max}(\C_{r})$. Therefore, $\V_{min}(\C_{r}) \preceq_3
  \V_{max}(\C_{r+1})$. Since $\V_{min}(\C_{r+1}) \preceq_2
  \V_{max}(\C_{r+1})$, we have that $\V_{min}(\C_{r}) \preceq_3
  \V_{min}(\C_{r+1})$; as required.
\end{proof}

\begin{lemma} \label{lemma:nextvalinV}
  If a non-faulty node $p$ performs an atomic step between $\C_r, \C_{r+1}$ then
  $\C_{r+1}(\myval_p) \in \V(\C_{r+1})$.
\end{lemma}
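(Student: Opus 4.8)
The plan is to reuse, almost verbatim, the case analysis already carried out in the proof of \lemmaref{lemma:tightclosure}. Since at this point we are assuming that every configuration is tight, $p$ cannot execute \lineref{15} (as argued just before the definition of $\V$), so $\C_{r+1}(\myval_p)$ is produced by exactly one of \lineref{08}, \lineref{10} or \lineref{14}. In each of these branches the proof of \lemmaref{lemma:tightclosure} already produces a value $v$ with $|H(\C_{r+1}, v, 1)| \geq n-2f$; I would simply point out that the same argument also shows $p \in H(\C_{r+1}, v, 1)$, i.e. $v \preceq_1 \C_{r+1}(\myval_p)$, and then invoke the definition of $\V(\C_{r+1})$.

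Concretely: if $p$ performs \lineref{08} then $\C_r(\pass_p(0,n-f)) = \{v\}$ for a single $v$ and $\C_{r+1}(\myval_p) = \modk{v}{1}$, so trivially $v \preceq_1 \C_{r+1}(\myval_p)$; if $p$ performs \lineref{10} then, setting $v := \max\{\pass_p(1,n-f)\}$, again $\C_{r+1}(\myval_p) = \modk{v}{1}$ and $v \preceq_1 \C_{r+1}(\myval_p)$; and if $p$ performs \lineref{14} then, since $\C_r$ is tight around some value $v$ we have $|H(\C_r, v, 1)| \geq n-2f \geq n-3f$, so \lemmaref{lemma:middleandlowarev} yields $v \preceq_1 (\modk{low}{relative\_median}) = \C_{r+1}(\myval_p)$. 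In all three cases $p \in H(\C_{r+1}, v, 1)$ while simultaneously $|H(\C_{r+1}, v, 1)| \geq n-2f$ (this last inequality being exactly what \lemmaref{lemma:tightclosure} establishes), and hence $\C_{r+1}(\myval_p) \in \V(\C_{r+1})$ by the definition of $\V$.

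I do not expect any genuine obstacle here: the statement is essentially a bookkeeping corollary of \lemmaref{lemma:tightclosure}. The only point that deserves a line of care is ensuring that, in each branch, the particular value $v$ around which tightness of $\C_{r+1}$ is certified is also a value that $p$'s freshly written $\myval_p$ is at most $1$ ahead of; this is immediate for \lineref{08} and \lineref{10}, and for \lineref{14} it is precisely the content of \lemmaref{lemma:middleandlowarev} (applied with the value witnessing that $\C_r$ is tight).
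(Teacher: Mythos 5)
Your proposal is correct and follows essentially the same route as the paper's own proof: the paper also splits into the \lineref{08}/\lineref{10} case and the \lineref{14} case, derives $|H(\C_{r+1},v,1)\bigr| \geq n-2f$ for the relevant $v$ via \remarkref{remark:Hvspass} and \lemmaref{lemma:HvsH}, and invokes \lemmaref{lemma:middleandlowarev} with the tightness witness for the \lineref{14} branch. The only cosmetic difference is that you cite the internal reasoning of \lemmaref{lemma:tightclosure} rather than re-deriving it inline, which is a fair shortcut since that proof does establish tightness around the very same $v$ that $p$'s freshly written $\myval_p$ is at most one ahead of.
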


\begin{proof}
  If $p$ performs \lineref{08} or \lineref{10} then
  $\C_{r+1}(\myval_p)= \modk{v}{1}$ for some $v \in
  \C_r(\pass_p(1, n-f))$. By \remarkref{remark:Hvspass} and \lemmaref{lemma:HvsH},
  $|H(\C_r, v, 1)| \geq n-2f$. Therefore, $p \in H(\C_{r+1}, v, 1)$ and $|H(\C_{r+1}, v, 1)| \geq
  n-2f$, which means that $\C_{r+1}(\myval_p) \in
  \V(\C_{r+1})$.

  Consider $p$ performing \lineref{14}. Since $\C_r$ is
  tight there is some value $v$ such that $|H(\C_r, v, 1)| \geq
  n-2f$. By \lemmaref{lemma:middleandlowarev} we have that $v \preceq_1 \C_{r+1}(\myval_p)$. Thus,
  $p \in H(\C_{r+1}, v, 1)$ and  $|H(\C_{r+1}, v, 1)| \geq
  n-2f$, which means that $\C_{r+1}(\myval_p) \in
  \V(\C_{r+1})$.
\end{proof}

\begin{lemma}\label{lemma:livlyness}
  Starting from a tight  configuration $\C'$, within 4 rounds there are two consecutive configurations $\C_r, \C_{r+1}$ for which $\V_{min}(\C_r) \neq \V_{min}(\C_{r+1})$.
\end{lemma}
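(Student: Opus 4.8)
The plan is to argue by contradiction. Suppose $\V_{min}$ is constant over the $4$ rounds following $\C'$. By \lemmaref{lemma:tightclosure} every configuration in this window is tight, and after relabeling the clock values we may assume $\V_{min}(\C_r)=0$ throughout; one then checks that each such $\C_r$ is tight around the value $0$ itself (otherwise $\C_r$ would be tight only around $k-1$, but then every helping non-faulty node sits at value $0$, forcing $H(\C_r,k-1,1)\subseteq H(\C_r,0,1)$), so $\V(\C_r)\subseteq\{0,1,2\}$ by \lemmaref{lemma:1of3}, and since $0\in\V(\C_r)$ there is always at least one non-faulty node whose value in $\C_r$ is exactly $0$. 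I will reach a contradiction by exhibiting, within the $4$ rounds, a configuration with no non-faulty node at value $0$.

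First, by \lemmaref{lemma:nextvalinV}, whenever a non-faulty node steps it lands on a value in $\V\subseteq\{0,1,2\}$, and a node's value changes only on its own steps; hence at the end of round $1$ every non-faulty value already lies in $\{0,1,2\}$. Let $\Phi(\C)$ be the number of non-faulty nodes whose value in $\C$ is $0$. Assuming all non-faulty values lie in $\{0,1,2\}$, I would prove two estimates about a single step out of $\C$: \textbf{(a)} the stepping node can land on value $0$ only if $\Phi(\C)>\tfrac n2-f$ --- a step via \lineref{08} or \lineref{10} would need at least $n-2f$ registers showing $0$, so $\Phi(\C)\ge n-3f$ by \lemmaref{lemma:HvsH}, while a step via \lineref{14} would need $\countv_p(low,\cdot)>\tfrac n2$ (using \lemmaref{lemma:middleandlowarev}, which applies since $\C$ is tight around $0$), a count supported only on faulty registers and value-$0$ non-faulty nodes; \textbf{(b)} the stepping node can land on value $2$ only if $\Phi(\C)\le 2f$, since reaching value $2$ through \lineref{08} or \lineref{10} forces at least $n-2f$ non-faulty nodes into $\{1,2\}$ (via \lemmaref{lemma:HvsH} and \remarkref{remark:Hvspass}), whereas \lineref{14} never outputs $2$ when $\C$ is tight around $0$. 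Since $n>6f$ we have $0<2f<\tfrac n2-f$.

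The argument then splits on the behavior of $\Phi$ during rounds $2$ and $3$. \emph{Case A: $\Phi(\C)\le\tfrac n2-f$ for some configuration $\C$ in rounds $2$--$3$.} By estimate (a), $\Phi$ never increases from $\C$ onward, and any value-$0$ node that steps must leave value $0$ (it cannot even stay there, again by the \lineref{14} estimate); so across the next full round --- which ends no later than round $4$ --- every value-$0$ node departs and none arrives, leaving $\Phi=0$, hence $\V_{min}\neq0$: a contradiction. \emph{Case B: $\Phi(\C)>\tfrac n2-f>2f$ at every configuration in rounds $2$--$3$.} Then estimate (b) forbids any step in those rounds from landing on value $2$, so once every non-faulty node has taken a step in round $2$ --- that is, by the end of round $2$ --- all non-faulty values lie in $\{0,1\}$. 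In round $3$ at least $n-f$ registers then show a value in $\{0,1\}$, so $0\in\pass_p(1,n-f)$; consequently \lineref{14} is never taken, the quantity $\max\{\pass_p(\ell,n-f)\}$ used in \lineref{08}/\lineref{10} lies in $\{0,1\}$ (\lemmaref{lemmaref:twovaluesaretight} excludes $k-1$, the count bound excludes $2$), and value $2$ is forbidden; hence every step in round $3$ outputs exactly value $1$, so $\Phi=0$ at the end of round $3$ --- again contradicting $\V_{min}=0$. In either case a contradiction is reached within $4$ rounds, which proves the lemma.

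I expect the main obstacle to be proving estimates (a) and (b) rigorously, in particular bounding the relative-median output of \lineref{14}: one must show that an honest node cannot be dragged onto value $0$ by faulty registers impersonating $k-1$, $k-2$ or $k-3$ unless a genuine majority of the non-faulty nodes already sits at value $0$ --- this is precisely where the $n>6f$ redundancy, \lemmaref{lemma:middleandlowarev} and \lemmaref{lemma:HvsH} are needed. Keeping track, under the en-masse schedule, of which node executes which line during a round is the remaining bit of bookkeeping.
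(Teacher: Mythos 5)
Your proof is correct and takes essentially the same approach as the paper: argue by contradiction that $\V_{min}$ is constant, track the count of non-faulty nodes at the minimum value (your $\Phi(\C)$ is the paper's $|H(\C,\V,0)|$ after the normalization $\V=0$), split on whether this count is above or below $\tfrac n2-f$, and in each branch drive the count to $0$ within the round budget. Your estimates (a) and (b) are exactly the threshold observations the paper makes inline (the paper even uses the same $\tfrac n2-f$ cut-off); your Case~A is the paper's ``scenario~1'' and your Case~B is its ``scenario~2,'' the only cosmetic difference being that you show in Case~B that every step in round~3 outputs exactly~$1$, whereas the paper shows the values leave $\V$ and then falls back to scenario~1 for the last round. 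One point both arguments gloss over equally: when $\modk{\V}{-1}$ sneaks into $\pass(1,n-f)$ (which can happen if enough faulty registers show $\modk{\V}{-1}$ and $\Phi\ge n-2f$), the literal integer ``$\max$'' in lines~\linenumber{08}/\linenumber{10} wraps around when $\V=0$, and the claim that $\max\{\pass\}$ lies in $\{0,1\}$ (your phrasing) or that the update is ``at least $+1$ from $\V$'' (the paper's phrasing) only goes through if one reads $\max$ modularly as the element $v$ of the two-element set with $\modk{v}{1}\notin\pass$ --- a reading both you and the paper implicitly adopt.
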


\begin{proof}
  Consider configurations $\C'=\C_{r_1}, \C_{r_2}, \C_{r_3}, \C_{r_4}, \C_{r_5}$ such that
  $\C_{r_{i+1}}$ is one round after $\C_{r_i}$. Let $\C_{r'}, r_1 \leq r' < r_5$ be some configuration,
  and let $p$ be a non-faulty node performing an atomic step on
  $\C_{r'}$.
  If $\V_{min}(\C_{r'}) \neq \V_{min}(\C_{r'+1})$, we are
  done. Otherwise, assume by way of contradiction that for all $r_1 \leq r' < r_5$
  it holds that $\V_{min}(\C_{r'}) =
  \V_{min}(\C_{r'+1})$; denote $\V = \V_{min}(\C_{r_1})$. Therefore, by \lemmaref{lemma:1of3} and
  \lemmaref{lemma:nextvalinV}, $\C_{r'+1}(\myval_p) \in H(\C_{r'+1}, \V,
  2)$. Since all non-faulty nodes have performed an atomic step between $\C_{r_1}$ and $\C_{r_2}$, for any $\C_{r'},
  r_2 \leq r' < r_5$ it holds that $|H(\C_{r'}, \V, 2)| = n-f$. We continue to consider only
  configurations $\C_{r'}$ such that $r_2 \leq r' < r_5$.

  Notice that if $p$ performs \lineref{08} or \lineref{10}
  then $\myval_p$ is updated to be at least ``$+1$'' from $\V$.
  This is because only $\V,\modk{\V}{1},\modk{\V}{2}$ may be in
  $\C_{r'}(\pass_p(1, n-f))$ or $\C_{r'}(\pass_p(0, n-f))$. Thus, taking the maximum
  of these sets and adding ``1'' produces a value that is at least ``$+1$'' from $\V$.

  We divide the proof into two scenarios:
  1) for some configuration $\C_{r'}, r_2 \leq r' \leq r_4$ it holds that
  $|H(\C_{r'},
  \V, 0)| < \frac{n}{2}-f$; 2) for all $\C_{r'}, r_2 \leq r' \leq r_4$ it holds that $|H(\C_{r'}, \V, 0)| \geq
  \frac{n}{2}-f$. Consider the first case, ant let $\C_{r'}$ be some configuration s.t. $|H(\C_{r'},
  \V, 0)| < \frac{n}{2}-f$. Clearly, if $p$
  performs \lineref{08} or \lineref{10} then it updates $\myval_p$
  to be ``greater'' than $\V$. If $p$ performs \lineref{14}, then
  because values that are not $\modk{\V}{1},\modk{\V}{2}$ can appear at most $\frac{n}{2}$ times, $p$ must
  update $\myval_p$ to be ``greater'' than $\V$. Therefore, if $|H(\C_{r'},
  \V, 0)| < \frac{n}{2}-f$, then also $|H(\C_{r'+1},
  \V, 0)| < \frac{n}{2}-f$. Moreover,
  $\C_{r'+1}(\myval_p) \in \{\modk{\V}{1},\modk{\V}{2}\}$.

  Thus, if for some configuration $\C_{r'}, r_2 \leq r' \leq r_4$ it holds that $|H(\C_{r'},
  \V, 0)| < \frac{n}{2}-f$, then starting from $\C_{r_5}$ (at least one round after $\C_{r'}$),
  no non-faulty node has $\myval$ equal to $\V$,
  which means that $\V_{min}(\C_{r_5}) \neq
  \V=\V_{min}(\C_{r_1}$).

  We continue under the assumption that $|H(\C_{r'}, \V, 0)| \geq
  \frac{n}{2}-f$, for all $r_2 \leq r' \leq r_4$. Recall that all non-faulty nodes have values
  from the set $\{\V, \modk{\V}{1}, \modk{\V}{2}\}$. Thus,
  $|H(\C_{r'}, \modk{\V}{1}, 1)| \leq \frac{n}{2}$. Therefore, if
  $p$ passes \lineref{08} or \lineref{10} then $\C_{r'}(\pass_p(0,n-f))$
  and
  $\C_{r'}(\pass_p(1,n-f))$ do not contain $\modk{\V}{1}, \modk{\V}{2}$; which means they may contain $\modk{\V}{-1}$ or $\V$.
  Thus, $p$ updates $\myval_p$ to $\V$ or $\modk{\V}{1}$. On the other hand, if $p$
  performs \lineref{14} it updates $\myval_p$ to be either $\V$ or
  $\modk{\V}{1}$ (recall that $\V_{min}(\C_{r'}) = \V$ which means that $|H(\C_{r'}, \V, 1)| \geq n-2f > \frac{n}{2}$). Thus, in all cases, $p$ updates $\myval_p$ to be
  either $\V$ or $\modk{\V}{1}$. Therefore, $|H(\C_{r_3}, \V, 1)| = n-f$, and $|H(\C_{r'}, \V, 1)| = n-f$ for all $r_3 \leq r' \leq r_4$.

  Thus, any
  non-faulty node performing an atomic step on configuration $C_{r'}, r_3 \leq r' \leq r_4$, either passes the
  condition of \lineref{08} or \lineref{10}. In both cases, it
  updates $\myval_p$ to be ``greater'' than $\V$. Thus, starting from $\C_{r_4}$
  it holds that $|H(\C_{r'}, \V, 0)| <
  \frac{n}{2}-f$. And we are back to the previous case, in which
  we have shown that $\V_{min}$ must change within 1 round. Thus,
  for some configuration $\C_{r'}, r_4 \leq r' < r_5$ it holds
  that $\V_{min}(\C_{r'}) \neq \V$. In other words, within 4 rounds
  there is some configuration $\C_r$ such that $\V_{min}(\C_r)
  \neq \V_{min}(\C_{r+1})$.
\end{proof}
} % hideForShortVersion

Following is the main result of the paper, which is shown to be true assuming that the runs are en masse. In the following section en masse runs are constructed from fair runs. Thus, the theorem can be updated to only require that the run is fair.

\begin{theorem}\label{thm:5clocksync}
  \clockAlg solves the $5$-clock-synchronization problem within expected $O(3^{n-2f})$ rounds, for any en masse run and wrap-around value greater than 6 (\ie $k > 6$).
\end{theorem}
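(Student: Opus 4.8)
The plan is to let the clock-function $\F$ be the body of \clockAlg itself: $\F_q(\C)$ is the value $\myval_q$ that $q$ computes — and then writes into all of its registers — during an atomic step taken at $\C$, since this is the only output the algorithm ever produces. The proof then has a probabilistic ``convergence'' part and a deterministic ``invariant'' part. For convergence, split the run into consecutive disjoint pairs of rounds $(1,2),(3,4),(5,6),\dots$; by \lemmaref{lemma:evntuallconverge} each pair ends in a tight configuration with probability at least $3^{-(n-2f)}$, and this bound does not depend on the configuration at which the pair begins (the coins tossed in \lineref{15} during the pair are fresh). By \lemmaref{lemma:tightclosure} tightness is permanent once attained, so the number of round-pairs needed is stochastically dominated by a geometric variable with success probability $3^{-(n-2f)}$; hence a permanently tight configuration is reached within an expected $O(3^{n-2f})$ rounds. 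Let $\T'$ be the suffix of the run starting at this first tight configuration; every configuration of $\T'$ is tight, and by \lemmaref{lemma:1of3} (here $k>6$ is used) each configuration $\C_r$ of $\T'$ has a single minimal helping value $\V_{min}(\C_r)$, which I designate as the defined value of $\C_r$.

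Next I argue $\T'$ is $5$-clock-synchronized with respect to $\F$. For the $5$-well-definedness of a configuration $\C_r$ of $\T'$ in the sense of \definitionref{def:welldefined}, write $v:=\V_{min}(\C_r)$. Since $\C_r$ is tight around $v$, $|H(\C_r,v,1)|\ge n-2f$, and an adversarial move $\C'$ touches only faulty nodes' states and faulty registers, so $|H(\C',v,1)|\ge n-2f$ still. Thus for any non-faulty $p$ executing $\F$ at $\C'$, \lemmaref{lemma:HvsH} and \remarkref{remark:Hvspass} give $v\in\C'(\pass_p(1,n-2f))$, so $p$ passes the test of \lineref{11} and updates $\myval_p$ via \lineref{08}, \lineref{10} or \lineref{14} — never the random \lineref{15}. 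A counting argument that rests on $n>6f$ (a set of $\ge n-f$ registers and a set of $\ge n-2f$ registers in a universe of $n$ must meet, since $2n-3f>n$) confines the value selected in \lineref{08} or \lineref{10} to within one step of $v$, while for \lineref{14} \lemmaref{lemma:middleandlowarev} gives $v\preceq_1\F_p(\C')$; in every case $\F_p(\C')\in\{v,\modk{v}{1},\modk{v}{2}\}$, hence $v\preceq_2\F_p(\C')$ and a fortiori $v\preceq_5\F_p(\C')$, so $v$ is a defined value of $\C_r$. For the second requirement of \definitionref{def:runwelldefined}, \lemmaref{lemma:consecutive} gives $\V_{min}(\C_r)\preceq_3\V_{min}(\C_{r+1})$ at every transition, and $3\le 5$; so $\T'$ is $5$-well-defined. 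Finally, \definitionref{def:clock-sync} asks that the defined values change infinitely often: \lemmaref{lemma:livlyness} gives a transition with $\V_{min}(\C_r)\ne\V_{min}(\C_{r+1})$ within every $4$ rounds, and since the run is fair it has infinitely many rounds, so this recurs forever. Hence $\T'$ is $5$-clock-synchronized w.r.t. $\F$; as $\F$ is fixed, \clockAlg solves the $5$-clock-synchronization problem, and the convergence estimate supplies the expected $O(3^{n-2f})$ rounds.

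The main obstacle is the window/constant bookkeeping in the well-definedness step: one must prevent the adversary from driving some $\F_p(\C')$ outside the three-value band $\{v,\modk{v}{1},\modk{v}{2}\}$ around $v=\V_{min}(\C_r)$ — this is exactly where \lemmaref{lemma:middleandlowarev}, \lemmaref{lemma:1of3} and the $n>6f$ counting enter essentially, and where the hypothesis $k>6$ is needed (it is assumed by those structural lemmas) — and one must then thread the per-transition drift of \lemmaref{lemma:consecutive} together with that band so that the total slack is still at most $5$. Care is also needed in the bookkeeping of the definitions: \definitionref{def:welldefined}--\definitionref{def:clock-sync} should be read as fixing one defined value per configuration (namely $\V_{min}(\C_r)$), so that only the drift of that distinguished value — not of arbitrary witnesses — needs to be controlled.
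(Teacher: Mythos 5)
Your proof takes essentially the same approach as the paper's: the same clock-function, the same reduction to eventual and permanent tightness via \lemmaref{lemma:evntuallconverge} and \lemmaref{lemma:tightclosure}, and the same use of $\V_{min}$ as the defined value together with \lemmaref{lemma:1of3}, \lemmaref{lemma:consecutive}, and \lemmaref{lemma:livlyness} for well-definedness and liveness. You spell out the geometric/stochastic-domination bound on the expected convergence time and the per-case analysis of \lineref{08}, \lineref{10}, \lineref{14} (which the paper delegates to \lemmaref{lemma:nextvalinV}) more explicitly, but the route and all essential ingredients coincide.
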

\hideForLongVersion{
%\begin{proof}
%The full proof can be found at
%\linkToFull.
%\end{proof}
} % hide
\hideForShortVersion{
\begin{proof}
Define the clock-function $\F$ executed by non-faulty node $p$
at configuration $\C_r$ to be the value of $\C_{r+1}(\myval_p)$
as updated by \clockAlg when executed as an atomic step. Combining
\lemmaref{lemma:1of3}, \lemmaref{lemma:consecutive} and
\lemmaref{lemma:nextvalinV} shows that a tight configuration is
$5$-well-defined with respect to $\F$; where $\V_{min}(\C_r)$ a defined value at $\C_r$. In addition, these lemmas
show that any fair run $\T$ consisting of only tight
configurations is $5$-well-defined. By \lemmaref{lemma:livlyness},
run $\T$ is also $5$-clock-synchronized.

Given any en masse run $\T'$ and any initial configuration $\C_0$,
\lemmaref{lemma:evntuallconverge} states that with probability $\geq \frac{1}{3^{n-2f}}$ there is some configuration $\C_r \in \T'$ (within two rounds from $\C_0$)
that is tight. By \lemmaref{lemma:tightclosure} every
configuration after $\C_r$ is also tight. Thus, every fair run $\T'$ has a suffix $\T$ that consists of only tight configurations; and this suffix is reached within $O(3^{n-2f})$ rounds in expectation. From the above paragraph, $\T$ is $5$-clock-synchronized.

Thus, \clockAlg solves the $5$-clock-synchronization problem.
\end{proof}
} % hideForShortVersion

\hideForLongVersion {
\begin{remark}
  The requirement that $k > 6$ stems from the analysis of the relative median and the different
  updates performed in \clockAlg. Due to lack of space, we do not go into details.
  Full details are available in \cite{FullAsyncClock}.
\end{remark}%\dd{should we refer to the full paper for that?}
} % hideForLongVersion

\section{Ensuring En Masse Runs}\label{sec:enmasse}
Our goal is to ensure that if a non-faulty node $p$ performs a step, at least $n-2f$ non-faulty nodes have performed a step since $p$'s last step. That is, given an algorithm $\A$ we want to ensure that if some non-faulty node performs two steps of $\A$ then there are at least $n-2f$ different non-faulty nodes that also perform steps of $\A$. To ensure this, we present an algorithm \enmasse that ensures that a specific action, denoted ``act'', is executed twice by the same non-faulty node $p$ only if there are at least $n-4f$ other non-faulty nodes that have also executed ``act''. By setting ``act'' to execute an atomic step of $\A$, we achieve the required goal. \Ie \clockAlg will
 be executed entirely every time ``act'' appears in \enmasse.

 As the algorithm we present ensures only $n-4f$ nodes execute ``act'' in between two ``acts'' of every non-faulty node, we must reduce the \byzantine tolerance by half ($n > 12f$) to use \enmasse as a subcomponent of \clockAlg.
That is, \clockAlg requires a threshold of $\frac{2}{3}n$ non-faulty nodes ($n-2f$ threshold for $n > 6f$); \enmasse ensures a threshold of $n-4f$. Therefore, by reducing the fault tolerance to $n > 12f$ we ensure that $n-4f > \frac{2}{3}n$, as required by \clockAlg.

Our solution borrows many ideas from \cite{citeulike:3995616}. Due to our model's atomicity assumptions, each node can read all registers and write to all registers in a single atomic step. Thus, the problems that \cite{citeulike:3995616} encounters do not exist in the current paper at all. However, in the current model there are additional faults (\byzantine and self-stabilizing) which do not exists in \cite{citeulike:3995616}. Interestingly, the same ideas used in \cite{citeulike:3995616} can be adapted to the self-stabilizing \byzantine tolerant setting.

For each node $p$, there is a set of labels $Labels_p$  associated with $p$. In addition, each node $p$ has
a variable $label_p$ from the set $Labels_p$; Also, $p$ has an ordering vector $order_p$, of length $|Labels_p|$, which induces an order on the labels in $Labels_p$. Lastly, each node $p$ has a time-stamp $time_p$, which is a vector of $n$ entries, consisting of a single label $time_p[q] \in Labels_q$ for each node $q$.

\begin{definition}
  A label $b$ is of {\bf type} $p$ if $b \in Label_p$.
\end{definition}

\begin{definition}
  Two labels $b,c$ of type $p$ are compared according to $order_p$, where $b<_pc$ if $b$ appears before $c$ in the vector $order_p$. The inequalities $\leq_p$, $>_p,\geq_p,=_p$ are similarly defined.
\end{definition}

\begin{definition}
Given two time-stamps $time_p, time_q$, and a set of nodes $I$, we say that
$time_p >_I time_q$ if $p,q \in I$ and for every entry $i \in I$, $time_p[i] \geq_i time_q[i]$, $time_p[q]=_q time_q[q]$ and $time_p[p] >_p time_q[p]$.
\end{definition}

To simplify notations, when it is clear from the context, we write $p >_I q$ instead of $time_p >_I time_q$. That is, when comparing nodes (according to $>_I$), we actually compare the nodes' time stamps.
\begin{definition}
A set $I$ of nodes is {\bf comparable} if for any $p,q \in I$
either $p >_I q$ or $q >_I p$.
\end{definition}

\begin{lemma}\label{lemma:lemmacomp}
  If $I$ is a comparable set, and $p,q,w \in I$, and $p >_I q$, $q >_I w$ then $p >_I w$.
\end{lemma}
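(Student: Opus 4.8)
The plan is to exploit the comparability of $I$ directly instead of checking the three clauses in the definition of $>_I$ one at a time. Since $p,w\in I$ and $I$ is comparable, we know that either $p>_I w$ or $w>_I p$; so it suffices to rule out the second alternative. I would therefore argue by contradiction: assume $p>_I q$, $q>_I w$, and also $w>_I p$, and derive a contradiction, which leaves $p>_I w$ as the only possibility.

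First I would extract from each of the three relations only what is needed, namely the clauses concerning the two coordinates $p$ and $w$ (both of which lie in $I$, so the ``for every $i\in I$'' clauses apply to $i=p$ and $i=w$). From $p>_I q$ we get $time_p[p] >_p time_q[p]$ and $time_p[w] \geq_w time_q[w]$. From $q>_I w$ we get $time_q[w] =_w time_w[w]$ and $time_q[p] \geq_p time_w[p]$. From the assumed $w>_I p$ we get $time_w[w] >_w time_p[w]$ and $time_w[p] =_p time_p[p]$. The only structural fact used is that $order_i$ induces a total (hence transitive) order on $Labels_i$, so each $\leq_i$ is transitive and composes with $<_i$.

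Now I would simply chain these along one coordinate. Along $w$: $time_p[w] \geq_w time_q[w] =_w time_w[w]$, which contradicts $time_w[w] >_w time_p[w]$. (Equivalently, along $p$: $time_p[p] >_p time_q[p] \geq_p time_w[p]$ contradicts $time_w[p] =_p time_p[p]$.) Either contradiction shows $w>_I p$ is impossible, and comparability then forces $p>_I w$, as required.

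There is no genuinely hard step here; the only point requiring care is the bookkeeping of which node plays the role of ``$a$'' and which of ``$b$'' in $time_a >_I time_b$ for each of the three relations, so that the equality clauses and the strict clauses land on the correct coordinates. I should also note why I route the argument through comparability rather than verifying the clauses of $p>_I w$ head-on: the direct route stalls at the clause $time_p[w] =_w time_w[w]$, since the hypotheses only yield $time_p[w] \geq_w time_w[w]$, and upgrading this to equality appears to need comparability anyway — so the contradiction argument above is the cleaner packaging.
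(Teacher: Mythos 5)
Your proof is correct and takes essentially the same route as the paper's: appeal to comparability to reduce to ruling out $w >_I p$, then derive the contradiction by chaining $time_p[w] \geq_w time_q[w] =_w time_w[w]$ against $time_w[w] >_w time_p[w]$. The alternative chain along the $p$-coordinate and the closing remark about why direct verification stalls are accurate extras, but the core argument matches the paper.
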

\hideForShortVersion{
\begin{proof}
  Since $I$ is comparable, either $p >_I w$ or $w >_I p$. Suppose by way of contradiction that $w >_I p$, thus $time_p[w] <_w time_w[w]$.
  However, since $p >_I q$ we have that $time_p[w] \geq_w time_q[w]$, and since $q >_I w$ we have that $time_q[w] =_w time_w[w]$.
  Thus, $time_p[w] \geq_w time_w[w]$, contradicting $time_p[w] <_w time_w[w]$. Therefore, it is not true that $w >_I p$, leaving only one other option: $p >_I w$.
\end{proof}
} % hideForShortVersion

\hideForShortVersion{
\begin{lemma}\label{lem:pqrelate}
Let $I, I'$ be comparable sets, then $I \cap I'$ is a comparable set. Moreover,
for $p,q \in I \cap I'$,
$p <_{I \cap I'} q$ iff $p <_I q$.
\end{lemma}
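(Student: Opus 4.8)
The plan is to reduce both assertions to the single equivalence: for $p,q \in I\cap I'$, one has $p >_{I\cap I'} q$ if and only if $p >_I q$. Indeed, reading $a <_J b$ as the shorthand for $b >_J a$, the ``moreover'' clause is exactly this equivalence (with the roles of $p,q$ swapped), and comparability of $I\cap I'$ will drop out of the easy direction of it. The degenerate case $p=q$ is trivial: both $p >_{I\cap I'} q$ and $p >_I q$ fail, since the clause $time_p[p] >_p time_q[p]$ in the definition of $>_J$ cannot hold when $p=q$ ($<_p$ being irreflexive); so from now on I assume $p\neq q$.

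First I would establish the easy direction, $p >_I q \Rightarrow p >_{I\cap I'} q$, which is pure restriction of the coordinate quantifier to a subset. Unwinding the definition of $>_{I\cap I'}$: the requirement $p,q\in I\cap I'$ is given; the requirement ``$time_p[i] \geq_i time_q[i]$ for every $i\in I\cap I'$'' follows from the corresponding statement for all $i\in I$ (a clause of $p >_I q$) together with $I\cap I' \subseteq I$; and the remaining two clauses $time_p[q] =_q time_q[q]$ and $time_p[p] >_p time_q[p]$ are literally part of $p >_I q$. With this implication in hand, comparability of $I\cap I'$ is immediate: for distinct $p,q\in I\cap I'$, comparability of $I$ yields $p >_I q$ or $q >_I p$, and the implication just proved transports either alternative down to $I\cap I'$.

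For the converse, $p >_{I\cap I'} q \Rightarrow p >_I q$, I would use comparability of $I$ (a hypothesis, so nothing new is assumed): since $p,q\in I$ are distinct, either $p >_I q$ or $q >_I p$. The second alternative is impossible, because $q >_I p$ forces $time_q[p] >_p time_p[p]$ while $p >_{I\cap I'} q$ forces $time_p[p] >_p time_q[p]$, and $<_p$, being the strict order induced by $order_p$, is asymmetric. Hence $p >_I q$, which closes the equivalence. I do not anticipate a genuine obstacle: the proof is essentially bookkeeping, and the only point worth care is the ordering of the three steps (easy direction, then comparability of $I\cap I'$, then the reverse direction) together with the observation that the asymmetry argument turns precisely on the ``directed'' clause $time_p[p] >_p time_q[p]$ of the definition of $>_J$ — the one clause that cannot be satisfied simultaneously by $p >_{I\cap I'} q$ and by $q >_I p$.
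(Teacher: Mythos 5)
Your proof is correct in structure, but there is a small slip in the reverse direction worth flagging. Recall the definition: $a >_J b$ requires (among other things) $time_a[b] =_b time_b[b]$ and $time_a[a] >_a time_b[a]$ — the strict ``$>$'' lives at coordinate $a$ and the ``$=$'' at coordinate $b$. So $q >_I p$ gives $time_q[p] =_p time_p[p]$, \emph{not} $time_q[p] >_p time_p[p]$ as you wrote (the strict inequality from $q >_I p$ is at coordinate $q$, namely $time_q[q] >_q time_p[q]$). Fortunately the argument survives: the clause $time_q[p] =_p time_p[p]$ still contradicts $time_p[p] >_p time_q[p]$ from $p >_{I\cap I'} q$, since $=_p$ and $>_p$ are mutually exclusive. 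So the contradiction is between an equality and a strict inequality at coordinate $p$, not between two strict inequalities as you state; correct the clause and the proof closes.

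Beyond that detail, your route is genuinely a bit different from the paper's, and cleaner. The paper first shows that $I$ and $I'$ order every pair $p,q \in I\cap I'$ the same way — this is where it invokes the comparability of \emph{both} $I$ and $I'$, deriving a contradiction between the $=_q$ clause of $p >_{I'} q$ and the $>_q$ clause of $q >_I p$ — and then argues the common order passes to the intersection. You instead observe that the forward implication $p >_I q \Rightarrow p >_{I\cap I'} q$ is pure restriction of the index quantifier, prove comparability of $I\cap I'$ from that and from comparability of $I$ alone, and then get the converse from asymmetry. Notice that your argument never uses comparability of $I'$ at all: what you actually prove is the stronger statement that any subset $J \subseteq I$ of a comparable set $I$ (with the same time-stamps) is itself comparable with agreeing order, and $I\cap I'$ is just the instance $J = I\cap I'$. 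That extra modularity is a small but real gain over the paper's two-set bookkeeping.
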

\ezra{for final version, check if this proof can be removed}
\begin{proof}
  Let $I, I'$ be comparable sets, and let $p, q \in I \cap I'$. Since $p,q \in I$ and $I$ is comparable, either $p<_I q$ or $q <_I p$, similarly, either $p <_{I'}q$ or $q <_{I'} p$. Suppose by way of contradiction that $p<_I q$ and $q <_{I'} p$. Due to $p<_I q$ it holds that $time_q[q] >_q time_p[q]$ and due to $q <_{I'} p$ it holds that $time_q[q] =_q time_p[q]$; leading to a contradiction. Thus, either $p<_I q$ and $p<_{I'} q$ or $p>_I q$ and $p>_{I'} q$.

  Assume that $p<_I q$ and $p<_{I'} q$. Therefore, for all $i \in I \cup I'$ it holds that
   $time_p[i] \leq_i time_q[i]$, $time_p[p] = time_q[p]$ and $time_p[q] < time_q[q]$.
    Thus, for all $i \in I \cap I'$ it holds that $time_p[i] \leq_i time_q[i]$, and by definition
     we have that $p<_{I \cap I'} q$. Similarly, if $p>_I q$ and $p>_{I'} q$ then $p >_{I \cap I'} q$.

  It was shown that only two options exist: 1) $p<_I q$ and $p<_{I'} q$, 2) $p>_I q$ and $p>_{I'} q$. If option 1 occurs, then $p<_{I \cap I'} q$; if option 2 occurs then $p>_{I \cap I'} q$. Thus, any $p,q \in I \cap I'$ either $p>_{I \cap I'} q$ or $p<_{I \cap I'} q$ holds. \ie $I \cap I'$ is comparable. Moreover, we have shown that $p <_{I \cap I'} q$ iff $p <_I q$, as required.
\end{proof}

\begin{remark}
Notice that proof of the lemma above also implies that $p <_I q$ iff $p <_{I'} q$.
\end{remark}

} %hideForShortVersion

Notice that a comparable set $I$
induces a total order among the elements in $I$, therefore we can refer to the index of an element in $I.$
\begin{definition}
  A node $p \in I$ is said to be the $k$th highest (in $I$) if $\left|\{q \in I | q >_I p\}\right|=k-1$. Let $I_\#(p)=k$ if $p \in I$ is the $k$th highest in $I$.
\end{definition}

The 1st highest in $I$ is the node that is larger than all other nodes. The 2nd highest node in $I$ is the node that has only one node larger than it; (and so on).

\hideForShortVersion{
Informally, we wish to show that given two intersecting comparable sets $I, I'$, if a node $p$ is the $i$th highest item in  $I$, it is at most $i+\ell$ highest in $I'$; where $\ell$ changes according to $I, I'$. The following lemma formally bounds the difference between $I_\#(p)$ and $I'_\#(p)$.

\begin{lemma}\label{lemma:differentIs}
Let $I, I'$ be comparable sets, and denote $\ell = |I| - |I \cap I'|$. If $p \in I \cap I'$ then $I_\#(p) \leq I'_\#(p) + \ell$.
\end{lemma}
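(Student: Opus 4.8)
The plan is to relate the two rank functions $I_\#(p)$ and $I'_\#(p)$ through the common sub-order on $I \cap I'$. By \lemmaref{lem:pqrelate}, the set $I \cap I'$ is comparable and the order it induces agrees with the order induced by $I$ (and by $I'$) on those elements, so for any $p \in I \cap I'$ the quantity $(I \cap I')_\#(p)$ is well defined and consistent whether computed inside $I$ or inside $I'$. The strategy is thus to sandwich both $I_\#(p)$ and $I'_\#(p)$ between $(I \cap I')_\#(p)$ and $(I \cap I')_\#(p) + \ell$, from which the desired inequality follows immediately.

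First I would establish the lower bound $(I \cap I')_\#(p) \le I_\#(p)$: every node $q \in I \cap I'$ with $q >_{I \cap I'} p$ also satisfies $q >_I p$ (since the orders agree by \lemmaref{lem:pqrelate}), so the set of nodes above $p$ in $I \cap I'$ injects into the set of nodes above $p$ in $I$; counting gives $(I \cap I')_\#(p) - 1 \le I_\#(p) - 1$. Symmetrically, $(I \cap I')_\#(p) \le I'_\#(p)$. Next I would establish the upper bound $I'_\#(p) \le (I \cap I')_\#(p) + \ell$: the nodes of $I'$ strictly above $p$ split into those lying in $I \cap I'$ and those lying in $I' \setminus I$; the former number exactly $(I \cap I')_\#(p) - 1$ (again using order-agreement, so a node of $I \cap I'$ is above $p$ in $I'$ iff it is above $p$ in $I \cap I'$), and the latter number at most $|I' \setminus I| = |I'| - |I \cap I'|$. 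Combining, $I'_\#(p) - 1 \le \big((I \cap I')_\#(p) - 1\big) + \big(|I'| - |I \cap I'|\big)$, i.e. $I'_\#(p) \le (I \cap I')_\#(p) + |I'| - |I \cap I'|$.

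Finally I would assemble the pieces. From the lower bound $(I \cap I')_\#(p) \le I_\#(p)$ and the upper bound $I'_\#(p) \le (I \cap I')_\#(p) + |I'| - |I \cap I'|$ we get $I'_\#(p) \le I_\#(p) + |I'| - |I \cap I'|$. At this point I would check the statement: the lemma as written uses $\ell = |I| - |I \cap I'|$, which is the bound going the other direction; the symmetric argument (swapping the roles of $I$ and $I'$) gives $I_\#(p) \le I'_\#(p) + |I| - |I \cap I'| = I'_\#(p) + \ell$, which is exactly the claimed inequality $I_\#(p) \le I'_\#(p) + \ell$. So the clean route is: prove $(I\cap I')_\#(p) \le I'_\#(p)$ (lower bound on the larger index side is not needed) and $I_\#(p) \le (I \cap I')_\#(p) + |I \setminus I'|$, then chain them with $|I \setminus I'| = \ell$.

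The main obstacle I anticipate is purely bookkeeping rather than conceptual: making sure the order-agreement lemma is invoked correctly so that "above $p$ in $I$" genuinely restricts to "above $p$ in $I \cap I'$" on the common elements, and being careful that $I_\#$ counts strictly-greater elements plus one (the off-by-one in the definition $|\{q \in I \mid q >_I p\}| = k-1$). No probabilistic or model-specific machinery is involved; the whole argument is a counting argument about finite totally ordered sets and their intersections, so once the indexing conventions are pinned down the proof is short.
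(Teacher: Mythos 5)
Your final ``clean route'' is exactly the paper's argument: set $A = \{q \in I \cap I' \mid q >_{I\cap I'} p\}$, use \lemmaref{lem:pqrelate} to see that $A$ is contained in both $\{q \in I \mid q >_I p\}$ and $\{q \in I' \mid q >_{I'} p\}$, observe that the elements of $\{q \in I \mid q >_I p\}$ outside $A$ all lie in $I \setminus (I \cap I')$ (of size $\ell$), and chain the two inequalities. Your initial sandwiching detour briefly places the slack on the wrong side (yielding $I'_\#(p) \le I_\#(p) + |I'| - |I\cap I'|$ rather than the claim), but you correctly diagnose and repair this, and the concluded proof is sound and essentially identical to the paper's.
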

\begin{proof}
Let $p \in I \cap I'$. By definition $I_\#(p)=\left|\{q \in I | q >_I p\}\right|+1$ and $I'_\#(p)=\left|\{q \in I' | q >_{I'} p\}\right|+~1$. Therefore, it is enough to show that $|\{q \in I | q >_I p\}| \leq |\{q \in I' | q >_{I'} p\}| + \ell$. Consider the set $A = \{q \in I \cap I' | q >_{I \cap I'} p\}$, be \lemmaref{lem:pqrelate}, it holds that $A \subseteq \{q \in I | q >_I p\}$ and $A \subseteq \{q \in I' | q >_{I'} p\}$. Clearly, $\{q \in I | q >_I p\} - A$ contains only items in $I- I \cap I'$. Thus, $|\{q \in I | q >_I p\} - A| \leq |I| - |I \cap I'|$ and since $A \subseteq \{q \in I | q >_I p\}$ it holds that $|\{q \in I | q >_I p\}| - |A| \leq |I| - |I \cap I'|$. That is,
$|\{q \in I | q >_I p\}| \leq |A| + \ell$. Since, $A \subseteq \{q \in I' | q >_{I'} p\}$ it holds that
$|A| \leq |\{q \in I' | q >_{I'} p\}|$. Thus, $|\{q \in I | q >_I p\}| \leq  |\{q \in I' | q >_{I'} p\}| + \ell$, as required.
\end{proof}

\begin{corollary}\label{cor:differentIs}
  Let $I, I'$ be comparable sets, and denote $\ell' = \max\{|I|,|I'|\} - |I \cap I'|$. If $p \in I \cap I'$ then $I'_\#(p) - \ell' \leq I_\#(p) \leq I'_\#(p) + \ell'$.
\end{corollary}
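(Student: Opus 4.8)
The plan is to derive the corollary directly from \lemmaref{lemma:differentIs} by applying it twice, once in each ordering of the two sets, and then loosening the bounds so that both directions are controlled by the single symmetric quantity $\ell'$.

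First I would apply \lemmaref{lemma:differentIs} as stated: with $\ell = |I| - |I \cap I'|$ and $p \in I \cap I'$ we get $I_\#(p) \leq I'_\#(p) + \ell$. Since $|I| \leq \max\{|I|,|I'|\}$, we have $\ell = |I| - |I \cap I'| \leq \max\{|I|,|I'|\} - |I \cap I'| = \ell'$, hence $I_\#(p) \leq I'_\#(p) + \ell'$, which is the upper bound claimed.

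Next I would apply the same lemma with the roles of $I$ and $I'$ interchanged (note that $I \cap I' = I' \cap I$, so $p$ still lies in the intersection and the sets are still comparable). This yields $I'_\#(p) \leq I_\#(p) + (|I'| - |I \cap I'|) \leq I_\#(p) + \ell'$, using again $|I'| \leq \max\{|I|,|I'|\}$. Rearranging gives $I'_\#(p) - \ell' \leq I_\#(p)$, the lower bound. Combining the two inequalities gives $I'_\#(p) - \ell' \leq I_\#(p) \leq I'_\#(p) + \ell'$, as required.

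There is essentially no obstacle here: the only thing to be careful about is that \lemmaref{lemma:differentIs} is asymmetric in its hypothesis ($\ell$ depends only on $|I|$, not on $\max\{|I|,|I'|\}$), so the corollary's symmetric statement genuinely requires invoking the lemma in both directions and then replacing each $|I|$ or $|I'|$ by their maximum to unify the constant. No further computation is needed.
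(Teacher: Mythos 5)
Your proof is correct and is the intended argument: the paper states this as a corollary with no explicit proof precisely because it follows by applying \lemmaref{lemma:differentIs} in both directions (swapping the roles of $I$ and $I'$) and then replacing $|I| - |I\cap I'|$ and $|I'| - |I\cap I'|$ by the common upper bound $\ell'$. Nothing is missing.
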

}% hideForShortVersion

\subsection{Algorithm \enmasse}
%\dd{the ref to the section we are in is somewhat strange}
This section proves general properties of comparable sets. It discusses ``static'' sets, that do not change over time. The following algorithm considers comparable sets that change from step to step. However, during each atomic step, the comparable sets that are considered do not change, and the claims from the previous section hold.
That is, when reasoning about the progress of the algorithm, the comparable sets that are considered are all ``static''.

In the following algorithm, instead of storing both $label_p$ and $time_p$, each node stores just $time_p$ and the value of $label_p$ is the entry $time_p[p]$. In addition, during each atomic step, the entire algorithm is executed, \ie a node reads all time stamps and all order vectors of other nodes, and can update its own time stamp during an atomic step.

\begin{figure*}[t!]\center

\begin{minipage}{4.8in}
\hrule \hrule \vspace{1.7mm} \footnotesize
%\begin{alltt}
\setlength{\baselineskip}{3.9mm} \noindent Algorithm \enmasse \hfill\textit{/* executed on
node $q$
*/}
 \vspace{1mm} \hrule \hrule
\vspace{1mm}

\linenumber{01:} {\bf do} forever:\\
\\
\makebox[0.93cm]{} \textit{/* read all registers and initialize structures */}\\
\linenumber{02:} \tb {\bf for each} node $p$, read $time_p$ and $order_p$; \\
\linenumber{03:} \tb {\bf set} $\I := \emptyset$; \\
\linenumber{04:} \tb {\bf for each} set $W \subseteq \p$ s.t. $|W| \geq n-f$ and $q \in W$:  \\
\linenumber{05:} \due {\bf construct} $I := \{time_p \ | \ p \in W\}$; \\
\linenumber{06:} \due {\bf if} $W$ is comparable then $\I := \I \cup \{I\}$; \\
\\
\makebox[0.93cm]{} \textit{/* decide whether to execute ``update'' and whether to execute ``act''~*/}\\
\linenumber{07:} \tb {\bf if} for some $I \in \I$, it holds that $I_\#(q) \geq n-3f$ then \\
\linenumber{08:} \due update $time_q, order_q$ and ``act'';\\
\linenumber{09:} \tb {\bf if} $\I = \emptyset$ then update $time_q, order_q$;\\
\linenumber{10:} \tb {\bf write} $time_q$ and $order_q$;\\
\linenumber{11:} {\bf od};
\vspace{1mm}
\hrule
\vspace{1mm}
Updating $time_q$ is done by setting $time_q[p]=label_p$, for every $p \in \p$.\\
Updating $order_q$ consists of changing the order induced by $order_q$ such that $label_q$ is first and for other labels the order is preserved.
\normalsize \vspace{1mm} \hrule\hrule
\end{minipage}

 \caption{A self-stabilizing \byzantine tolerant algorithm ensuring en masse runs.}\label{figure:enmasse}
\end{figure*}

When a node $q$ performs an update, it changes the value of $time_q$ and $order_q$ in the following way: a) $order_q$ is updated such that $time_q[q]$ is larger than any other label in $Labels_q$. b) $time_q[p]$ is set to be $time_p[p],$ for all $p$. Notice that the new $order_q$ does not affect the relative order of labels in $Labels_q$ that are not $time_q[q]$. That is, if $l_1, l_2 \neq time_q[q]$ and $l_1 \leq_q l_2$ before the change of $order_q$, it holds that $l_1 \leq_q l_2$ also after updating the  $order_q$.

Intuitively, the idea of \enmasse is to increase the time stamp of a node $q$ only if $q$ sees that most
of the other nodes are ahead of $q$. When the time stamp is increased, $q$ also performs ``act''. This leads
to the following dynamics: a) If $q$ has performed an ``act'' twice, \ie updated its time stamp twice, then
after the first update, $q$ is ahead of all other nodes. b) However, since $q$ is ahead of all non-faulty nodes, if $q$ updates its time stamp again
it must mean that many nodes have updated their time stamps after $q$'s first update. \ie between two ``act'' of $q$ many other nodes have performed ``act'' as well.

\hideForLongVersion{
In a similar manner to \sectionref{sec:proofMain}, the lemmas and proofs are available in the full version \cite{FullAsyncClock}.
} % hideForShortVersion

We continue with an overview of the proof. First, consider the set of non-faulty nodes, and
consider the set of time stamps of these nodes. The proof shows that if this set is comparable for
some configuration $\C_r$ then it is comparable for any configuration $\C_{r'}$ where $r' > r$.
Second, we consider an arbitrary starting state, and consider the set $Y_r$  containing non-faulty nodes that have updated their time stamp by the end of round $r$. It is shown that if $|Y_r| \geq n-2f$ then  $|Y_{r+1}| \geq n-f$. Moreover, if $|Y_r| < n-2f$ then $|Y_{r+1}| \geq |Y_r| +1$. Thus, we conclude that within $O(n)$ rounds all non-faulty nodes have performed an update.

Once all non-faulty nodes have performed an update since the starting state, it holds that the set of
all non-faulty nodes' time stamps is comparable. Thus, during every round at least $2f$ nodes perform an update (as
they see themselves in the lower $3f$ part of the comparable set). This ensures that within $\frac{n}{2f}$ rounds some node will perform ``act'' twice. That is, there is no deadlock in the \enmasse algorithm. To conclude the
proof, it is shown that when the set of all non-faulty nodes values is comparable and some non-faulty node
performs ``act'' twice, it must be that another $n-4f$ non-faulty nodes have performed ``act'' in between.

\hideForShortVersion{
\begin{definition}
Let $Z$ be a set of non-faulty nodes, and consider an atomic step on configuration $\C_r$. Denote by $\TS_{\C_r}(Z)$ the set of time-stamps of nodes in $Z$, as they are at the beginning of the atomic step. When the configuration $\C_r$ is clear from the context, we simply say ``$Z$ is comparable'', instead of ``$\TS_{\C_r}(Z)$ is comparable''.
\end{definition}

\begin{lemma}\label{lemma:stayscomparable}
Let $Z$ be a set of non-faulty nodes and consider any atomic step on configuration $\C_r$. If $\TS_{\C_r}(Z)$ is comparable, then $\TS_{\C_{r'}}(Z)$ is comparable for any $r' \geq r$.
\end{lemma}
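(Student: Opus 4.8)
The plan is to prove the statement by induction on $r' \geq r$; the only real content is the single-step case, so I would focus on showing: if $\TS_{\C_r}(Z)$ is comparable and a single atomic step transforms $\C_r$ into $\C_{r+1}$, then $\TS_{\C_{r+1}}(Z)$ is comparable. Let $q$ be the (possibly faulty) node performing the step. If $q \notin Z$ or $q$ performs no update (neither \lineref{08} nor \lineref{09} fires, or $q$ is faulty and does not alter a register of a node in $Z$), then the time stamps of all nodes in $Z$ are unchanged, so $\TS_{\C_{r+1}}(Z) = \TS_{\C_r}(Z)$ is still comparable. Since faulty nodes are not in $Z$ by definition, the substantive case is $q \in Z$ (hence $q$ non-faulty) performing an update in \lineref{08} or \lineref{09}.

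So assume $q \in Z$ updates $time_q$ and $order_q$. The key facts about the update are: (a) $order_q$ is changed only so that the new $label_q = time_q[q]$ becomes the maximum in $Labels_q$ under $<_q$, while the relative order of all other labels of type $q$ is preserved; and (b) for every $p$, $time_q[p]$ is set to $time_p[p]$ (the current value at the start of the step). I need to check that for every other $w \in Z$, $q$ and $w$ are still comparable with respect to $\TS_{\C_{r+1}}(Z)$, i.e. either $q >_Z w$ or $w >_Z q$. Fix such a $w$; note $w$'s time stamp is unchanged. The claim is that after the update, $q >_Z w$. Checking the definition of $>_Z$: (i) for every entry $i \in Z$, the new $time_q[i] = time_p[i]$ for the node $p=i$ — wait, more carefully, the new $time_q[i]$ equals the current $label_i = time_i[i]$ (read at the start of the step), and this is $\geq_i time_w[i]$ because $w$'s entry for $i$, namely $time_w[i]$, is some label of type $i$ that $w$ copied from $i$ at an earlier time, and the $order_i$ vector only ever moves $i$'s own current label to the top — so the label $i$ currently holds is $\geq_i$ any label $i$ held before. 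This monotonicity of each node's own label over time (which follows from update rule (a) applied across the run, together with the fact that only $i$ itself can change $order_i$'s placement of $i$'s current label) is the crucial ingredient; (ii) the $w$-entry: new $time_q[w] = time_w[w] =_w time_w[w]$, giving equality as required; (iii) the $q$-entry: new $time_q[q] = label_q$ which by (a) is strictly $>_q$ every other label of type $q$ that has ever appeared, in particular $>_q time_w[q]$. Hence $q >_Z w$, and since $w$ was arbitrary, $\TS_{\C_{r+1}}(Z)$ is comparable.

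The main obstacle I anticipate is pinning down rigorously the monotonicity claim used in (i) and (iii): that the label a node $i$ currently holds is $\geq_i$ (indeed, for a node that has just updated, $>_i$) every label $i$ has held at any earlier configuration of the run, so that when $w$ compares its stale copy $time_w[i]$ against $q$'s freshly-copied $time_i[i]$, the comparison goes the right way. This requires a small separate induction on the run showing that $order_i$ always keeps $i$'s successive labels in increasing order — which is exactly what update rule (a) guarantees at each of $i$'s steps, since it promotes the new label to the top without disturbing the order of the others, and no other node ever modifies $order_i$. Once that monotonicity lemma is in hand, the three checks above are routine, and the outer induction on $r'$ closes immediately: apply the single-step argument repeatedly, using that $Z$ stays comparable at every intermediate configuration.
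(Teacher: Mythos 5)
Your proposal correctly establishes the main claim it sets out to prove — that after a node $q \in Z$ performs an update, $q >_Z w$ holds for every other $w \in Z$ — but this is not enough to conclude comparability, and the missing case is exactly the one requiring the most care. Comparability of $\TS_{\C_{r+1}}(Z)$ also requires that every pair $w_1, w_2 \in Z$ with $w_1, w_2 \neq q$ remains comparable. You note in passing that the time stamps of $w_1, w_2$ are unchanged, but you do not observe that $q$'s update changes $order_q$, and the comparison between $time_{w_1}[q]$ and $time_{w_2}[q]$ is taken \emph{with respect to the new $order_q$}. If that change reshuffled the relative order of old labels of type $q$, comparability among $w_1, w_2$ could break. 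The paper's proof devotes a whole paragraph to this case: it argues that the update to $order_q$ only promotes $label_q$ to the top while preserving the relative order of all other labels of type $q$, so $time_{w_1}[q] \leq_q time_{w_2}[q]$ before the update iff it holds after, and hence $w_1 <_Z w_2$ is preserved. Without this paragraph your inductive step is not closed, and the concluding sentence ``since $w$ was arbitrary, $\TS_{\C_{r+1}}(Z)$ is comparable'' does not follow.

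A secondary issue is that your handling of condition (i) — that the new $time_q[i] = time_i[i]$ is $\geq_i time_w[i]$ — appeals to a global ``monotonicity of labels over the run'' that does not cleanly hold in a self-stabilizing system: $time_w[i]$ in the arbitrary initial configuration need not be a stale copy of any label $i$ actually held, and $order_i$ itself is a moving target, so ``labels only increase'' is not well-defined across changes of $order_i$. Fortunately this detour is unnecessary. The hypothesis that $\TS_{\C_r}(Z)$ is comparable already gives you $time_i[i] \geq_i time_w[i]$ directly: for any $i, w \in Z$, either $i >_Z w$ (giving $time_i[i] >_i time_w[i]$ from the $i$-entry condition) or $w >_Z i$ (giving $time_w[i] =_i time_i[i]$ from the $i$-entry-equality condition). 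The paper's proof implicitly leans on this; restructuring your step (i) to use the comparability hypothesis rather than a run-long monotonicity lemma makes the argument both correct in the self-stabilizing setting and considerably shorter.
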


\begin{proof}
First, notice that whether $\TS_{\C_{r'}}$ is comparable or not depends only on the values of nodes in $Z$ and is not affected by nodes not in $Z$ . Therefore, only changes incurred by nodes in $Z$ matter. Consider the first node $p \in Z$ to perform an update at some configuration $\C_{r''}$, $r''\ge r$. Thus, $p$ sets $time_p[q] = time_q[q]$ for all nodes $q \in Z$ and also $p$ ensures that $time_{p}[p] >_p time_q[p]$ for all nodes $q \in Z, q \neq p$. Thus, $p >_Z q$ for all nodes $q \in Z, q \neq p$.

  Consider two nodes $q_1, q_2 \neq p$ in $Z$. $p$'s update does not change the value of $time_{q_1}[q]$ and $time_{q_2}[q]$ for all $q \in Z, q \neq p$. What about the relative order of $time_{q_1}[p]$ and $time_{q_2}[p]$?   W.l.o.g. $q_1 <_{Z} q_2$ before $p$'s update.
  According to the way $p$ changes $order_{p}$, after $p$'s update
  $time_{q_1}[p] \leq_p time_{q_2}[p]$. Therefore, $q_1 <_Z q_2$.

  Thus, for any pair of nodes $q_1,q_2 \in Z$ either $q_1 <_Z q_2$ or $q_1 <_Z q_2$ after $p$'s update. Repeating the above line of proof for any node in $Z$ inductively proves that $\TS_{\C_{r'}}(Z)$ is comparable.
\end{proof}

Consider the system starts in an arbitrary state. Denote by $U_r$ the set of non-faulty nodes that have not performed any ``update'' by the end of round $r$, and by $Y_r$ the set of non-faulty nodes that have performed ``update'' by the end of round $r$.

\begin{lemma}\label{lemma:YrComparable}
  The set $Y_r$ is comparable during the last configuration of round $r$.
\end{lemma}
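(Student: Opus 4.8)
The plan is to exhibit an explicit total order on $Y_r$ that witnesses comparability, by ordering the nodes according to \emph{when} they last updated. List the nodes of $Y_r$ as $q_1,q_2,\dots,q_m$ according to the time of their last ``update'' step at or before the end of round $r$; since at most one node takes an atomic step at a time, and every node in $Y_r$ has updated at least once, this order is well defined. For each $q_i$ let $\C^{(i)}$ be the configuration at which that last update occurs, so $\C^{(1)}$ precedes $\C^{(2)}$ precedes $\cdots$ precedes $\C^{(m)}$, and all of them are at or before the last configuration $\C_{end}$ of round $r$. Two bookkeeping facts are used throughout. First, $time_{q_i}$ and $order_{q_i}$ are frozen between $\C^{(i)}$ and $\C_{end}$: any intervening step of $q_i$ fails the conditions on \lineref{07} and \lineref{09}, and \lineref{10} merely writes the unchanged values back. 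Second, for any fixed node $w$ the sequence of labels it holds over the run is strictly increasing with respect to $\le_w$ evaluated at $\C_{end}$: each update of $w$ installs a fresh label that dominates the labels currently visible to $w$ (in particular its own previous label), and subsequent updates preserve the relative order among ``old'' labels — this is precisely the property exploited in the proof of \lemmaref{lemma:stayscomparable}. Call this \emph{label-monotonicity}.

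I will then show that, evaluated at $\C_{end}$, $time_{q_j} >_{Y_r} time_{q_i}$ whenever $i<j$, which immediately yields comparability of $Y_r$. Fix $i<j$. Unwinding the update rule at $\C^{(j)}$: for every node $p$, $time_{q_j}[p]$ equals $p$'s label at $\C^{(j)}$, with $time_{q_j}[q_j]$ the fresh label $q_j$ chooses, which is $>_{q_j}$ every type-$q_j$ label appearing in the registers $q_j$ reads at $\C^{(j)}$; similarly $time_{q_i}[p]$ equals $p$'s label at $\C^{(i)}$. Now check the three clauses of $>_{Y_r}$. For an entry $q_k\in Y_r$ with $k\notin\{i,j\}$: since $\C^{(i)}$ precedes $\C^{(j)}$, label-monotonicity gives $time_{q_i}[q_k]\le_{q_k}time_{q_j}[q_k]$. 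For the entry $q_i$: $q_i$ performs no update in $(\C^{(i)},\C_{end}]$, so $q_i$'s label at $\C^{(j)}$ equals $time_{q_i}[q_i]$, hence $time_{q_j}[q_i]=_{q_i}time_{q_i}[q_i]$ (this also settles the required $\ge_{q_i}$ on this entry). For the entry $q_j$: $time_{q_i}[q_j]$ is frozen after $\C^{(i)}$, so this is exactly the value $q_j$ reads from $time_{q_i}$ during its step at $\C^{(j)}$; since $q_j$'s fresh label dominates every type-$q_j$ label it reads, $time_{q_j}[q_j]>_{q_j}time_{q_i}[q_j]$ (which gives both the strict inequality on the $q_j$-coordinate and the $\ge_{q_j}$ requirement). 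Together these are exactly the conditions defining $time_{q_j}>_{Y_r}time_{q_i}$.

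Consequently every pair of nodes in $Y_r$ is related by $>_{Y_r}$ in one direction, so $\TS_{\C_{end}}(Y_r)$ is comparable by definition, which is the statement of the lemma (the cases $|Y_r|\le 1$ being trivial). An equivalent inductive packaging is possible — building $Y_r$ one node at a time and invoking \lemmaref{lemma:stayscomparable} to carry comparability forward — but it runs into the annoyance that adding a node to a comparable set adds a new coordinate to every pairwise comparison, so the direct argument above (which handles all pairs at once) is cleaner.

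The main obstacle is not a deep idea but getting the temporal bookkeeping exactly right: one must be sure that each timestamp entry being compared is frozen over precisely the sub-interval $[\C^{(i)},\C^{(j)}]$, that all comparisons $\le_w$ are taken with respect to the \emph{final} orders $order_w$ so that label-monotonicity (as in \lemmaref{lemma:stayscomparable}) legitimately applies across that window, and that the precise semantics of an update — the refreshed label of $q_j$ strictly dominating every type-$q_j$ label currently visible to $q_j$ — is used to obtain the strict inequality on the $q_j$-coordinate, which is exactly what prevents two already-updated nodes from ending up incomparable.
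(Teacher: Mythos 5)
Your proof is correct, and it takes a genuinely different route from the paper. The paper's proof proceeds by induction over the \emph{entire sequence of updates} in round $r$: it writes $p_1,\dots,p_m$ for the updates in temporal order (with repetitions), sets $A_i$ to be the distinct nodes among $\{p_1,\dots,p_i\}$, and proves by induction that $\TS_{\C_i}(A_i)$ is comparable after each update, invoking \lemmaref{lemma:stayscomparable} in the case $p_{i+1}\in A_i$ and arguing directly that the newcomer slots in above everyone when $p_{i+1}\notin A_i$. That is precisely the ``equivalent inductive packaging'' you sketch and discard at the end. Your argument instead bypasses the induction: you order $Y_r$ once and for all by last-update time, and verify the three clauses of $>_{Y_r}$ directly for every pair, leaning on label-monotonicity with respect to the \emph{final} orders. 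This is cleaner in one respect (all pairs are handled uniformly, avoiding the paper's subtlety about adding a fresh coordinate to each pairwise comparison when a new node enters $A_i$), but it front-loads more into the ``label-monotonicity'' fact: the paper only needs the weaker preservation-of-relative-order property inside \lemmaref{lemma:stayscomparable}, whereas you additionally need that each update's fresh label strictly dominates the node's own previous label, and that this holds uniformly when evaluated against $order_w$ at $\C_{end}$. You state this as a bookkeeping fact and attribute it (a bit generously) to \lemmaref{lemma:stayscomparable}; a careful write-up would spell out that it follows from the combination of (i) fresh label made maximal at each update and (ii) relative order of old labels preserved. With that caveat made explicit, the argument is sound and the temporal bookkeeping (freezing of $time_{q_i},order_{q_i}$ on $[\C^{(i)},\C_{end}]$; what exactly $q_j$ reads at $\C^{(j)}$) is handled correctly.
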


\begin{proof}
Consider the order at which non-faulty nodes performed updates by the end of round $r
$: let $p_1$ denote the first node to perform an update, $p_2$ the second, $\dots, p_m$
be the $m$th (and last) non-faulty node to perform an update. A node may appear more
than once in that order, for example, if some node was the 2nd and 5th to perform an
update, $p_2 = p_5$. Denote by $A_i$ the set containing all non-faulty nodes in $\{p_1,
\dots, p_i\}$, where $p_i$ is the $i$th node performing an update. Clearly, $A_m = Y_r$,
and it is left to
show that the set of time-stamps of nodes from $A_m$ is comparable. Notice that if $p_{i
+1} \in A_i$ then $A_i = A_{i+1}$, that is, if the node performing the ``next'' update has
already performed an update, $A_i=A_{i+1}$.

  We show something stronger: for all $0 \leq i \leq m$ let $\C_i$ be the configuration after $p_i$ performs an atomic step, then $\TS_{\C_i}(A_i)$ is comparable. The proof is by induction on $i$.
  For $i=0,1$, clearly $A_i$ is comparable. We are left to show that if $A_i$ is comparable so is $A_{i+1}$.

If $p_{i+1}\in A_i$ (\ie $p_{i+1}$ already performed an update)\ then by
\lemmaref{lemma:stayscomparable} it holds that $A_{i+1}$ is comparable.
Consider $p_{i+1}$ such that $p_{i+1} \notin A_i$. According to the way $p_{i+1}$ updates
its registers, for any two nodes $q_1, q_2$, if $q_1<_{A_i} q_2$ then also $q_1 <_{A_{i
+1}} q_2$. Moreover, for any node $q \in A_i$ it holds that $q <_{A_i} p_{i+1}$. Thus, for
any pair of nodes $p,q \in A_{i+1}$ either $p <_{A_{i+1}} q$ or $q <_{A_{i+1}} p$.

Thus, $A_{i+1}$ is comparable. To complete  the proof, recall that $A_m = Y_r$.
\end{proof}

\begin{lemma}\label{lemma:Kenterline2}
Let $q$ be a node in $U_r$. Whenever $q$ performs an atomic step before the end of round $r$, it holds that $\I \neq \emptyset$, and for all $I \in \I$ it holds that $I_\#(q) < n-3f$.
\end{lemma}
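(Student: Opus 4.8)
The plan is to observe that this lemma is essentially an immediate unpacking of the definition of $U_r$ against the control flow of \enmasse, so the proof requires almost no work beyond reading off the algorithm. Recall that $q \in U_r$ means that $q$ performs no ``update'' during any of its atomic steps in rounds $1,\dots,r$. So I would fix an arbitrary atomic step of $q$ occurring before the end of round $r$ and note that during this step $q$ modifies neither $time_q$ nor $order_q$.

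Next I would inspect the two (and only two) places where \enmasse performs an update: \lineref{08}, which executes precisely when the guard of \lineref{07} holds, i.e.\ when some $I \in \I$ satisfies $I_\#(q) \geq n-3f$; and \lineref{09}, which executes precisely when $\I = \emptyset$. Since the step under consideration produces no update, both guards must have failed. Failure of the \lineref{09} guard gives $\I \neq \emptyset$. Failure of the \lineref{07} guard gives that no $I \in \I$ has $I_\#(q) \geq n-3f$, i.e.\ $I_\#(q) < n-3f$ for every $I \in \I$. These are exactly the two assertions of the lemma, so the proof is complete.

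I do not anticipate a genuine obstacle here; the argument is a contrapositive of a one-line case analysis on the algorithm. The only points deserving a sentence of care are: (a) \lineref{08} and \lineref{09} are the only statements of \enmasse that change $time_q$ or $order_q$, so ``$q$ does not update'' is equivalent to ``both guards fail''; (b) each $I \in \I$ is built (\lineref{04}--\lineref{06}) from a comparable set $W$ with $q \in W$, so $time_q \in I$ and the quantity $I_\#(q)$ is well defined; and (c) although $\I = \emptyset$ makes the \lineref{07} guard vacuously false, so that the two update branches are mutually exclusive, taking neither branch still forces \emph{both} conclusions at once, leaving no residual case split.
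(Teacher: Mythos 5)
Your proposal is correct and matches the paper's own proof: both argue by contrapositive that if either guard (\lineref{07} or \lineref{09}) held, $q$ would perform an update, contradicting $q \in U_r$. The small extra care you devote to well-definedness of $I_\#(q)$ and to the exhaustiveness of the case split is sensible but does not change the route.
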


\begin{proof}
  If $\I = \emptyset$ during $q$'s atomic step, then $q$ will perform an updated, and $q \notin U_r$. Similarly, if for some $I \in \I$ it holds that $I_\#(q) \geq n-3f$ then $q$ will also perform an update. Since $q \in U_r$, by definition, $q$ does not perform an update until the end of round $r$.
\end{proof}

\begin{lemma}\label{lemma:InK2}
Let $q$ be a non-faulty node and consider $q$'s atomic step during round $r'', r < r'' \leq r'$. For any comparable set $I \in \I$\ that $q$ considers in \lineref{07}, and for any $p' \in U_{r'}, p'' \in Y_r$: if $p',p'' \in I$ then $I_\#(p'') < I_\#(p')$.
\end{lemma}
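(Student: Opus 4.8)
The plan is to show that the time stamp $time_{p''}$ sits strictly above $time_{p'}$ in the total order induced by the comparable set $I$, and then to translate this into $I_\#(p'') < I_\#(p')$. The whole argument reduces to inspecting a single coordinate — the $p'$-th entry — of the two time stamps that $q$ reads at the start of its atomic step.

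First I would record two facts about that atomic step, which takes place in round $r''$ with $r < r'' \le r'$. Since $r'' \le r'$, any non-faulty node that has not updated by the end of round $r'$ has also not updated by the end of round $r''$, so $p' \in U_{r'} \subseteq U_{r''}$; in particular $p'$ performs no update before $q$ reads the registers, hence its own label $time_{p'}[p'] = label_{p'}$ is frozen throughout the window in question. Since $p'' \in Y_r$ and $r < r''$, node $p''$ has performed at least one update strictly before $q$'s step; at its most recent such update $p''$ executed the ``update $time$'' operation, which sets $time_{p''}[x] := label_x$ for every node $x$, and in particular set $time_{p''}[p'] := label_{p'}$. Because $label_{p'}$ is frozen and any later non-updating atomic step of $p''$ merely rewrites its time stamp unchanged, the value $q$ reads satisfies $time_{p''}[p'] =_{p'} time_{p'}[p']$. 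Note also $p' \ne p''$, since $p''$ has updated and $p'$ has not.

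To finish, I would invoke comparability: since $I \in \I$ is comparable and $p', p'' \in I$, either $p' >_I p''$ or $p'' >_I p'$. By the definition of $>_I$, the first alternative would force $time_{p'}[p'] >_{p'} time_{p''}[p']$, a strict inequality contradicting the equality established above; hence $p'' >_I p'$. Since a comparable set induces a total order on its members and $>_I$ is transitive (\lemmaref{lemma:lemmacomp}), we get $\{w \in I \mid w >_I p''\} \subsetneq \{w \in I \mid w >_I p'\}$ (the containment holds by transitivity, and is strict because $p''$ lies in the right set but not the left), so the rank of $p''$ is strictly smaller, i.e. $I_\#(p'') < I_\#(p')$.

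The one point requiring care --- the main ``obstacle'', though it is more a matter of diligence than of difficulty --- is the claim in the second paragraph that the entry $time_{p''}[p']$ actually read by $q$ equals $label_{p'}$: one must verify both that $p''$ re-copies $label_{p'}$ at \emph{every} update it performs (so that only its last update matters and any stale entry has been overwritten) and that $label_{p'}$ is genuinely constant throughout, which is exactly where $p' \in U_{r''}$ is used. Everything else is a direct unwinding of the definitions of $>_I$ and $I_\#$.
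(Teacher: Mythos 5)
Your proof is correct and follows essentially the same route as the paper's: establish $time_{p''}[p'] = time_{p'}[p']$ from the fact that $p'$'s own label is frozen while $p''$ has re-copied it, deduce via the definition of $>_I$ and comparability that $p'' >_I p'$, then use transitivity to conclude $I_\#(p'') < I_\#(p')$. The only difference is that you spell out, more carefully than the paper does, why the equality $time_{p''}[p'] = time_{p'}[p']$ actually holds (tracking what happens across $p''$'s multiple updates and non-updating steps, and noting $p' \neq p''$); the paper asserts this equality more tersely.
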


\begin{proof}
Since $p',p''$ are both in $I$, either $p' <_I p''$ or $p'' <_I p'$. Since $p' \in U_{r'}$ it has
not yet performed an update, while $p''$ has performed an update before the end of
round $r$. Thus, $time_{p''}[p'] = time_{p'}[p']$ and therefore it cannot be that $p'' <_I p'$,
leaving us with $p' <_I p''$. Since $I$ is totaly ordered, any node $w$ such that $p'' <_I w$
also holds that $p' <_I w$. Therefore, there are more nodes in $I$ that are larger than
$p'$ than there are nodes that are larger than $p''$. \ie $I_\#(p'') < I_\#(p')$.
\end{proof}

\hide {
\begin{lemma}\label{lemma:InK}
Let $q \in U_{r'}, r' > r$ and consider $q$'s atomic step during round $r'', r < r'' \leq r'$. For any comparable set $I \in \I$\ that $q$ considers in \lineref{07}, $I_\#(q) > |I \cap Y_r|$.
\end{lemma}
\begin{proof}
  Let $q \in U_{r'}$ and $p \in Y_r$ where $r' > r$. Following the same proof of \lemmaref{lemma:InK2}, denote $p'=q$ and $p''=p$, it holds that if $p \in I$ then $q <_I p$. Thus, $p \in \{w \in I | w >_I q\}$; which is true for all $p \in Y_r \cap I$. Since $I_\#(q) = |\{w \in I | w >_I q\}|+1$ we have that $I_\#(q) > |I \cap Y_r|$.
\end{proof}

\begin{lemma}\label{lemma:eventualOneI}
During the first round, some node performs an updated. \ie $|Y_1| > 0$.
\end{lemma}
\begin{proof}
Assume by way of contradiction that $Y_1=\emptyset$, thus $|U_1| = n-f$. Consider node $q \in U_1$ performing an atomic step during round 1. By \lemmaref{lemma:Kenterline2} for any $I \in \I$ that $q$ considers in \lineref{07}, $I_\#(q) < n-3f$. Since $|I| \geq n-f$ it contains a non-faulty node $p$ s.t. $I_\#(p) \geq n-2f$. Consider an atomic step by $p$ during round 1, by \lemmaref{lemma:Kenterline2} for any $I' \in \I$ that $p$ considers in \lineref{07}, $I'_\#(p) < n-3f$. Let $I, I'$ be such comparable sets. Notice that $p \in I, I'$ and $|I|, |I'| \geq n-f$. If $\max\{|I|,|I'|\} = n-k$ then $|I \cap I'| \geq n-f-k$, leading to $\max\{|I|,|I'|\} - |I \cap I'| \leq n-k-(n-f-k) = f$.

Using \corollaryref{cor:differentIs}, we have that $\ell' = \max\{|I|,|I'|\} - |I \cap I'| \leq f$, and $I'_\#(p) - \ell' \leq I_\#(p) \leq I'_\#(p) + \ell'$. Recalling that $n-2f \leq I_\#(p)$ leads to $n-3f \leq n-2f-\ell' \leq I'_\#(p)$. Thus, $p$ passes the condition of \lineref{07} and executes \lineref{08}. \ie $p$ performs an update during round 1. Therefore, $p \in Y_1$ which contradicts $Y_1 = \emptyset$. And we conclude that $|Y_1| > 0$; as required.
\end{proof}
} % hide

\begin{lemma} \label{lemma:eventuallAllII}
If $|Y_r| \geq n-2f$ then $|Y_{r+1}| \geq n-f$.
\end{lemma}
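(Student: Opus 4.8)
The plan is to prove that \emph{every} node $q\in U_r$ performs an \textbf{update} during round $r+1$. Since a node that has updated stays updated, this yields $Y_{r+1}\supseteq Y_r\cup U_r$, and because $Y_r$ and $U_r$ together comprise the set $N$ of all non-faulty nodes (with $|N|\ge n-f$), we get $|Y_{r+1}|\ge n-f$, as required. So the whole argument reduces to the statement: if $q\in U_r$ then $q$ updates at some step of round $r+1$.

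Fix $q\in U_r$ and suppose, toward a contradiction, that $q$ performs no update during round $r+1$; then $q\in U_{r+1}$. Round $r+1$ contains at least one atomic step of $q$ (it is non-faulty); fix one such step, at configuration $\C$. Inspecting lines \linenumber{07}--\linenumber{09} of \enmasse, the fact that $q$ does not update at $\C$ forces $\I\neq\emptyset$ (otherwise line \linenumber{09} would trigger an update) and $I_\#(q)<n-3f$ for every $I\in\I$ (otherwise line \linenumber{08} would trigger an update). I will contradict the latter by showing $I_\#(q)$ is large for every $I\in\I$.

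Pick any $I\in\I$; by construction it is the set of time-stamps of a comparable set $W$ with $q\in W$ and $|W|\ge n-f$. Since $|\p\setminus W|\le f$, at most $f$ nodes of $Y_r$ lie outside $W$, so $|Y_r\cap W|\ge |Y_r|-f\ge n-3f$. Now apply \lemmaref{lemma:InK2} with $p':=q$ (which is in $U_{r+1}$): for every $p''\in Y_r\cap W$ — so that the time-stamps of both $q$ and $p''$ lie in $I$ — it gives $I_\#(p'')<I_\#(q)$, i.e.\ every such $p''$ sits strictly above $q$ in the total order that $I$ induces. Hence at least $|Y_r\cap W|\ge n-3f$ nodes are above $q$ in $I$, so $I_\#(q)\ge n-3f+1$, contradicting $I_\#(q)<n-3f$. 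Therefore $q$ must update during round $r+1$, and $|Y_{r+1}|\ge n-f$ follows.

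This is a short counting argument followed by a single invocation of \lemmaref{lemma:InK2}, so I do not expect a genuine obstacle. The points that need care are: (a) handling the degenerate case $\I=\emptyset$ through line \linenumber{09} rather than line \linenumber{08}; (b) checking that the bound $|Y_r\cap W|\ge |Y_r|-f$ holds for \emph{every} valid $W$ that $q$ considers, not just a conveniently chosen one, so the conclusion $I_\#(q)\ge n-3f$ is uniform over $\I$; and (c) the round bookkeeping — we need $q\in U_{r+1}$ (not merely $q\in U_r$) when invoking \lemmaref{lemma:InK2}, which is precisely why we argue from the assumption that $q$ updates \emph{nowhere} in round $r+1$.
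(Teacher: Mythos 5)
Your proof is correct and follows essentially the same route as the paper's: assume a node $q$ fails to update in round $r+1$, note that lines \linenumber{07}--\linenumber{09} then force $\I\neq\emptyset$ and $I_\#(q)<n-3f$ for every $I\in\I$ (the paper packages this as \lemmaref{lemma:Kenterline2}), count $|Y_r\cap W|\ge n-3f$ from $|W|\ge n-f$ and $|Y_r|\ge n-2f$, and invoke \lemmaref{lemma:InK2} to derive the contradiction. You are just slightly more explicit about the $W$ versus $I$ distinction and the off-by-one in the final bound, but the decomposition and the key lemma used are identical.
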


\begin{proof}
  If $U_{r+1} = \emptyset$ we are done. Otherwise, assume by way of contradiction that $q \in U_{r+1}$. By
  \lemmaref{lemma:Kenterline2} during $q$'s atomic steps in round $r+1$ it holds that $\I \neq \emptyset$ and for all $I \in \I$ we have that $I_\#(q) < n-3f$. Consider such an $I \in \I$; since $|Y_r| \geq n-2f$, it holds that $|I \cap Y_r| \geq n-3f$. That is, $I$ contains at least $n-3f$ nodes that have performed an update. Thus, by \lemmaref{lemma:InK2}, $I_\#(q) \geq n-3f$ which contradicts the fact that $I_\#(q) < n-3f$. Thus, $q \notin U_{r+1}$ and $U_{r+1} = \emptyset$.
\end{proof}

\begin{lemma} \label{lemma:eventuallAllI}
If $|Y_r| < n-2f$ then $|Y_{r+1}| \geq |Y_r| + 1$.
\end{lemma}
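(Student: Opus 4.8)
The plan is to argue by contradiction. Suppose $|Y_{r+1}| = |Y_r|$; that is, no node of $U_r$ performs an update during round $r+1$, so in fact $U_{r+1} = U_r$. Note first that $U_r \neq \emptyset$, since $|Y_r| < n-2f < n-f$. Under the contradiction hypothesis, for every $q \in U_r$ and every atomic step that $q$ takes during round $r+1$, the fact that $q$ does not update means Line~\linenumber{09} did not fire, so $\I \neq \emptyset$, and Line~\linenumber{08} did not fire, so $I_\#(q) < n-3f$ for every comparable set $I \in \I$ that $q$ considers in Line~\linenumber{07}. (This is exactly \lemmaref{lemma:Kenterline2} with ``round $r$'' replaced by ``round $r+1$''.) The goal is to contradict this.

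Fix $q \in U_r$ and an atomic step of $q$ in round $r+1$, and let $I \in \I$ be one of the comparable sets $q$ considers, so $q \in I$, $|I| \geq n-f$, $I$ is comparable, and $I_\#(q) < n-3f$. Since $|I| \geq n-f$, at least $n-2f$ nodes of $I$ are non-faulty, and each such node lies in $Y_r$ or in $U_r$. By \lemmaref{lemma:YrComparable} together with \lemmaref{lemma:stayscomparable}, the time-stamps of $Y_r$ remain comparable throughout round $r+1$, so \lemmaref{lemma:InK2} applies (taking $r' = r+1$): every node of $Y_r \cap I$ is $>_I$ every node of $U_r \cap I = U_{r+1} \cap I$. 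As $q \in U_r \cap I$, this set is nonempty, so the $<_I$-smallest non-faulty node of $I$ must lie in $U_r$; call it $p$. Every other non-faulty node of $I$ (there are at least $n-2f-1$ of them) is $>_I p$, hence $I_\#(p) \geq n-2f$.

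Now $p \in U_r = U_{r+1}$, so $p$ likewise does not update during its atomic step in round $r+1$; therefore there is a comparable set $I' \in \I$ that $p$ considers with $p \in I'$, $|I'| \geq n-f$, and $I'_\#(p) < n-3f$. Apply \corollaryref{cor:differentIs} to the comparable sets $I, I'$ with $p \in I \cap I'$: because $|I|, |I'| \geq n-f$ and there are only $n$ nodes in total, $\ell' := \max\{|I|, |I'|\} - |I \cap I'| \leq f$, and therefore $I'_\#(p) \geq I_\#(p) - \ell' \geq (n-2f) - f = n-3f$, contradicting $I'_\#(p) < n-3f$. The contradiction shows that at least one node of $U_r$ updates during round $r+1$, i.e.\ $|Y_{r+1}| \geq |Y_r| + 1$.

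The step I expect to be the crux is pinning down that the $<_I$-smallest non-faulty node of the comparable set $I$ belongs to $U_r$ rather than $Y_r$ — this is where \lemmaref{lemma:InK2} is essential, and one must check its hypotheses at the right configuration (in particular that $U_{r+1} = U_r$ under the contradiction hypothesis, and that $Y_r$ has stayed comparable since the end of round $r$). Once that structural fact is secured, the quantitative clash between the rank $\geq n-2f$ forced by the size of $I$ and the rank $< n-3f$ forbidden by the fact that $p$ does not update, reconciled through \corollaryref{cor:differentIs}, is routine.
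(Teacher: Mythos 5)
Your identification of the pivotal node $p$ — the $<_I$-least non-faulty (equivalently, the $<_I$-least $U_{r+1}$) member of $I$, whose rank is forced to be $\geq n-2f$ by \lemmaref{lemma:InK2} and by $|W| \geq n-f$ — matches the paper's argument exactly, and the setup (reducing to $U_{r+1} = U_r$, invoking \lemmaref{lemma:Kenterline2}, keeping $Y_r$ comparable via \lemmaref{lemma:YrComparable} and \lemmaref{lemma:stayscomparable}) is sound. The step you dismiss as ``routine,'' however, is precisely where the argument breaks: you apply \corollaryref{cor:differentIs} to $I$ (the set $q$ builds at $q$'s atomic step) and $I'$ (the set $p$ builds at $p$'s later atomic step). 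These are snapshots at two different configurations. \corollaryref{cor:differentIs} is derived from \lemmaref{lem:pqrelate}, and both are ``static'' statements about two comparable sets of time-stamps drawn from a single configuration; the paper says this explicitly in the paragraph introducing Algorithm \enmasse. For the bound $\ell' = \max\{|I|,|I'|\} - |I \cap I'| \leq f$ you need $|I \cap I'| \geq n-2f$, which you infer from $|W \cap W'| \geq n-2f$. But a node $w \in W \cap W'$ contributes the \emph{same} element to both $I$ and $I'$ only if $time_w$ did not change between the two atomic steps. Under the contradiction hypothesis nothing prevents the (up to $n-2f-1$) nodes of $Y_r$, nor the faulty nodes, from updating their time-stamps between $q$'s step and $p$'s step; every such node drops out of $I \cap I'$ as a set of time-stamps, so $|I \cap I'|$ — and hence the inequality $I'_\#(p) \geq I_\#(p) - f$ — is not guaranteed.

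The paper bridges this exact gap differently. Rather than comparing the two snapshots directly, it converts ``$I_\#(p) \geq n-2f$ at $q$'s step'' into an upper bound (at most $3f-1$) on the number of nodes $p'$ satisfying $time_{p'}[p'] <_{p'} time_p[p']$, and then observes a \emph{monotonicity}: because $p \in U_{r+1}$ never updates in round $r+1$, $time_p$ is frozen, and the only event that can change whether $p'$ satisfies the property is $p'$ performing an update — which sets $time_{p'}[p']$ to the $>_{p'}$-maximum and therefore removes $p'$ from the set. Hence the bound persists to $p$'s own atomic step, giving $I'_\#(p) \geq n-3f$ for whatever $I'$ that $p$ then builds, and the contradiction with \lemmaref{lemma:Kenterline2} follows. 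Your proof needs this temporal-monotonicity argument (or an explicit restriction of the intersection to nodes whose time-stamps are unchanged, with a fresh count) in place of the cross-configuration invocation of \corollaryref{cor:differentIs}.
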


\begin{proof}
Assume by way of contradiction that $|Y_{r+1}| < |Y_r| + 1$. Therefore, $|Y_{r+1}| \leq |Y_r| < n-2f$ and
$|U_{r+1}| \geq f+1$. Thus, for any set $I$ containing at least $n-f$ nodes, it holds that $I \cap U_{r+1} \neq \emptyset$.

Let $q \in U_{r+1}$, by \lemmaref{lemma:Kenterline2} during $q$'s atomic steps in round $r+1$ it holds that $\I \neq \emptyset$ and for all $I \in \I$ we have that $I_\#(q) < n-3f$. By \lemmaref{lemma:InK2}, for any node $q' \in U_{r+1}$ and any node $p' \in Y_r$ it holds that if $q',p' \in I$ then $I_\#(p') < I_\#(q')$. Therefore, for any node $p \in I \cap U_{r+1}$ it holds that $I_\#(p) > |I \cap Y_r|$.
As there are $|I \cap U_{r+1}| > 0$ nodes from $U_{r+1}$ in $I$, there is a node $p \in I \cap U_{r+1}$ such that
$I_\#(p) \geq |I \cap Y_r| + |I \cap U_{r+1}|$.

Notice that $Y_{r} \subseteq Y_{r+1}$, and since $|Y_{r+1}| \leq |Y_r|$ it holds that $Y_r = Y_{r+1}$. Moreover, $U_{r+1} \cap Y_{r+1} = \emptyset$ and $|U_{r+1} \cup Y_{r+1}| = n-f$. Thus, $|I \cap Y_r| + |I \cap U_{r+1}| =
|I \cap (Y_r \cup U_{r+1})| \geq n-2f$. That is, there is a node $p \in I \cap U_{r+1}$ such that $I_\#(p) \geq n-2f$. \ie there are at most $3f-1$ nodes $p'$ that have the following property: $time_{p'}[p'] <_{p'} time_p[p']$.
Since $p$ does not perform an update, this property can change only if $p'$ performs an update, which will reduce the number of nodes such that $time_{p'}[p'] <_{p'} time_p[p']$. Therefore, throughout round $r+1$, if $p$ considers a comparable set $I \in \I$, it will always have $I_\#(p) \geq n-3f$.

Consider an atomic step by $p$ during round r+1, by \lemmaref{lemma:Kenterline2} for any $I' \in \I$ that $p$ considers in \lineref{07}, $I'_\#(p) < n-3f$, which contradicts the above fact that $I_\#(p) \geq n-3f$. Thus, we conclude that $|Y_{r+1}| > |Y_r|$; as required.
\end{proof}

\begin{corollary}\label{cor:emptyUr}
  For any round $r \geq n-2f+2$, it holds that $U_r = \emptyset$ and $|Y_r|=n-f$.
\end{corollary}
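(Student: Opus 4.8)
The plan is to chain Lemma~\ref{lemma:eventuallAllI} and Lemma~\ref{lemma:eventuallAllII} by a short induction on the round number, using two elementary facts: (i) the set $Y_r$ is monotone, $Y_r \subseteq Y_{r+1}$, since a node that has performed an ``update'' by the end of round $r$ has also done so by the end of round $r+1$; and (ii) $|Y_r| + |U_r| = n-f$ for every $r$, because $Y_r$ and $U_r$ partition the set of non-faulty nodes, of which there are exactly $n-f$.

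First I would prove, by induction on $r \geq 0$, the auxiliary claim: \emph{either there is some $r' \leq r$ with $|Y_{r'}| \geq n-2f$, or $|Y_r| \geq r$.} The base case is immediate, as $|Y_0| \geq 0$. For the inductive step, if some $r' \leq r$ already has $|Y_{r'}| \geq n-2f$ the claim holds for $r+1$ as well; otherwise $|Y_r| \geq r$ and, in particular, $|Y_r| < n-2f$, so Lemma~\ref{lemma:eventuallAllI} gives $|Y_{r+1}| \geq |Y_r| + 1 \geq r+1$.

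Instantiating the claim at $r = n-2f$ produces, in either of its two cases, an index $r_0 \leq n-2f$ with $|Y_{r_0}| \geq n-2f$. Lemma~\ref{lemma:eventuallAllII} then yields $|Y_{r_0+1}| \geq n-f$, and by fact (ii) this forces $|Y_{r_0+1}| = n-f$ and $U_{r_0+1} = \emptyset$. Finally, monotonicity (fact (i)) propagates this to all later rounds: $|Y_r| = n-f$ and $U_r = \emptyset$ for every $r \geq r_0+1$, and since $r_0 + 1 \leq n-2f+1 \leq n-2f+2$, this covers in particular all $r \geq n-2f+2$, as claimed.

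There is no real obstacle here; the only thing requiring a little care is keeping the two regimes of the induction straight --- the slow ``$+1$ per round'' growth guaranteed by Lemma~\ref{lemma:eventuallAllI} while $|Y_r| < n-2f$, and the single-round jump to $n-f$ guaranteed by Lemma~\ref{lemma:eventuallAllII} once the threshold $n-2f$ is reached --- and noting that an arbitrary (possibly large) initial $|Y_0|$ only shortens the process, so the stated bound is in fact conservative.
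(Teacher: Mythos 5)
Your proof is correct and follows essentially the same route as the paper's one-line argument, which simply chains \lemmaref{lemma:eventuallAllI} over the first $n-2f$ rounds and then invokes \lemmaref{lemma:eventuallAllII}; you have merely spelled out the implicit induction and the monotonicity/partition facts the paper leaves unstated. One tiny presentational caveat: the paper never defines $Y_0$ (rounds are indexed from $1$), so either define $Y_0 := \emptyset$ explicitly or start the auxiliary induction at $r=1$ with the weaker bound $|Y_r| \geq r-1$, which still lands you at $|Y_{n-2f+2}| = n-f$ exactly as the corollary states.
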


\begin{proof}
  Apply \lemmaref{lemma:eventuallAllI} during the first $n-2f$ rounds, then apply \lemmaref{lemma:eventuallAllII} for round $n-2f+1$.
\end{proof}

\begin{lemma}\label{lemma:final}
For any round $r > n-2f+2$, any non-faulty node $q$ considers $Y_r \in \I$ during atomic steps of round $r$.
\end{lemma}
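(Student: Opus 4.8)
The plan is to reduce the claim to three elementary facts about the configuration at which an arbitrary atomic step of the non-faulty node $q$ is taken during round $r$, and then to discharge each in turn. Recall that in lines \linenumber{04}--\linenumber{06} of \enmasse the node $q$ enumerates every $W\subseteq\p$ with $|W|\ge n-f$ and $q\in W$, builds $I=\{time_p\mid p\in W\}$, and adds $I$ to $\I$ exactly when $W$ is comparable. Hence it suffices to show, at any such configuration: (a) $|Y_r|\ge n-f$; (b) $q\in Y_r$; (c) $Y_r$ is comparable. Once these hold, $W=Y_r$ is one of the sets $q$ examines, it is comparable, and so $q$ places the corresponding $I$ into $\I$, i.e.\ $q$ considers $Y_r\in\I$.

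Facts (a) and (b) I would obtain directly from \corollaryref{cor:emptyUr}: since $r>n-2f+2$ implies $r\ge n-2f+2$, we have $U_r=\emptyset$ and $|Y_r|=n-f$, so $Y_r$ is precisely the set of all non-faulty nodes; this set has cardinality $n-f$ and contains $q$.

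For fact (c) the idea is to exploit the extra round of slack in the hypothesis. From $r>n-2f+2$ we get $r-1\ge n-2f+2$, so \corollaryref{cor:emptyUr} applies to round $r-1$ as well, giving $U_{r-1}=\emptyset$; hence $Y_{r-1}=Y_r$ is again the full set of non-faulty nodes. By \lemmaref{lemma:YrComparable}, $Y_{r-1}$ is comparable at the last configuration of round $r-1$, which precedes every configuration of round $r$; and by \lemmaref{lemma:stayscomparable}, the time-stamps of this one fixed set of nodes remain comparable at every later configuration, in particular at every configuration at which $q$ takes an atomic step in round $r$. This establishes (c) and completes the argument.

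The one point needing care — and the reason the statement is phrased with the strict inequality $r>n-2f+2$ rather than $r\ge n-2f+2$ — is the off-by-one in the round index: \lemmaref{lemma:YrComparable} only certifies that $Y_r$ is comparable at the \emph{end} of round $r$, whereas $q$'s atomic step may occur earlier in round $r$. Shifting the appeal to \lemmaref{lemma:YrComparable} back to round $r-1$ (which is legitimate precisely because $r-1\ge n-2f+2$) and then transporting comparability forward with \lemmaref{lemma:stayscomparable} sidesteps this, and it is essentially the only non-bookkeeping step in the proof.
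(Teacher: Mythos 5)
Your proof is correct and follows essentially the same route as the paper's: appeal to \corollaryref{cor:emptyUr} (applied at round $r-1$, which is legitimate because $r>n-2f+2$) to conclude $Y_{r-1}=Y_r$ is the full set of non-faulty nodes and contains $q$, then use \lemmaref{lemma:YrComparable} at the end of round $r-1$ together with \lemmaref{lemma:stayscomparable} to propagate comparability into round $r$. Your write-up is merely more explicit than the paper's terse one-liner, in particular in flagging and resolving the off-by-one issue in the round index; the underlying argument is identical.
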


\begin{proof}
  By \corollaryref{cor:emptyUr}, $|Y_{r-1}| = n-f$ and $U_{r-1} = \emptyset$, leading to $q \in Y_{r-1}$.
  By \lemmaref{lemma:YrComparable} and \lemmaref{lemma:stayscomparable} $Y_r$   is comparable, and is viewed as comparable by $q$ during any atomic step of round $r$.  Thus, when $q$ performs an atomic step during round $r$,
  $q$ adds $Y_r$ to $\I$ at \lineref{06}.
\end{proof}

\begin{corollary}\label{cor:noline9}
  For any round $r > n-2f+2$, no non-faulty node passes the condition of \lineref{09}.
\end{corollary}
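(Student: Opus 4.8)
The plan is to read the statement off directly from \lemmaref{lemma:final}. Recall that \lineref{09} of \enmasse is executed by a node $q$ exactly when the set $\I$ that $q$ constructs in Lines~\linenumber{03}--\linenumber{06} during its atomic step is empty; thus ``$q$ passes the condition of \lineref{09}'' is by definition the statement ``$\I = \emptyset$ during that atomic step of $q$''.

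First I would fix a round $r > n-2f+2$ and an arbitrary non-faulty node $q$ that performs an atomic step during round $r$. By \lemmaref{lemma:final}, during every atomic step of round $r$ the node $q$ considers $Y_r$ and adds it to $\I$ at \lineref{06}; in particular $Y_r \in \I$, so $\I \neq \emptyset$ at that step. Hence the guard ``$\I = \emptyset$'' of \lineref{09} evaluates to false, and $q$ does not perform the update on \lineref{09}. Since $q$ and its atomic step were arbitrary among the atomic steps of round $r$, no non-faulty node passes the condition of \lineref{09} in round $r$, which is the claim.

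I do not anticipate any real obstacle here: all of the work has already been done in the preceding lemmas — that after $O(n)$ rounds every non-faulty node has updated (\corollaryref{cor:emptyUr}), that the set of time-stamps of non-faulty nodes stays comparable once it becomes so (\lemmaref{lemma:stayscomparable}, \lemmaref{lemma:YrComparable}), and that consequently each non-faulty node sees $Y_r$ as a comparable set of size $\geq n-f$ in Lines~\linenumber{04}--\linenumber{06}. The only point worth double-checking is the round-index bookkeeping: \lemmaref{lemma:final} needs $|Y_{r-1}| = n-f$ and $U_{r-1} = \emptyset$, which \corollaryref{cor:emptyUr} guarantees precisely when $r-1 \geq n-2f+2$, i.e.\ when $r > n-2f+2$, matching the range asserted in the corollary.
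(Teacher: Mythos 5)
Your proof is correct and follows exactly the paper's own argument: invoke \lemmaref{lemma:final} to get $Y_r \in \I$, hence $\I \neq \emptyset$, so the guard of \lineref{09} is false. The extra round-index bookkeeping you spell out ($r > n-2f+2 \Leftrightarrow r-1 \geq n-2f+2$ as required by \corollaryref{cor:emptyUr}) is a correct and welcome sanity check, but it is already implicit in the paper's appeal to \lemmaref{lemma:final}.
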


\begin{proof}
  By \lemmaref{lemma:final}, a non-faulty node $q$ performing an atomic step during round $r > n-2f+2$ has $\I \neq \emptyset$. Thus, the condition of \lineref{09} does not hold.
\end{proof}

\begin{lemma}\label{lemma:2fupdate}
  Let $r$ be any round, $r > n-2f+2$. During round $r$ at least $2f$ non-faulty nodes perform update.
\end{lemma}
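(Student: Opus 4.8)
The plan is to exploit the structural facts already available for rounds $r>n-2f+2$. By \corollaryref{cor:emptyUr}, $U_{r-1}=\emptyset$ and $|Y_{r-1}|=n-f$, so at the first configuration of round $r$ the set $Y$ of \emph{all} non-faulty nodes has size $n-f$; by \lemmaref{lemma:YrComparable} together with \lemmaref{lemma:stayscomparable} this set is comparable throughout round $r$, and by \lemmaref{lemma:final} every non-faulty node considers $Y\in\I$ during each of its atomic steps in round $r$. Consequently, during round $r$ a non-faulty node $q$ performs an update whenever it takes an atomic step at a configuration where $Y_\#(q)\ge n-3f$: it then satisfies the condition of \lineref{07}. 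So it suffices to exhibit $2f$ non-faulty nodes that, at some atomic step of round $r$, have index at least $n-3f$ in $Y$.

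The key step is a monotonicity property of the indices inside the (static, within a single step) comparable set $Y$. When a node $p$ performs an update it becomes the unique maximum of $Y$, while the relative order of all the other elements of $Y$ is left unchanged — this is precisely what is shown inside the proof of \lemmaref{lemma:stayscomparable}. Hence a single update by $p$ changes the index $Y_\#(v)$ of a node $v\neq p$ either not at all (if $v$ was below $p$) or by exactly $+1$ (if $v$ was above $p$); and steps of faulty nodes, or of nodes outside $Y$, do not affect the order among non-faulty nodes at all. I would record the resulting invariant: for every non-faulty node $v$, the index $Y_\#(v)$ is non-decreasing over any stretch of consecutive configurations of round $r$ during which $v$ itself does not perform an update.

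Now fix the total order that $Y$ induces at the first configuration of round $r$, writing its elements as $q_1>_Y q_2>_Y\cdots>_Y q_{n-f}$, so that $Y_\#(q_i)=i$ there. Each non-faulty node takes at least one atomic step during round $r$; for each $q_i$ let $\C$ be the configuration just before its \emph{first} such step. Since $q_i$ has performed no atomic step (hence no update) in round $r$ before $\C$, the invariant gives $Y_\#(q_i)\ge i$ at $\C$. Therefore every $q_i$ with $i\ge n-3f$ — i.e. each of the $2f+1$ bottom nodes $q_{n-3f},\dots,q_{n-f}$ — has $Y_\#(q_i)\ge n-3f$ at $\C$, passes \lineref{07}, and performs an update during round $r$. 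This produces $2f+1\ge 2f$ distinct non-faulty nodes performing an update in round $r$, as required.

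The main obstacle is the second paragraph: making the index dynamics precise, in particular that \emph{only} the updating node's index can decrease and that faulty behaviour cannot perturb the comparable order among non-faulty nodes. Once that monotonicity invariant is established, the counting in the last paragraph is immediate, and none of the other cases in \lineref{07}--\lineref{09} (e.g. updates triggered via \lineref{09}, which are already ruled out by \corollaryref{cor:noline9}) need to be analyzed for this lower bound.
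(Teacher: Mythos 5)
Your proof is correct and follows essentially the same route as the paper: identify the bottom $2f{+}1$ nodes of the comparable set at the start of round $r$, and argue that each such node still has index $\ge n-3f$ at its first atomic step of the round, so it passes \lineref{07}. The one place you go beyond the paper is in stating and justifying the monotonicity invariant ("$Y_\#(v)$ is non-decreasing while $v$ does not update"), which the paper leaves implicit behind the phrase "By the proof of \lemmaref{lemma:YrComparable}, \dots it holds that $W_\#(q)\ge n-3f$"; making that invariant explicit is a genuine improvement in rigor over the paper's terse version.
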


\begin{proof}
  Let $r > n-2f+2$ be any round. Consider the set $W=\{time_q | q \in Y_r\}$ before the first atomic step of round $r$.
  Denote by $Z = \{p \in W | W_\#(p) \geq n-3f\}$, that is, $Z$ contains all nodes that are at most $n-3f$ highest in $W$. Since $|W| = n-f$ it holds that $|Z| = 2f$.

  For each node $q \in Z$, consider $q$'s first atomic step in round $r$. By the proof of \lemmaref{lemma:YrComparable}, since $q$ did not perform an update since the beginning of round $r$, when it performs its first atomic step, it holds that $W_\#(q) \geq n-3f$ and $q$ will pass the condition in \lineref{07} and perform an update (and ``act'') in \lineref{08}. This holds for all $q \in Z$, that is, for at least $2f$ nodes.
\end{proof}

\begin{lemma}\label{lemma:updateeveryfround}
  Starting from round $n-2f+3$, every non-faulty node $p$ performs an update at least once every $\frac{n}{2f}$ rounds.
\end{lemma}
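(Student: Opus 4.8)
The plan is to track, for each non-faulty node $q$, its position inside the totally ordered comparable set of all non-faulty time-stamps, and to show that a node that declines to update keeps sliding down by at least $2f$ positions every round, which it can only do fewer than $n/(2f)$ times.

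First I would pin down the regime. By \corollaryref{cor:emptyUr}, for every round $r \ge n-2f+2$ we have $U_r = \emptyset$ and $|Y_r| = n-f$, so by the end of any such round every non-faulty node has performed at least one update; hence by \lemmaref{lemma:YrComparable} together with \lemmaref{lemma:stayscomparable} the set $Y_r$ of all $n-f$ non-faulty time-stamps is comparable at the start of every round $r \ge n-2f+3$ and stays comparable thereafter. A comparable set is totally ordered, so for a fixed non-faulty node $q$ and any configuration $\C$ in this regime I can define $a(\C)$ to be the number of non-faulty nodes whose time-stamp stands above $q$'s in that order; note $0 \le a(\C) \le n-f-1$.

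The core step is the monotonicity claim: if $q$ performs no update during a round $r \ge n-2f+3$, then $a$ increases by at least $2f$ across that round. For this, recall from the proof of \lemmaref{lemma:2fupdate} that the $2f$ lowest-ranked non-faulty nodes at the start of round $r$ (the set $Z$ in that proof, which is a down-set in the order) each perform an update during round $r$. Since $q$ performs no update, $q \notin Z$, and because $Z$ is a down-set $q$ lies strictly above every node of $Z$ at the start of the round; once such a node updates it jumps to the top of the order (its label is reset to exceed every other label), so it ends the round strictly above $q$. In the other direction, the relative order of two non-faulty nodes changes only when one of the two updates — an update of a third node rewrites only that node's own time-stamp — and an update always moves the updater to the top; since $q$ itself never updates during round $r$, no node that started the round above $q$ can end it below $q$. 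Hence the $2f$ lowest-ranked nodes are genuine new additions to the count $a$, which proves the claim.

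Iterating the claim: if $q$ performs no update throughout rounds $r, r+1, \dots, r+k-1$ with $r \ge n-2f+3$, then $a \ge 2fk$ at the end, while always $a \le n-f-1$; thus $k \le (n-f-1)/(2f) < n/(2f)$. Consequently $q$ cannot pass $\lceil n/(2f)\rceil$ consecutive rounds without updating, i.e. from round $n-2f+3$ onwards every non-faulty node updates at least once every $n/(2f)$ rounds. I expect the monotonicity claim to be the main obstacle: it needs a careful, essentially case-free reading of how an \enmasse update rewrites $time_q$ and $order_q$, so as to be sure both that the lowest $2f$ nodes really do pass the \lineref{07} test and update, and that $q$ — while it stands still — can never be overtaken from below in the order; everything after that is just counting against the bound $n-f$.
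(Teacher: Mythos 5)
Your proof takes essentially the same route as the paper's: both track the number of non-faulty nodes ranked above $p$ in the comparable order, use \lemmaref{lemma:2fupdate} to conclude that the $2f$ lowest-ranked nodes jump above $p$ each round in which $p$ stands still, and then bound that count by $n-f-1$ to get the $\tfrac{n}{2f}$ bound. Your write-up merely makes explicit the monotonicity of relative order under \enmasse updates (that nodes above $p$ stay above $p$, and that the $2f$ promoted nodes are genuinely new), a step the paper's terser proof leaves implicit.
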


\begin{proof}
  By the proof of \lemmaref{lemma:2fupdate}, every round the lowest $2f$ nodes perform an update. Thus,
  if $p$ does not perform an update during round $r$, there are at least $2f$ nodes higher than it. Consider round $r+i$, if $p$ does not perform an update there are $2f \cdot i$ nodes higher than $p$. Therefore, after at most
  $\frac{n}{2f}$ rounds $p$ will perform an update.
\end{proof}

\begin{lemma}\label{lemma:inbetween}
  Consider a non-faulty node $p$ performing an update twice, then there are at least $n-4f$ other nodes that have performed update in between.
\end{lemma}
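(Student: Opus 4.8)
The plan is to make precise the intuition stated just before the lemma: once $p$ performs its first update it copies into its own time stamp the current label of every node (and refreshes its own label), so that afterwards no non‑faulty node can appear ``ahead of'' $p$ in the comparisons that trigger an update without itself performing an update; since $p$'s second update requires roughly $n-3f$ nodes to be ahead of it, roughly $n-4f$ non‑faulty nodes must have updated in between. I would work after round $n-2f+2$, so that by \corollaryref{cor:noline9} every update of a non‑faulty node is done via \lineref{08}; in particular, an update by a non‑faulty node $q$ at a configuration $\C$ witnesses a comparable set $I\in\I$ with $q\in I$, $|I|\ge n-f$, and $I_\#(q)\ge n-3f$. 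I would also keep \lemmaref{lemma:stayscomparable} and \lemmaref{lemma:YrComparable} in hand (the set $Y$ of all non‑faulty nodes is comparable), although the core argument is self‑contained inside $I$.

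Fix a non‑faulty $p$ and two of its updates, at configurations $\C_a$ before $\C_b$, with $p$ performing no update at any configuration strictly between them (this is the meaningful case, and it implies the statement for any two updates of $p$). Let $W$ be the set of non‑faulty nodes that perform an update at some configuration strictly between $\C_a$ and $\C_b$; the goal is $|W|\ge n-4f$, and since $p\notin W$ by hypothesis this is the same as ``$n-4f$ other non‑faulty nodes''.

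The key step is: \emph{every non‑faulty $w\ne p$ with $time_w>_I time_p$ at $\C_b$, where $I$ is the comparable set witnessing $p$'s update at $\C_b$, lies in $W$}. To prove it, observe that at $\C_a$ node $p$ performs an update and hence sets $time_p[w]:=label_w$, the label $w$ holds at $\C_a$; and between $\C_a$ and $\C_b$ node $p$ performs only non‑update steps (\lineref{08} is excluded by hypothesis, \lineref{09} by \corollaryref{cor:noline9}), so $time_p$ and $order_p$ are frozen on that interval — in particular at $\C_b$ the entry $time_p[w]$ is still the label $w$ held at $\C_a$. Suppose, towards a contradiction, that $w\notin W$; then $label_w=time_w[w]$ and $order_w$ are also frozen between $\C_a$ and $\C_b$, because a non‑faulty node's own label changes only when it updates. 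Hence at $\C_b$ we have $time_w[w]=time_p[w]$ as labels, i.e. $time_w[w]=_w time_p[w]$, contradicting the requirement $time_w[w]>_w time_p[w]$ in the definition of $time_w>_I time_p$. Therefore $w\in W$.

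To finish: $p$'s update at $\C_b$ uses \lineref{08}, so there is a comparable $I\in\I$ with $p\in I$, $|I|\ge n-f$, and $I_\#(p)\ge n-3f$, hence at least $n-3f-1$ members of $I$ are strictly above $p$ at $\C_b$; discarding the at most $f$ faulty ones leaves at least $n-4f$ non‑faulty nodes (each $\ne p$) strictly above $p$ in $I$ at $\C_b$. By the key step each of them lies in $W$, so $|W|\ge n-4f$, which is the claim. I expect the key step to be the main obstacle: it is the place where the informal ``$p$ gets ahead of everyone'' picture must be turned into a precise statement about the $w$‑coordinate of $p$'s frozen time stamp, and it is exactly where the post‑stabilization facts (comparability of $Y$, and the absence of \lineref{09} updates so that $p$'s intervening steps really are no‑ops) are consumed; everything after that is routine $\pm f$ counting.
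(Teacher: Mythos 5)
Your argument is essentially the paper's, made precise: the paper observes that after its first update $p$ sits at the top of the comparable set ($Y_{r\,\#}(p)=1$) and then asserts that nodes which ``became larger'' must have updated; you justify exactly that step by noting that $p$'s first update copies every $w$'s current label into $time_p[w]$, $p$'s intervening steps (which by hypothesis and \corollaryref{cor:noline9} are non-updates) freeze that entry, and a non-faulty $w$ that also does not update keeps $time_w[w]$ frozen, so $time_w[w]=_w time_p[w]$ and $time_w>_I time_p$ is impossible. This fills the one implicit leap in the paper's proof while following the same structure ($I_\#(p)\ge n-3f$ at the second update, discard at most $f$ faulty nodes above $p$), so the approach is the same with the key step spelled out.
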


\begin{proof}
  Consider the comparable set $Y_r$ after $p$'s first update. By the way $p$ does an update, ${Y_r}_\#(p) = 1$. When $p$ performs its second update, it has some comparable set $I \in \I$, such that $I_\#(p) \geq n-3f$. Therefore, at least $n-4f$ non-faulty nodes have become larger than $p$. Thus, they all must have performed an update.
\end{proof}
} % hide

\begin{theorem}\label{thm:main22}
  Starting from round $n-2f+3$, between any non-faulty node's two consecutive ``act''s, there are $n-4f$ non-faulty nodes that perform ``act''. Moreover, every non-faulty node performs an ``act'' at least once every $\frac{n}{2f}$ rounds.
\end{theorem}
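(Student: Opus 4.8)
The plan is to derive \theoremref{thm:main22} as essentially a bookkeeping corollary of the lemmas already established for \enmasse, the one genuinely new ingredient being the observation that, once the system has stabilized, a non-faulty node performs an \emph{update} exactly when it performs an ``act''. Concretely, by \corollaryref{cor:noline9}, for every round $r > n-2f+2$ no non-faulty node passes the condition of \lineref{09}; hence any update a non-faulty node performs in such a round is the update executed in \lineref{08}, which by inspection of Algorithm~\enmasse is always accompanied by an ``act''. Conversely, the code performs ``act'' only in \lineref{08}, and always together with an update. So from round $n-2f+3$ onward, a non-faulty node performs an ``act'' in a given atomic step precisely when it performs an update in that step.

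For the first assertion, fix a non-faulty node $p$ and two consecutive ``act''s of $p$, both occurring in rounds $\ge n-2f+3$. By the equivalence just noted these are two consecutive updates of $p$, so \lemmaref{lemma:inbetween} applies and produces at least $n-4f$ distinct other nodes, each performing an update strictly in between; the proof of that lemma exhibits these witnesses as non-faulty nodes that became larger than $p$ in a comparable set of size $n-f$ (available once $r > n-2f+2$ by \corollaryref{cor:emptyUr} and \lemmaref{lemma:final}), so they are non-faulty. Since all of these intermediate updates also lie in rounds $\ge n-2f+3$, each of them is again an ``act''. Hence $n-4f$ non-faulty nodes perform an ``act'' between $p$'s two consecutive ``act''s, which is the first claim.

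The ``moreover'' part is immediate from \lemmaref{lemma:updateeveryfround}: from round $n-2f+3$ on, every non-faulty node performs an update at least once in every $\frac{n}{2f}$ consecutive rounds, and by the equivalence above each such update is an ``act''.

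The step requiring the most care --- and the reason the statement is phrased ``starting from round $n-2f+3$'' --- is the round bookkeeping in the stabilized regime: \lemmaref{lemma:inbetween} and \lemmaref{lemma:updateeveryfround} both rely on the set $Y_r$ of non-faulty nodes that have already updated being comparable and of size $n-f$, which (via \corollaryref{cor:emptyUr}, \lemmaref{lemma:YrComparable} and \lemmaref{lemma:stayscomparable}) is only guaranteed for $r > n-2f+2$. One must therefore check that the two ``act''s under consideration, together with all intervening updates, lie inside that regime before invoking those lemmas. A secondary subtlety is tracking which of the $\ge n-f$ nodes in a witnessing comparable set are Byzantine, so that the final count of $n-4f$ refers only to non-faulty nodes; that accounting is carried out inside the proof of \lemmaref{lemma:inbetween}.
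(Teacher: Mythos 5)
Your proposal is correct and takes essentially the same approach as the paper: you invoke \corollaryref{cor:noline9} to identify ``act''s with updates in the stabilized regime, then apply \lemmaref{lemma:inbetween} and \lemmaref{lemma:updateeveryfround}. Your version is simply a more careful expansion of the same short argument, spelling out the round bookkeeping and the update/``act'' equivalence in both directions.
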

\hideForShortVersion{
\begin{proof}
  By \corollaryref{cor:noline9}, non-faulty nodes perform update only if they also perform an ``act''. By \lemmaref{lemma:inbetween}, between a non-faulty node's two consecutive updates there are $n-4f$ non-faulty nodes that perform an update. By \lemmaref{lemma:updateeveryfround} every non-faulty node performs an update at least once every $\frac{n}{2f}$ rounds. Combining these two claims yields the required result.
\end{proof}
} %hideForShortVersion

\theoremref{thm:main22} states that using \enmasse one can ensure that nodes executing \clockAlg will have the following properties: 1) every non-faulty node $p$ executes an atomic step of \clockAlg once every $\frac{n}{2f}$ rounds; 2) if non-faulty $p$ executes 2 atomic steps of \clockAlg, then at least $n-4f$ non-faulty nodes execute atomic steps of \clockAlg in between. By setting $n > 12f$, these properties ensure that a fair run $\T$ is an en-masse run $\T'$ with respect to \clockAlg, s.t. each round of $\T'$ consists of at most $\frac{n}{2f}$ rounds of $\T$.

\section{Discussion}\label{sec:desc}
\subsection{Solving the 1-Clock-Synchronization Problem}\label{sec:to1clock}
First, the 5-clock-synchronization problem was solved using \clockAlg while
assuming en masse runs. Second, the assumption of en masse runs was removed in
\sectionref{sec:enmasse}. In this subsection we complete the paper's
result by showing how to transform a 5-clock-synchronization algorithm to a 1-clock-synchronization algorithm.

Given
any algorithm $\A$ that solves the $\ell$-clock-synchronization
problem, one can construct an algorithm
$\A'$ that solves the $1$-clock-synchronization problem. Denote by
$k_{\A'}$ the desired wraparound value of $\A'$, and
let $k_\A = k_{\A'} \cdot \ell$ be the wraparound value for $\A$.

The construction is simple: each time $\A'$ is executed,
it runs $\A$ and returns the clock
value of $\A$ divided by $\ell$ (that is,
$\lfloor{\frac{\F_\A}{\ell}}\rfloor$).
The intuition behind this
construction is straightforward: $\A$ solves the
$\ell$-clock-synchronization problem, thus, the values it returns
are at most $\ell$ apart. Therefore, the values that $\A'$
returns are at most 1 apart from each other.

\subsection{Future Work}
The current paper has a few drawbacks, each of which is
interesting to resolve.

First, is it possible to reduce the
atomicity requirements; that is, can an atomic step be defined as
a single read or a single write (and not as ``read all registers
and write all registers'')?

Second, can the current algorithm be transported into a message
passing model?

Third, can different coin-flipping algorithms
that operate in the asynchronous setting (\ie \cite{167105}) be
used to reduce the exponential convergence time to something more
reasonable? Perhaps even expected constant time?

Fourth, can the
ratio between \byzantine and non-\byzantine nodes be reduced? \Ie can $n>3f$ be achieved?

Fifth, can the problem of asynchronous \byzantine agreement be reduced to the problem of clock synchronization presented in the current work? (This will show that the expected exponential convergence time is as good as is currently known).

Lastly, the building block \enmasse is interesting by itself. It would be interesting
to find a polynomial solution to  \enmasse.
\hide{
\section{comments}
\begin{enumerate}
\item
  I found the first half of the submission intriguing
(starting around page 7, the quality of the writing
deteriorates noticeably) and was looking forward to seeing
how the claims made would be substantiated. All the bigger
was my deception upon discovering that the submission contains
no proofs and hardly even sketches of proofs. Giving an internet
link is not an acceptable form of submission.

\item
- When the authors say they define "the clock function problem", which is a generalization of the

clock synchronization problem, what exactly do they mean? Is "the clock function problem" just

Definition 3.2 (which defines "clock function" in a rather trivial way), or all the requirements in

Section 3 (leading up to the "l-clock-synchronization problem" in Definition
3.7)? *What* is it that generalizes previous work?

\item
- It is not clear to me why the "clock function" is a meaningful generalization of existing

definitions. Does it capture some natural requirement or situation that existing definitions do not?

It allows more freedom, but is this useful for something?

\item
- I am not sure I see why the authors chose to place their work in a shared-memory context

rather than message passing, considering that nodes use registers to "send each other

messages". Is it the "centralized daemon" requirement? Could authors instead assume a

message-passing system where messages are delivered without delay (although the

computation remains
asynchronous)?

\item
- The transformation from L-well-synchronized to 1-well-synchronized in Section 7.1 is extremely

simple (divide by L). On the other hand, the definition of L-well-synchronized is complicated; the

authors require two pages just to define the problem, and then say that what we really want is 1

-well-synchronized (since we can't get 0-well-synchronized). It seems to me that it might be

easier to just define "well-synchronized runs" as 1-well-synchronized runs, and fold the

transformation ("use a larger wrap-around value and divide clock values by 5") into the algorithm.
Again, I am not sure if authors are claiming that "L-clock-synchronization" (and not just 1-clock

synchronization) is an interesting generalization in its own right, and if so, why.

\item
7. The paper does not discuss what properties the centralized demon
has to satisfy for the ASYNC-CLOCK and en masse scheduler to behave
correctly. Given that the clock synchronization algorithm is
randomized, is it sufficient for the centralized demon to be
randomized? Is it sufficient for the centralized demon to ensure
mutual exclusion only eventually or infinitely often?

Similar questions apply to the assertional properties to be satsified
by the enmass scheduler as well.
\end{enumerate}
} %hide

%\vspace{0.5em}\noindent{\bf Acknowledgements: }
\section{Acknowledgements}
{Michael Ben-Or is the incumbent of the Jean and Helena Alfassa Chair in Computer Science, and he was supported in part by the Israeli Science Foundation (ISF) research grant. Danny Dolev is Incumbent of the Berthold Badler Chair in Computer Science. Danny Dolev was supported in part by the Israeli Science Foundation (ISF) Grant number 0397373. }

\bibliographystyle{plain}
\bibliography{bibliography}
\end{document}